\documentclass[journal]{IEEEtrans}
\usepackage{amsmath,amssymb,amsfonts,bm}
\usepackage{algorithm, algorithmic}
\usepackage{breqn}

\usepackage{array}
\usepackage{textcomp}
\usepackage{stfloats}
\usepackage{url}
\usepackage{accents}
\usepackage{color}
\usepackage{verbatim}
\usepackage{graphicx}
\usepackage{cite}
\usepackage{enumerate}
\usepackage{ntheorem}

\usepackage{booktabs}
\usepackage{siunitx}

\usepackage{graphicx}
\usepackage{subcaption}

\theoremstyle{plain}
\theoremseparator{.}  

\makeatletter
\def\widebar{\accentset{{\cc@style\underline{\mskip8mu}}}}
\makeatother
\newtheorem{theorem}{Theorem}
\newtheorem{lemma}{Lemma}
\newtheorem{proposition}{Proposition}
\newtheorem{corollary}{Corollary}
\newtheorem{remark}{Remark}

\newenvironment{proof}{{\it Proof:}}{\hfill $\blacksquare$\par}

\newcolumntype{P}[1]{>{\centering\arraybackslash}p{#1}} 
\newcommand{\tr}{\operatorname{tr}}

\begin{document}

\title{Optimal Low-Dimensional Structures of ISAC Beamforming: Theory and Efficient Algorithms} 

\author{Xiaotong Zhao, Mian Li, Ya-Feng Liu, Qingjiang Shi, and Anthony Man-Cho So
        
\thanks{Xiaotong Zhao and Anthony Man-Cho So are with the Department of Systems Engineering and Engineering Management, The Chinese University of Hong Kong,
 Hong Kong, SAR, China (e-mail: xiaotongzhao@cuhk.edu.hk; manchoso@se.cuhk.edu.hk).}
\thanks{Mian Li is with the School of Science and Engineering, The Chinese University of Hong Kong, Shenzhen, and Shenzhen Research Institute of Big Data, Shenzhen 518172, China (e-mail: mianli1@link.cuhk.edu.cn).}
\thanks{ Ya-Feng Liu with the Ministry of Education Key Laboratory of Mathematics and Information Networks, School of Mathematical Sciences, Beijing University of Posts and Telecommunications, Beijing 102206, China (email: yafengliu@bupt.edu.cn)}
\thanks{Qingjiang Shi is with the School of Computer Science and Technology, Tongji University, Shanghai 201804, China, and also with the Shenzhen Research Institute of Big Data, Shenzhen 518172, China (e-mail: shiqj@tongji.edu.cn).}
}


\maketitle

\begin{abstract}
 Transmit beamforming design is a fundamental problem in integrated sensing and communication (ISAC) systems. Numerous methods have been proposed to jointly optimize key performance metrics such as the signal-to-interference-plus-noise ratio and Cramér-Rao bound. However, the computational complexity of these methods often grows rapidly with the number of transmit antennas at the base station (BS). To tackle this challenge, we prove a fundamental structural property of the ISAC beamforming problem, i.e., there exists an optimal solution exhibiting a low-dimensional structure. This leads to an equivalent reformulation of the problem with dimension related to the number of users rather than the number of BS antennas, thereby enabling the development of low-complexity algorithms. When applying the interior-point method to the reformulated problem, we achieve up to six orders of magnitude in complexity reduction when the number of antennas exceeds the number of users by an order of magnitude. To further reduce the complexity, we develop a balanced augmented Lagrangian method to solve the reformulated problem. The proposed algorithm maintains optimality while achieving a computational complexity that scales quartically with the number of users. Our simulation results demonstrate that the proposed R-BAL method can achieve a speedup of more than $10000\times$ over the conventional IPM in massive MIMO scenarios.

\end{abstract}
\begin{IEEEkeywords}
Integrated sensing and communication, low-dimensional structure, low complexity, Cram{\'e}r-Rao bound, balanced augmented Lagrangian method.
\end{IEEEkeywords}

\section{Introduction}
\subsection{Background}
 \IEEEPARstart{T}{he} sixth-generation (6G) wireless networks are envisioned to enable transformative applications, such as autonomous driving, smart manufacturing, extended reality, and digital twin, where high-precision sensing and ultra-reliable communications must coexist \cite{chafii2023twelve,dong2025communication}. To meet this dual requirement, integrated sensing and communication (ISAC) has emerged as a pivotal technology, gaining widespread recognition in both academia and industry \cite{liu2020joint,liu2023seventy}. Recently, ISAC has been recognized as one of the six critical usage scenarios in the recommendation for IMT-2030 \cite{wp5d2023draft}.

Unlike traditional systems that operate communication and radar detection independently, ISAC integrates both functionalities into a unified framework, enabling shared spectrum utilization, hardware platforms, and joint signal processing \cite{zhang2021overview}. This integration requires advanced beamforming techniques capable of simultaneously optimizing sensing and communication performance \cite{liu2020joint,liu2024survey}. Existing methodologies can be broadly categorized into three groups: Radar-centric \cite{mealey2007method}, communication-centric \cite{david2015cellular}, and joint design approaches \cite{hassanien2015dual}. Among these, joint beamforming design provides greater flexibility in beamforming and waveform optimization, thus allowing for effective tradeoffs between communication and sensing performance. This advantage has stimulated growing interest in joint transmit beamforming design \cite{liu2022integrated,liu2020joint}.

In joint transmit beamforming design for ISAC systems, various key performance indicators can be used to evaluate system capabilities. On the communication side, performance is typically quantified using metrics such as the achievable sum rate \cite{wu2024efficient} and signal-to-interference-plus-noise ratio (SINR) \cite{liu2020joint}. For sensing, various criteria have been adopted, including beampattern matching \cite{liu2018mu,nguyen2023multiuser}, signal-to-clutter-plus-noise ratio (SCNR) \cite{cui2013mimo}, and the Cramér-Rao bound (CRB) for target parameter estimation \cite{liu2021cramer}. The CRB is particularly noteworthy among these, as it provides a theoretical lower bound on the estimation error, making it a fundamental benchmark for sensing accuracy \cite{xiong2023fundamental}.

\subsection{Related Work}
 Several approaches have been developed for CRB-based ISAC beamforming optimization. The seminal work \cite{liu2021cramer} considered CRB minimization under SINR and transmit power constraints and showed that certain semidefinite relaxation (SDR; see~\cite{luo2010semidefinite} for an overview), whose solution can be obtained using interior-point methods (IPMs), remains tight for this problem. Building on this, the work \cite{wu2024efficient} introduced a weighted sum optimization of communication rate and CRB, achieving global optimality through a combined SDR and branch-and-bound (BB) approach. Although offering superior performance, the BB method has a complexity that grows exponentially with the number of users. Recently, the work \cite{zou2024energy} developed an energy-efficient beamforming design using successive convex approximation (SCA) with subproblems that are solvable by IPMs. In \cite{zhu2023information}, an integrated weighted minimum mean square error (WMMSE) and SDR method was proposed. However, these approaches still rely on computationally intensive interior-point or BB methods and thus are challenging to implement. Although the work \cite{wu2025new} made advances with a first-order algorithm that reduces complexity to the cubic order in the number of base station (BS) antennas, it remains impractical for large-scale systems.

A promising approach to reducing computational complexity lies in leveraging the inherent low-dimensional structures of beamforming optimization problems \cite{bjornson2014optimal,zhao2022rethinking,zhao2025universal,fang2025optimal}. Prior research has demonstrated substantial benefits of such an approach. Indeed, it (1) reveals fundamental properties of optimal beamforming directions \cite{bjornson2014optimal}; (2) provides rigorous performance guarantees in extreme operating regimes \cite{bjornson2014optimal}; (3) enables the development of efficient algorithms, including low-complexity beamforming designs \cite{zhao2022rethinking} and distributed coordination schemes with minimal interaction overhead \cite{zhao2023communication}. Furthermore, the structural insights obtained from the said approach have proven valuable for developing model-driven deep learning approaches, particularly within deep unfolding architectures \cite{xia2019deep}.

\subsection{Motivation and Contributions}
The recent work \cite{zhao2025universal} showed that diverse beamforming scenarios exhibit a universal low-dimensional subspace structure. These scenarios encompass secure communications, multi-cell coordination, and power minimization. Notably, this framework is also effective in certain ISAC beamforming configurations, such as those studied in \cite{liu2018mu} and \cite{fang2025optimal}. The core theoretical insight reveals that the optimal beamforming matrix typically resides in the row space of the matrix induced by the channel.
However, important exceptions exist, particularly in extended target scenarios \cite{liu2021cramer}. This naturally raises a fundamental question: Does CRB-based joint beamforming for extended targets admit an alternative low-dimensional structure?  Our work not only answers this question affirmatively but also develops a generalizable framework for uncovering latent structures in other challenging beamforming problems.

The main contribution of this paper is twofold.
\begin{itemize}
\item[1)] {\textbf{Discovery of an optimal low-dimensional structure:}}
We rigorously derive an optimal low-dimensional structure for CRB minimization in ISAC systems. This theoretical breakthrough enables us to reformulate the original high-dimensional optimization problem into an equivalent one with drastically reduced dimensionality. Specifically, the complexity of standard IPMs drops from $\mathcal{O}(N_t^{6.5}K^{3.5})$ for the original problem to $\mathcal{O}(K^{10})$ for the reduced formulation, where $N_t$ and $K$ represent the number of BS antennas and users, respectively. Our proposed approach is particularly advantageous in massive MIMO scenarios where $N_t \gg K$, offering orders-of-magnitude complexity reduction.
 \item[2)] {\textbf{Low-complexity algorithm design:}}
 We develop a novel balanced augmented Lagrangian (BAL) method that exploits the newly discovered low-dimensional structure to achieve further complexity reduction. Specifically, we leverage the block and rank-one structures of the reduced problem and derive closed-form solutions for the subproblems. The resulting BAL method has a time complexity of $\mathcal{O}(K^4)$, which is a remarkable improvement over existing methods while preserving optimality guarantees.
\end{itemize} 

\emph{Organization:} The rest of the paper is organized as follows. Section II introduces the system model and problem formulation. Section III derives the optimal low-dimensional beamforming structure. Section IV proposes a low-complexity BAL-based algorithm. Section V presents comprehensive numerical results. Finally, Section VI concludes the paper.

\emph{Notation:} Throughout this paper, scalars are denoted by both lower and upper case letters, while vectors and matrices are denoted by boldface lower case and upper case letters, respectively. The space of $M\times N$ complex matrices is denoted by $\mathbb{C}^{M\times N}$. The inverse and trace of a square matrix $\mathbf{A}$ are denoted by $\mathbf{A}^{-1}$ and $\tr(\mathbf{A})$, respectively. The transpose, conjugate transpose, pseudo-inverse, range space, and null space of an arbitrary $\mathbf{A}$ are denoted by $\mathbf{A}^{T}$, $\mathbf{A}^{H}$,  $\mathbf{A}^{\dagger}$, $\mathcal{R}(\mathbf{A})$, and $\mathcal{N}(\mathbf{A})$, respectively.
The Euclidean norm of a vector $\mathbf{a}$ is defined as $\|\mathbf{a}\|=\sqrt{\mathbf{a}^{H}\mathbf{a}}$. The Frobenius norm of a matrix $\mathbf{A}$ is defined as $\|\mathbf{A}\|_{F}=\sqrt{\tr\left(\mathbf{A}^{H}\mathbf{A}\right)}$. 
The Hadamard (element-wise) product of two matrices $\mathbf{A}$ and $\mathbf{B}$ of the same dimensions is denoted by $\mathbf{A} \odot \mathbf{B}$.
Let $\mathbf{0}$, $\mathbf{1}$, and $\mathbf{I}$ denote the all-zero, all-one, and identity matrices of appropriate sizes, respectively. The set $\{1,2,\ldots,K\}$ is abbreviated as $[K]$. For a closed convex set $\mathcal{S}$, its indicator function, denoted by $\mathbb{I}_{\mathcal{S}}(\cdot)$, is defined as $\mathbb{I}_{\mathcal{S}}({\mathbf x}) = 0$ if ${\mathbf x} \in \mathcal{S}$ and $+\infty$ otherwise. Given a vector ${\mathbf a}$, the diagonal matrix formed by putting the elements of ${\mathbf a}$ on its main diagonal is denoted by $\mbox{Diag}({\mathbf a})$. Given a matrix ${\mathbf A}$, the column vector formed by the diagonal elements of ${\mathbf A}$ is denoted by $\mbox{diag}({\mathbf A})$.

\section{System Model and Problem Formulation}
\subsection{System Model}
We consider a downlink massive MIMO ISAC system as in \cite{liu2021cramer}, where a BS equipped with $N_t$ transmit antennas and $N_r$ receive antennas simultaneously serves $K$ single-antenna users while detecting an extended target. Here, the extended target is generally modeled as a surface comprising numerous distributed point-like scatterers, as exemplified by a vehicle or a pedestrian in motion on a roadway.  To avoid information loss in target sensing, the system typically requires $N_t <N_r$.

 Let $\mathbf{\Phi}\in\mathbb{C}^{N_t\times L}$ denote the transmitted baseband signal matrix, where $L>
 N_t$ is the length of the radar pulse (or communication frame).
 The signal matrix $\mathbf{\Phi}$ is formed by combining linearly precoded radar waveforms and communication symbols via
\begin{equation}\notag
    \mathbf{\Phi}=\sum_{k=1}^K\mathbf{w}_k\mathbf{s}_k^H+\mathbf{W}_A\mathbf{S}_A^H,
\end{equation}
where $\mathbf{s}_k\in\mathbb{C}^{L\times 1}$ is the data symbol for the $k$-th communication user and $\mathbf{S}_A\in\mathbb{C}^{L\times N_t}$ is the sensing signal matrix, which are precoded by the communication beamforming vector $\mathbf{w}_k\in\mathbb{C}^{N_t\times 1}$ and the auxiliary beamforming matrix $\mathbf{W}_A\in\mathbb{C}^{N_t\times N_t}$, respectively. We adopt the asymptotic orthogonality assumption from \cite{liu2021cramer}, which states that for sufficiently large $L$, the rows of the data stream matrix $\tilde{\mathbf{S}}= [\mathbf{s}_1,\mathbf{s}_2,\ldots,\mathbf{s}_K,\mathbf{S}_A]^H$ are approximately orthogonal, i.e., $\frac{1}{L}\tilde{\mathbf{S}}\tilde{\mathbf{S}}^H\approx \mathbf{I}_{K+N_t}$.
\begin{remark}
    In massive MIMO systems, the number of BS transmit antennas typically far exceeds the number of users \cite{bjornson2019massive}, i.e., $N_t\gg K$. Therefore, for BSs equipped with extremely large antenna arrays, an efficient beamforming algorithm must achieve computational complexity that scales linearly with $N_t$, or, preferably, remains independent of $N_t$ at each iteration.
\end{remark}

\subsection{Problem Formulation}
\subsubsection{Communication metric} The received signal $\mathbf{y}_k $ at the $k$-th user is given by
\begin{equation}\notag
    \mathbf{y}_k = \mathbf{h}_k^H\mathbf{\Phi}+\mathbf{n}_C,
\end{equation}
where $\mathbf{h}_k\in\mathbb{C}^{N_t\times 1}$ denotes the communication channel vector between the BS and the $k$-th user, which is assumed to be known at the BS; $\mathbf{n}_C$ represents an additive white Gaussian noise (AWGN) with each entry having mean zero and variance $\sigma_C^2$. The SINR for the $k$-th user can be expressed as
\begin{equation}\notag
    \tilde{\gamma}_{k} = \frac{|\mathbf{h}_{k}^{H}\mathbf{w}_{k}|^{2}}{\sum_{i=1,i\neq k}^{K}|\mathbf{h}_{k}^{H}\mathbf{w}_{i}|^{2} + \| \mathbf{h}_{k}^{H}\mathbf{W}_{A} \|^{2} + \sigma_{C}^{2}}.
\end{equation}
We use SINR as the communication metric, which is constrained to exceed a predefined threshold $\Gamma_k$.

\subsubsection{ Sensing metric}
 For sensing, the BS transmits $\mathbf{\Phi}$ to sense the target. The reflected echo signal received at the BS is given by
 \begin{equation}\notag
     \mathbf{Y}_R=\mathbf{G}\mathbf{\Phi} +\mathbf{N}_R,
 \end{equation}
where $ \mathbf{N}_R \in \mathbb{C}^{N_r \times L} $ is an AWGN matrix with each entry having mean zero and variance $\sigma_R^2 $, and $ \mathbf{G} \in \mathbb{C}^{N_r \times N_t} $ denotes the target response matrix. In this paper, we adopt an extended target model following \cite{liu2021cramer}, where $\mathbf{G}$ takes the form
\begin{equation}\notag
    \mathbf{G} = \sum_{m=1}^{N_s} \alpha_m \mathbf{b}(\theta_m) \mathbf{a}^H (\theta_m).
\end{equation}
Here, $N_s$ is the number of scatterers; $ \alpha_m $ and $ \theta_m $ represent the reflection coefficient and azimuth angle of the $m $-th target, respectively; $ \mathbf{a}(\theta_m) $ and $ \mathbf{b}(\theta_m) $ are steering vectors of the transmit and receive antennas, respectively.

In practice, the number of scatterers $ N_s $ is often unknown a priori. Therefore, we focus on estimating the complete matrix $ \mathbf{G} $. The CRB for this estimation is given by \cite{liu2021cramer}
\begin{equation}\notag
    \operatorname{CRB}(\mathbf{G}) = \frac{\sigma_s^2 N_r}{L} \operatorname{Tr}(\mathbf{R}_{\Phi}^{-1}),
\end{equation}
where 
\begin{equation}\notag
    \mathbf{R}_{\Phi} = \frac{1}{L} \mathbf{\Phi} \mathbf{\Phi}^H = \sum_{k =1}^K \mathbf{w}_k \mathbf{w}_k^H + \mathbf{W}_A \mathbf{W}_A^H
\end{equation}
is the sample covariance matrix of $ \mathbf{\Phi}$ due to the orthogonal data stream assumption.

Based on the above discussion, the beamforming optimization problem in the extended target scenario can be expressed as \cite{liu2021cramer}
\begin{equation}\label{eq_extended_mu_isac_problem}
\begin{aligned}
 \min _{\mathbf{W}_{DF}} &\operatorname{tr}\left(\left(\mathbf{W}_{DF}\mathbf{W}_{DF}^H\right)^{-1}\right) \\
\text {s.t. } &\operatorname{tr}\left(\mathbf{Q}_{k}\mathbf{w}_{k}\mathbf{w}_{k}^H\right)-\sum_{j\neq k}\Gamma_k\operatorname{tr}\left(\mathbf{Q}_{k}\mathbf{w}_{j}\mathbf{w}_{j}^H\right)\\
&~~~~~~~-\Gamma_k\operatorname{tr}\left(\mathbf{Q}_{k}\mathbf{W}_{A}\mathbf{W}_{A}^H\right)\geq \Gamma_k \sigma^2_C,\forall k,\\
& \left\|\mathbf{W}_{D F}\right\|_F^2 \leq P_T,
\end{aligned}
\end{equation}
where $\mathbf{W}_{DF}=\left[\mathbf{w}_{1},\mathbf{w}_{2},\ldots,\mathbf{w}_{K},\mathbf{W}_{A}\right]$ and $\mathbf{Q}_{k}=\mathbf{h}_{k}\mathbf{h}_{k}^H, \forall k $. 
Using the results in \cite{liu2021cramer}, one can show that the SDR of problem \eqref{eq_extended_mu_isac_problem} is tight. This establishes the equivalence between the original problem and its SDR
\begin{equation}\label{eq_extended_mu_isac_problem_sdr}
\begin{aligned}
 \min_{\{\mathbf{W}_{k}\}_{k=1}^{K+1}} &\operatorname{tr}\left(\left(\sum_{k=1}^{K+1}\mathbf{W}_k\right)^{-1}\right) \\
\text {s.t. } ~~~&\rho_k\operatorname{tr}\left(\mathbf{Q}_{k}\mathbf{W}_{k}\right)-\sum_{j=1}^{K+1}\operatorname{tr}\left(\mathbf{Q}_{k}\mathbf{W}_{j}\right)\geq\sigma^2_C,\forall k,\\
& \sum_{k=1}^{K+1}\operatorname{tr}\left(\mathbf{W}_{k}\right) \leq P_T,\mathbf{W}_{k}\succeq\mathbf{0},\forall k\in [K+1],
\end{aligned}
\end{equation}
where $\rho_k= 1+\Gamma_k^{-1}$. Unless otherwise specified, we use ``$\forall k$'' as a shorthand for ``$\forall k\in [K]$''.

Note that problem \eqref{eq_extended_mu_isac_problem_sdr} is convex, and the work \cite{liu2021cramer} proposed to solve it using an IPM from standard toolboxes such as CVX. However, such an approach has a computational complexity of $\mathcal{O}(N_t^{6.5}K^{3.5})$, which is prohibitive in systems with large antenna arrays. To address this practical limitation, our work seeks to exploit the inherent low-dimensional beamforming structure of the optimal solution to problem \eqref{eq_extended_mu_isac_problem_sdr}, thereby enabling highly efficient implementations without compromising optimality.

\begin{remark}
    In \cite{liu2021cramer}, joint beamforming designs for CRB minimization were proposed for both point and extended target scenarios, with algorithms operating in a space with the large dimension $N_t$. For the point target case, we make a novel theoretical observation: The design problem actually admits a low-dimensional subspace structure when viewed through the framework of \cite{zhao2025universal}. This insight allows the beamforming design dimension to be reduced from $N_t\times K$ to $(K+2) \times K$. However, the extended target case involves a distinct CRB formulation that falls outside the general framework of \cite{zhao2025universal}. Consequently, whether the corresponding design problem admits a low-dimensional structure  remains an open question. Addressing this question is one of the primary objectives of this paper.
\end{remark}

\section{Optimal Low-Dimensional Beamforming Structure}
In this section, we first examine the feasibility of problem \eqref{eq_extended_mu_isac_problem_sdr}. Then, we characterize the low-dimensional structure possessed by problem \eqref{eq_extended_mu_isac_problem_sdr} and show how to exploit this structure to efficiently compute a rank-one optimal solution to problem \eqref{eq_extended_mu_isac_problem_sdr} via IPM.

\subsection{Feasibility Analysis}
Problem \eqref{eq_extended_mu_isac_problem_sdr} is not always feasible. Intuitively, when the transmit power budget $P_T$ is too small, it is not sufficient to satisfy the SINR constraints. We formally characterize this infeasibility condition through the following proposition.
\begin{proposition}\label{pro_infea_condition}
    Problem \eqref{eq_extended_mu_isac_problem_sdr} (or equivalently, problem \eqref{eq_extended_mu_isac_problem}) is infeasible if and only if $P_T< P_{\operatorname{low}}$, where $P_{\operatorname{low}}$ is given by
\begin{equation}\label{eq_p_low}
    P_{\operatorname{low}}=\sum_{k=1}^K\lambda_k
\end{equation}
with each $\lambda_k$ computable via the fixed-point system 
\begin{equation}\label{eq_lambda_fixedpoint}
    \lambda_k = \frac{\sigma_C^2}{\left(1+\frac{1}{\gamma_k}\right)\widebar{\mathbf{h}}_k^H    \left(\mathbf{H}^H\mathbf{H}+\sum_{i=1}^K \frac{\lambda_i}{\sigma_C^2} \widebar{\mathbf{h}}_i \widebar{\mathbf{h}}_i^H\right)^{-1} \widebar{\mathbf{h}}_k}, ~\forall k.
\end{equation}
Here, $\mathbf{H}=[\mathbf{h}_{1},\mathbf{h}_{2},\ldots,\mathbf{h}_{K}]$ and $\widebar{\mathbf{h}}_i= \mathbf{H}^H\mathbf{h}_i\in\mathbb{C}^{K\times 1},\forall i$.
\end{proposition}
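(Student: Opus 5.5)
The plan is to reduce the feasibility question to a classical SINR-constrained power minimization problem and then invoke its known dual (fixed-point) characterization, rewritten in the reduced $K$-dimensional coordinates.

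\textbf{Step 1 (reduction to power minimization).} Feasibility of \eqref{eq_extended_mu_isac_problem_sdr} is understood as nonemptiness of its constraint set. The matrix $\mathbf{W}_{K+1}\succeq\mathbf{0}$ does two things only: it increases the power term, and it enters every SINR constraint with a negative sign through $-\operatorname{tr}(\mathbf{Q}_k\mathbf{W}_{K+1})$. Hence the constraint set is nonempty if and only if there are $\{\mathbf{W}_k\}_{k=1}^K\succeq\mathbf{0}$ with $\mathbf{W}_{K+1}=\mathbf{0}$ satisfying the SINR constraints and $\sum_k\operatorname{tr}(\mathbf{W}_k)\le P_T$. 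Equivalently, feasibility holds iff $P_T\ge P^\star$, where
\[
P^\star=\min\Big\{\textstyle\sum_{k=1}^K\operatorname{tr}(\mathbf{W}_k):\ \rho_k\operatorname{tr}(\mathbf{Q}_k\mathbf{W}_k)-\sum_{j=1}^K\operatorname{tr}(\mathbf{Q}_k\mathbf{W}_j)\ge\sigma_C^2,\ \mathbf{W}_k\succeq\mathbf{0}\Big\}.
\]
This is the SDR of classical downlink power minimization, and that relaxation is known to be tight. The remaining task is to show $P^\star=P_{\operatorname{low}}$.

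\textbf{Step 2 (Lagrangian duality).} I will form the dual of the power minimization problem with multipliers $\lambda_k/\sigma_C^2\ge 0$ attached to the SINR constraints. Slater's condition holds, assuming $\mathbf{H}$ has full column rank, which I take as standing since $N_t\gg K$. The dual then reads
\[
\max_{\boldsymbol\lambda\ge 0}\ \sum_k\lambda_k \quad \text{s.t.}\quad \mathbf{I}+\sum_i\frac{\lambda_i}{\sigma_C^2}\mathbf{h}_i\mathbf{h}_i^H\succeq \rho_k\frac{\lambda_k}{\sigma_C^2}\mathbf{h}_k\mathbf{h}_k^H,\ \forall k.
\]
By the standard argument of Bengtsson--Ottersten / Wiesel et al., the constraints are tight at the optimum, which gives the fixed point
\[
\lambda_k=\frac{\sigma_C^2}{\rho_k\,\mathbf{h}_k^H\big(\mathbf{I}+\sum_i\frac{\lambda_i}{\sigma_C^2}\mathbf{h}_i\mathbf{h}_i^H\big)^{-1}\mathbf{h}_k}.
\]
Strong duality then yields $P^\star=\sum_k\lambda_k$. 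Existence and uniqueness of the fixed point, and convergence of the iteration, follow because the right-hand side is a standard interference function in Yates' sense: it is positive, monotone, and scalable. Feasibility of the SINR targets is ensured by the full column rank of $\mathbf{H}$. I expect this step to be the main obstacle, since strong duality, tightness of the dual constraints, and uniqueness of the fixed point must all be verified carefully; I will mostly import these facts from the cited literature.

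\textbf{Step 3 (reduction to $K$ dimensions).} Let $\mathbf{G}=\mathbf{H}^H\mathbf{H}$ and $\mathbf{D}=\mbox{Diag}(\boldsymbol\lambda)/\sigma_C^2$, so that $\mathbf{h}_k=\mathbf{H}\mathbf{e}_k$ and $\widebar{\mathbf{h}}_k=\mathbf{G}\mathbf{e}_k$. Applying the Woodbury identity to $(\mathbf{I}+\mathbf{H}\mathbf{D}\mathbf{H}^H)^{-1}$ gives
\[
\mathbf{h}_k^H(\mathbf{I}+\mathbf{H}\mathbf{D}\mathbf{H}^H)^{-1}\mathbf{h}_k=\mathbf{e}_k^T\big(\mathbf{G}-\mathbf{G}(\mathbf{D}^{-1}+\mathbf{G})^{-1}\mathbf{G}\big)\mathbf{e}_k=\mathbf{e}_k^T(\mathbf{I}+\mathbf{G}\mathbf{D})^{-1}\mathbf{G}\mathbf{e}_k.
\]
If some $\lambda_i=0$, the same identity follows by continuity or by using the push-through form directly. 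On the other side, since $\mathbf{G}$ is invertible,
\[
\widebar{\mathbf{h}}_k^H\big(\mathbf{G}+\mathbf{G}\mathbf{D}\mathbf{G}\big)^{-1}\widebar{\mathbf{h}}_k=\mathbf{e}_k^T\mathbf{G}(\mathbf{I}+\mathbf{D}\mathbf{G})^{-1}\mathbf{e}_k.
\]
The push-through identity $\mathbf{G}(\mathbf{I}+\mathbf{D}\mathbf{G})^{-1}=(\mathbf{I}+\mathbf{G}\mathbf{D})^{-1}\mathbf{G}$ shows the two expressions coincide. Finally, $\mathbf{G}\mathbf{D}\mathbf{G}=\sum_i\frac{\lambda_i}{\sigma_C^2}\widebar{\mathbf{h}}_i\widebar{\mathbf{h}}_i^H$ and $\rho_k=1+1/\gamma_k$ (with $\gamma_k=\Gamma_k$). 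Substituting these turns the fixed point of Step 2 into exactly \eqref{eq_lambda_fixedpoint}. Hence $P^\star=P_{\operatorname{low}}$, and combined with Step 1 this proves infeasibility iff $P_T<P_{\operatorname{low}}$.
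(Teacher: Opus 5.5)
Your proof is correct, but its key step runs in the opposite order from the paper's. After the same reduction to classical power minimization (dropping $\mathbf{W}_{K+1}$), the paper first invokes the subspace structure of \cite[Theorem 5]{zhao2025universal} to restrict the primal to $\mathbf{w}_k=\mathbf{H}\mathbf{x}_k$. This gives the $K$-dimensional problem \eqref{eq_power_mini_reduced}, and only then does the paper dualize, relying on \cite{wiesel2005linear} for strong duality of that nonconvex QCQP and on \cite{bjornson2014optimal} for the fixed point. The matrix $\mathbf{H}^H\mathbf{H}+\sum_i\frac{\lambda_i}{\sigma_C^2}\widebar{\mathbf{h}}_i\widebar{\mathbf{h}}_i^H$ thus comes straight out of the reduced Lagrangian.

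You instead dualize the convex SDR in the full $N_t$-dimensional space, where Slater's condition gives strong duality and dual attainment directly. You obtain the classical fixed point with $\mathbf{I}+\sum_i\frac{\lambda_i}{\sigma_C^2}\mathbf{h}_i\mathbf{h}_i^H$. You then use the push-through identity $\mathbf{H}^H(\mathbf{I}+\mathbf{H}\mathbf{D}\mathbf{H}^H)^{-1}\mathbf{H}=(\mathbf{I}+\mathbf{G}\mathbf{D})^{-1}\mathbf{G}=\mathbf{G}(\mathbf{I}+\mathbf{D}\mathbf{G})^{-1}$ to show that the two fixed-point maps coincide for every $\boldsymbol{\lambda}\geq\mathbf{0}$. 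Your algebra checks out, and this form needs no invertibility of $\mathbf{D}$.

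Your route is more self-contained: it needs no external subspace theorem and no hidden-convexity argument. Because the maps agree pointwise, it also transfers existence, uniqueness and convergence of the iteration (via Yates' standard-interference-function framework) to the paper's $K\times K$ system, which the paper leaves implicit. The paper's route, in turn, illustrates the dimension-reduction methodology that is the theme of the article, and it produces the $K$-dimensional formula without any matrix identity.

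Both arguments tacitly need $\mathbf{H}$ to have full column rank, since otherwise the matrix inverted in \eqref{eq_lambda_fixedpoint} is singular. Both also read ``feasible'' as nonemptiness of the constraint set. That reading is what makes $P_T=P_{\operatorname{low}}$ count as feasible, even though for $N_t>K$ the CRB objective equals $+\infty$ there.
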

\begin{proof}
    See Appendix \ref{app_pro_infea_condition}.
\end{proof}

The fixed-point system in \eqref{eq_lambda_fixedpoint} requires
the inversion of $K\times K$ matrices. Each inversion has a time complexity of $\mathcal{O}(K^3)$, which represents a significant reduction from the conventional $\mathcal{O}(N_t^3)$ complexity required by standard methods \cite{schubert2004solution}. Such efficiency is made possible through an exploitation of the low-dimensional subspace structure identified in \cite{zhao2025universal}; see Appendix \ref{app_pro_infea_condition} for details.

In the rest of this paper, we assume that problem \eqref{eq_extended_mu_isac_problem_sdr} is feasible.

\subsection{Optimal Solution Structure Analysis}
Through a careful analysis of the Karush-Kuhn-Tucker (KKT) conditions associated with problem \eqref{eq_extended_mu_isac_problem_sdr}, we establish the following fundamental structural properties of its optimal solutions.
\begin{theorem}\label{theorem_low_dimen}
   There exists an optimal solution $\{\mathbf{W}_{k}^{\star}\}_{k=1}^{K+1}$ to  problem \eqref{eq_extended_mu_isac_problem_sdr} that  satisfies the following properties:

   \begin{enumerate}[(a)]
  \item $\mathcal{R}(\mathbf{W}_{k}^{\star})\subseteq \mathcal{R}(\mathbf{H}),\forall k$;
  \item $\mathcal{R}(\mathbf{W}_{K+1}^{\star})\subseteq \mathcal{N}(\mathbf{H}^H)$;~and
  \item $\mathbf{W}_{K+1}^{\star}=\theta\mathbf{U}_C\mathbf{U}_C^H$, where the columns of $\mathbf{U}_C\in\mathbb{C}^{N_t\times (N_t-K)}$ form an orthonormal basis of  $\mathcal{N}(\mathbf{H}^H)$ and $\theta = \frac{P_T-\sum_{k=1}^{K}\operatorname{tr}\left(\mathbf{W}_{k}^{\star}\right)}{N_t-K}$.
\end{enumerate}
\end{theorem}
\begin{proof}
    See Appendix \ref{app_theorem_low_dimen}.
\end{proof}

The structural properties established in Theorem \ref{theorem_low_dimen} offer a way to substantially simplify problem \eqref{eq_extended_mu_isac_problem_sdr}. First, property (c) reveals that ${\mathbf W}_{K+1}^{\star}$ can be directly computed in closed form once $\{\mathbf{W}_{k}^{\star}\}_{k=1}^K$ is determined. Second, property (b) implies the key orthogonality condition $\mathbf{Q}_k\mathbf{W}_{K+1}^{\star}=\mathbf{0} $, $\forall k$. In particular, we see that ${\mathbf W}_{K+1}^{\star}$ does not have any interference effect on the communication SINR constraints. These insights allow us to equivalently transform problem \eqref{eq_extended_mu_isac_problem_sdr} into the following problem involving only $\{\mathbf{W}_{k}\}_{k=1}^K$:
\begin{equation}\label{eq_extended_mu_isac_problem_wc}
\begin{aligned}
 \min _{\{\mathbf{W}_k\}_{k=1}^K} &\text{tr}\left(\left(\sum_{k=1}^{K}\mathbf{W}_k\right)^{\dagger}\right) +\frac{(N_t-K)^2}{P_T-\operatorname{tr}\left(\sum_{k=1}^{K}\mathbf{W}_k\right)} \\
\text {s.t. } ~~&\rho_k\operatorname{tr}\left(\mathbf{Q}_{k}\mathbf{W}_{k}\right)-\sum_{j=1}^K\operatorname{tr}\left(\mathbf{Q}_{k}\mathbf{W}_{j}\right)\geq  \sigma^2_C,\forall k,\\
& \sum_{k=1}^{K}\operatorname{tr}\left(\mathbf{W}_{k}\right) \leq P_T,\mathbf{W}_{k}\succeq\mathbf{0},\forall k.
\end{aligned}
\end{equation}

\noindent Third, property (a) reveals that the optimal solution $\{\mathbf{W}_{k}^{\star}\}_{k=1}^K$ admits a low-dimensional subspace structure, i.e., $\mathbf{W}_{k}^{\star}=\mathbf{H}\mathbf{V}_k\mathbf{H}^H $ with $\mathbf{V}_k\in\mathbb{C}^{K\times K} $, $\forall k$. Building on this, we have the following theorem.
 \begin{theorem}
      Problem \eqref{eq_extended_mu_isac_problem_sdr} is equivalent to
\begin{equation}\label{eq_extended_mu_isac_problem_wc_lds_sdr}
\begin{aligned}
 \min _{\{\mathbf{V}_k\}_{k=1}^K} &\operatorname{tr}\left(\widebar{\mathbf{H}}^{-1}\widebar{\mathbf{R}}_V^{-1}\right) +\frac{(N_t-K)^2}{P_T-\operatorname{tr}\left(\widebar{\mathbf{H}}\widebar{\mathbf{R}}_V\right)} \\
\text {\rm s.t. }  ~~&\rho_k\operatorname{tr}\left(\widebar{\mathbf{Q}}_{k}\mathbf{V}_k\right)-\operatorname{tr}\left(\widebar{\mathbf{Q}}_{k}\widebar{\mathbf{R}}_V\right)\geq \sigma^2_C,\forall k,\\
& \operatorname{tr}\left(\widebar{\mathbf{H}}\widebar{\mathbf{R}}_V\right) \leq P_T,~\mathbf{V}_k\succeq\mathbf{0},\forall k,
\end{aligned}
\end{equation}
where $\widebar{\mathbf{R}}_V=\sum_{k=1}^K\mathbf{V}_k$, $\widebar{\mathbf{H}}=\mathbf{H}^H\mathbf{H}\in\mathbb{C}^{K\times K}$, and $\widebar{\mathbf{Q}}_k=\mathbf{H}^H\mathbf{h}_k\mathbf{h}_k^H\mathbf{H}\in\mathbb{C}^{K\times K},\forall k$.
 \end{theorem}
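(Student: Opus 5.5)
The plan is to derive the result from Theorem~\ref{theorem_low_dimen} in two reductions. First, show that problem \eqref{eq_extended_mu_isac_problem_sdr} is equivalent to problem \eqref{eq_extended_mu_isac_problem_wc}. Second, substitute $\mathbf{W}_k=\mathbf{H}\mathbf{V}_k\mathbf{H}^H$ and rewrite every term in $K\times K$ quantities.

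Throughout I assume $\mathbf{H}$ has full column rank $K$. This is implicit in the dimension $N_t-K$ of $\mathcal{N}(\mathbf{H}^H)$ in Theorem~\ref{theorem_low_dimen}(c), so $\widebar{\mathbf{H}}=\mathbf{H}^H\mathbf{H}\succ\mathbf{0}$ is invertible.

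\textbf{Step 1: reduce to problem \eqref{eq_extended_mu_isac_problem_wc}.} By Theorem~\ref{theorem_low_dimen}, I can restrict attention to feasible points with three properties:
\begin{itemize}
\item[(i)] $\mathcal{R}(\mathbf{W}_k)\subseteq\mathcal{R}(\mathbf{H})$ for $k\in[K]$;
\item[(ii)] $\mathbf{W}_{K+1}=\theta\mathbf{U}_C\mathbf{U}_C^H$;
\item[(iii)] the power constraint is tight.
\end{itemize}
Since $\mathbf{Q}_k\mathbf{U}_C=\mathbf{0}$, the term involving $\mathbf{W}_{K+1}$ drops out of the SINR constraints. Let $\mathbf{R}=\sum_{k\le K}\mathbf{W}_k$. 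Its range lies in $\mathcal{R}(\mathbf{H})$, which is orthogonal to $\mathcal{R}(\mathbf{U}_C)$, so the inverse of $\mathbf{R}+\theta\mathbf{U}_C\mathbf{U}_C^H$ splits blockwise:
\[
\operatorname{tr}\bigl((\mathbf{R}+\theta\mathbf{U}_C\mathbf{U}_C^H)^{-1}\bigr)=\operatorname{tr}(\mathbf{R}^{\dagger})+\frac{N_t-K}{\theta}.
\]
This requires $\mathbf{R}$ to be positive definite on $\mathcal{R}(\mathbf{H})$, which holds at any point with finite objective. Substituting $\theta=(P_T-\operatorname{tr}\mathbf{R})/(N_t-K)$ gives exactly the objective of \eqref{eq_extended_mu_isac_problem_wc}.

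Conversely, every feasible point of \eqref{eq_extended_mu_isac_problem_wc} with finite objective lifts to a feasible point of \eqref{eq_extended_mu_isac_problem_sdr} with the same value, by appending $\mathbf{W}_{K+1}=\theta\mathbf{U}_C\mathbf{U}_C^H$.

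\textbf{Step 2: justify restricting \eqref{eq_extended_mu_isac_problem_wc} to $\mathcal{R}(\mathbf{H})$.} This is the main obstacle. Problem \eqref{eq_extended_mu_isac_problem_wc} by itself allows $\mathbf{W}_k$ with components outside $\mathcal{R}(\mathbf{H})$, and for such points the pseudo-inverse term does not behave like the true CRB. I will therefore not claim that \eqref{eq_extended_mu_isac_problem_wc} is equivalent to \eqref{eq_extended_mu_isac_problem_sdr} on its whole feasible set. Instead, I will argue directly that the optimal values of \eqref{eq_extended_mu_isac_problem_sdr} and \eqref{eq_extended_mu_isac_problem_wc_lds_sdr} agree:
\begin{itemize}
\item[(a)] every feasible point of \eqref{eq_extended_mu_isac_problem_wc_lds_sdr} maps to a feasible point of \eqref{eq_extended_mu_isac_problem_sdr} with equal objective, via $\mathbf{W}_k=\mathbf{H}\mathbf{V}_k\mathbf{H}^H$ and $\mathbf{W}_{K+1}$ as above;
\item[(b)] the optimal solution of Theorem~\ref{theorem_low_dimen} is in this image.
\end{itemize}
Hence $\mathrm{opt}\eqref{eq_extended_mu_isac_problem_sdr}\le\mathrm{opt}\eqref{eq_extended_mu_isac_problem_wc_lds_sdr}\le\mathrm{opt}\eqref{eq_extended_mu_isac_problem_sdr}$.

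\textbf{Step 3: the change of variables.} Property (a) of Theorem~\ref{theorem_low_dimen} says $\mathbf{W}_k^\star=\mathbf{H}\mathbf{V}_k\mathbf{H}^H$. Define
\[
\mathbf{V}_k=\mathbf{H}^{\dagger}\mathbf{W}_k^\star(\mathbf{H}^{\dagger})^H,\qquad \mathbf{H}^\dagger=\widebar{\mathbf{H}}^{-1}\mathbf{H}^H.
\]
Then $\mathbf{V}_k\succeq\mathbf{0}$, the map $\mathbf{V}_k\mapsto\mathbf{H}\mathbf{V}_k\mathbf{H}^H$ is a bijection between $K\times K$ PSD matrices and PSD matrices with range in $\mathcal{R}(\mathbf{H})$, and it preserves semidefiniteness.

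The remaining identities follow from cyclicity of the trace:
\begin{itemize}
\item $\operatorname{tr}(\mathbf{Q}_k\mathbf{H}\mathbf{V}\mathbf{H}^H)=\operatorname{tr}(\widebar{\mathbf{Q}}_k\mathbf{V})$;
\item $\operatorname{tr}(\mathbf{H}\mathbf{V}\mathbf{H}^H)=\operatorname{tr}(\widebar{\mathbf{H}}\mathbf{V})$.
\end{itemize}
These give the constraints and the second objective term. For the pseudo-inverse, take $\widebar{\mathbf{R}}_V\succ\mathbf{0}$; this holds whenever the objective is finite, and otherwise I set the value to $+\infty$ in both problems. Then I verify that
\[
(\mathbf{H}\widebar{\mathbf{R}}_V\mathbf{H}^H)^{\dagger}=\mathbf{H}\widebar{\mathbf{H}}^{-1}\widebar{\mathbf{R}}_V^{-1}\widebar{\mathbf{H}}^{-1}\mathbf{H}^H
\]
by checking the four Moore--Penrose conditions, using $\mathbf{H}^H\mathbf{H}=\widebar{\mathbf{H}}$. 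Taking the trace gives $\operatorname{tr}(\widebar{\mathbf{H}}^{-1}\widebar{\mathbf{R}}_V^{-1})$, which completes the equivalence. The optimal $\{\mathbf{W}_k^\star\}$ are then recovered from $\{\mathbf{V}_k^\star\}$ and $\theta$.
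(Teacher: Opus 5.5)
Your argument is correct and uses the same ingredients as the paper. Theorem~\ref{theorem_low_dimen} removes $\mathbf{W}_{K+1}$ from the SINR constraints and splits the inverse over $\mathcal{R}(\mathbf{H})\oplus\mathcal{N}(\mathbf{H}^H)$. Then the substitution $\mathbf{W}_k=\mathbf{H}\mathbf{V}_k\mathbf{H}^H$, with the trace and Moore--Penrose identities, gives the reduced problem.

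The difference is in how the argument is organized. The paper derives the statement in the surrounding text: it first calls \eqref{eq_extended_mu_isac_problem_wc} an equivalent transformation of \eqref{eq_extended_mu_isac_problem_sdr}, and then substitutes. You instead sandwich the optimal values of \eqref{eq_extended_mu_isac_problem_sdr} and \eqref{eq_extended_mu_isac_problem_wc_lds_sdr} directly.

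Your route is the sounder one, and for a reason slightly broader than the one you give. Even when every $\mathbf{W}_k$ has range in $\mathcal{R}(\mathbf{H})$, the pseudo-inverse in \eqref{eq_extended_mu_isac_problem_wc} is finite wherever $\sum_k\mathbf{W}_k$ is rank-deficient. An example is all $\mathbf{W}_k$ proportional to one rank-one matrix, which is feasible for small $\Gamma_k$ and large $P_T$. At such points the value of \eqref{eq_extended_mu_isac_problem_wc} can even fall below the optimal value of \eqref{eq_extended_mu_isac_problem_sdr}, while the lifted point has infinite CRB. Problem \eqref{eq_extended_mu_isac_problem_wc_lds_sdr} avoids this because it uses $\widebar{\mathbf{R}}_V^{-1}$ with the $+\infty$ convention for singular $\widebar{\mathbf{R}}_V$.

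For the same reason, delete the ``Conversely'' sentence at the end of your Step~1, because it is false. Your remark that a finite objective forces $\mathbf{R}$ to be positive definite on $\mathcal{R}(\mathbf{H})$ holds for the objective of \eqref{eq_extended_mu_isac_problem_sdr}, not for the pseudo-inverse objective. Steps~2--3 never rely on that sentence, so the proof stands without it.
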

 
\noindent Note that the dimension of the matrix variable has been significantly reduced from $N_t\times N_t$ to $K\times K$, which facilitates low-complexity algorithm design.

While problem \eqref{eq_extended_mu_isac_problem_wc_lds_sdr} involves the matrix $\widebar{\mathbf{H}} = \mathbf{H}^H\mathbf{H}$ in both the objective function and constraints, it does not fundamentally complicate algorithm design, as it is simply a linear operator acting on $\{ {\mathbf V}_k \}_{k=1}^K$. Nevertheless, the formulation can be further simplified by leveraging an equivalent subspace representation derived from the singular value decomposition (SVD) of the channel matrix.

Specifically, let $\mathbf{H} = \tilde{\mathbf{U}}\tilde{\boldsymbol{\Sigma}}\tilde{\mathbf{V}}^H$ be the compact SVD of $\mathbf{H}$, where $\tilde{\mathbf{U}} \in \mathbb{C}^{N_t \times K}$ forms an orthonormal basis of the range space of $\mathbf{H}$. According to Theorem \ref{theorem_low_dimen}(a), the optimal beamforming matrices can alternatively be expressed as $\mathbf{W}_{k}=\tilde{\mathbf{U}}\mathbf{X}_k\tilde{\mathbf{U}}^H $ with $\mathbf{X}_k\in\mathbb{C}^{K\times K},\forall k$. This representation leads to the following equivalent and more compact reformulation of problem \eqref{eq_extended_mu_isac_problem_wc_lds_sdr}.
\begin{corollary}\label{coro_reduced}
Problem \eqref{eq_extended_mu_isac_problem_sdr} is equivalent to
    \begin{equation}\label{eq_extended_mu_isac_problem_wc_lds_sdr_brief}
\begin{aligned}
 \min _{\{\mathbf{X}_k\}_{k=1}^K} &\operatorname{tr}\left(\tilde{\mathbf{R}}_X^{-1}\right) +\frac{(N_t-K)^2}{P_T-\operatorname{tr}\left(\tilde{\mathbf{R}}_X\right)} \\
\text {\rm s.t. } ~~&\rho_k\operatorname{tr}\left(\tilde{\mathbf{Q}}_{k}\mathbf{X}_k\right)-\operatorname{tr}\left(\tilde{\mathbf{Q}}_{k}\tilde{\mathbf{R}}_X\right)\geq \sigma^2_C,\forall k,\\
& \operatorname{tr}\left(\tilde{\mathbf{R}}_X\right) \leq P_T,~\mathbf{X}_k\succeq\mathbf{0},\forall k,
\end{aligned}
\end{equation}
where $\tilde{\mathbf{R}}_X{=}\sum_{k=1}^K\mathbf{X}_k$, and $\tilde{\mathbf{Q}}_k{=}\tilde{\mathbf{U}}^H\mathbf{h}_k\mathbf{h}_k^H\tilde{\mathbf{U}}\in\mathbb{C}^{K\times K},\forall k$.
\end{corollary}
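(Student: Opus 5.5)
The plan is to get Corollary~\ref{coro_reduced} from the preceding theorem (the equivalence with problem \eqref{eq_extended_mu_isac_problem_wc_lds_sdr}) by an invertible change of variables. Alternatively, the same substitution can be applied directly to problem \eqref{eq_extended_mu_isac_problem_wc}, which Theorem~\ref{theorem_low_dimen} already reduces to. I assume, as the reformulations implicitly do, that $\mathbf{H}$ has full column rank $K$. Then in the compact SVD $\mathbf{H}=\tilde{\mathbf{U}}\tilde{\boldsymbol{\Sigma}}\tilde{\mathbf{V}}^H$ the matrix $\mathbf{T}:=\tilde{\boldsymbol{\Sigma}}\tilde{\mathbf{V}}^H\in\mathbb{C}^{K\times K}$ is invertible and $\mathbf{H}=\tilde{\mathbf{U}}\mathbf{T}$.

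\textbf{Step 1 (bijection).} Set $\mathbf{X}_k=\mathbf{T}\mathbf{V}_k\mathbf{T}^H$. Since $\mathbf{T}$ is invertible, this is a bijection of the PSD cone onto itself, and $\tilde{\mathbf{R}}_X=\mathbf{T}\widebar{\mathbf{R}}_V\mathbf{T}^H$. It also maps the range-restricted matrices to each other: $\mathbf{H}\mathbf{V}_k\mathbf{H}^H=\tilde{\mathbf{U}}\mathbf{X}_k\tilde{\mathbf{U}}^H$. Both parametrizations therefore describe exactly the feasible points with $\mathcal{R}(\mathbf{W}_k)\subseteq\mathcal{R}(\mathbf{H})$, which by Theorem~\ref{theorem_low_dimen}(a) contain an optimal solution.

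\textbf{Step 2 (term-by-term check).} The key identities are as follows.
\begin{itemize}
\item Quadratic terms: $\operatorname{tr}(\widebar{\mathbf{Q}}_k\mathbf{V}_k)=\mathbf{h}_k^H\mathbf{H}\mathbf{V}_k\mathbf{H}^H\mathbf{h}_k=\mathbf{h}_k^H\tilde{\mathbf{U}}\mathbf{X}_k\tilde{\mathbf{U}}^H\mathbf{h}_k=\operatorname{tr}(\tilde{\mathbf{Q}}_k\mathbf{X}_k)$. The same computation gives $\operatorname{tr}(\widebar{\mathbf{Q}}_k\widebar{\mathbf{R}}_V)=\operatorname{tr}(\tilde{\mathbf{Q}}_k\tilde{\mathbf{R}}_X)$, so the SINR constraints match.
\item Power term: $\operatorname{tr}(\widebar{\mathbf{H}}\widebar{\mathbf{R}}_V)=\operatorname{tr}(\mathbf{H}\widebar{\mathbf{R}}_V\mathbf{H}^H)=\operatorname{tr}(\tilde{\mathbf{U}}\tilde{\mathbf{R}}_X\tilde{\mathbf{U}}^H)=\operatorname{tr}(\tilde{\mathbf{R}}_X)$, using $\tilde{\mathbf{U}}^H\tilde{\mathbf{U}}=\mathbf{I}$. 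So the power constraint and the second objective term match.
\item CRB term: $\widebar{\mathbf{H}}=\mathbf{T}^H\mathbf{T}$, hence
\[
\operatorname{tr}(\widebar{\mathbf{H}}^{-1}\widebar{\mathbf{R}}_V^{-1})=\operatorname{tr}\big(\mathbf{T}^{-1}\mathbf{T}^{-H}\widebar{\mathbf{R}}_V^{-1}\big)=\operatorname{tr}\big((\mathbf{T}\widebar{\mathbf{R}}_V\mathbf{T}^H)^{-1}\big)=\operatorname{tr}(\tilde{\mathbf{R}}_X^{-1}).
\]
This also agrees with $\operatorname{tr}((\sum_k\mathbf{W}_k)^{\dagger})$ in \eqref{eq_extended_mu_isac_problem_wc}, because the pseudo-inverse of $\tilde{\mathbf{U}}\tilde{\mathbf{R}}_X\tilde{\mathbf{U}}^H$ is $\tilde{\mathbf{U}}\tilde{\mathbf{R}}_X^{-1}\tilde{\mathbf{U}}^H$ when $\tilde{\mathbf{R}}_X\succ\mathbf{0}$.
\end{itemize}

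\textbf{Step 3 (conclusion and main obstacle).} Objective values and feasible sets correspond one-to-one, so optimal values coincide and optimal solutions map to each other. An optimal $\{\mathbf{W}_k^\star\}_{k=1}^{K+1}$ of \eqref{eq_extended_mu_isac_problem_sdr} is recovered as $\mathbf{W}_k^\star=\tilde{\mathbf{U}}\mathbf{X}_k^\star\tilde{\mathbf{U}}^H$, with $\mathbf{W}_{K+1}^\star$ given by Theorem~\ref{theorem_low_dimen}(c).

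The only delicate point is the singular case $\tilde{\mathbf{R}}_X\not\succ\mathbf{0}$, where the inverse is undefined. I would handle it by the convention that the objective is $+\infty$ there. This is harmless: at an optimum of \eqref{eq_extended_mu_isac_problem_sdr} the sum $\sum_{k=1}^{K+1}\mathbf{W}_k$ must be invertible, since otherwise the CRB is infinite, whereas a strictly positive-definite feasible point exists. Such a point can be built by perturbing a feasible point, using $P_T\ge P_{\rm low}$ from Proposition~\ref{pro_infea_condition} to keep it feasible. At an optimum this invertibility forces $\tilde{\mathbf{R}}_X\succ\mathbf{0}$, and also forces the denominator $P_T-\operatorname{tr}(\tilde{\mathbf{R}}_X)>0$ whenever $N_t>K$.

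The step I expect to need the most care is this boundary/positive-definiteness argument rather than the algebra. The algebraic identities themselves are routine.
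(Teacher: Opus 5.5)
Your proof is correct and follows the paper's route. The paper justifies the corollary only by noting that Theorem~\ref{theorem_low_dimen}(a) allows the substitution $\mathbf{W}_k=\tilde{\mathbf{U}}\mathbf{X}_k\tilde{\mathbf{U}}^H$ in the reduced problem \eqref{eq_extended_mu_isac_problem_wc}. Your invertible map $\mathbf{X}_k=\mathbf{T}\mathbf{V}_k\mathbf{T}^H$ and the term-by-term identities make that substitution explicit. One small slip: the existence of a positive-definite feasible point needs $P_T>P_{\operatorname{low}}$ strictly. At $P_T=P_{\operatorname{low}}$ with $N_t>K$, every feasible $\mathbf{W}_k$, $k\in[K+1]$, has range in $\mathcal{R}(\mathbf{H})$, so $\mathbf{R}_W$ is singular. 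In that case both problems have value $+\infty$ under your convention, so the equivalence still holds.
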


 The formulations in \eqref{eq_extended_mu_isac_problem_wc_lds_sdr} and \eqref{eq_extended_mu_isac_problem_wc_lds_sdr_brief} are mathematically equivalent to each other. Although one needs to perform an SVD of the channel matrix to form problem \eqref{eq_extended_mu_isac_problem_wc_lds_sdr_brief}, this operation leads to a more compact formulation and facilitates clearer and more concise expressions in subsequent derivations. Therefore, in what follows, we develop efficient algorithms based on the reformulated problem \eqref{eq_extended_mu_isac_problem_wc_lds_sdr_brief}.
\begin{remark}
    The low-dimensional structure revealed in this work has implications not only on the current CRB minimization framework but also on broader settings, such as maximization of a weighted combination of the sum rate and the CRB\cite{wu2024efficient}, maximization of the sum rate under CRB constraints\cite{hua2023mimo}, and energy-efficient beamforming design for ISAC\cite{zou2024energy}.
\end{remark}

\subsection{Rank-One Optimal Solution via IPM}
 Problem \eqref{eq_extended_mu_isac_problem_wc_lds_sdr_brief} is convex and can be solved by a standard IPM. The proposed algorithm is termed reduced-IPM (R-IPM), whose computational complexity is $\mathcal{O}(K^{10})$. This should be contrasted with the $\mathcal{O}(N_t^{6.5}K^{3.5})$ complexity of standard IPMs for solving the original problem in \cite{liu2021cramer}.\footnote{For simplicity, we omit the total iteration number in the complexity analysis in the rest of this paper. The per-iteration complexity of the IPM can be found in  \cite{wang2014outage}.} Thus, our approach significantly reduces the complexity and is more preferred in massive MIMO systems.

However, an optimal solution to problem \eqref{eq_extended_mu_isac_problem_wc_lds_sdr_brief}  is not guaranteed to be of rank-one, even though it admits a low-dimensional structure. The following theorem provides a constructive method to extract a rank-one optimal solution to the original problem \eqref{eq_extended_mu_isac_problem_sdr} from an optimal solution to problem \eqref{eq_extended_mu_isac_problem_wc_lds_sdr_brief}.

\begin{theorem}\label{thm_extract_rank1}
    Given an optimal solution $\{\mathbf{X}^{\star}_{k}\}_{k=1}^{K}$ to problem \eqref{eq_extended_mu_isac_problem_wc_lds_sdr_brief}, an optimal solution $\{ {\mathbf W}_k^\star \}_{k=1}^{K+1}$ to problem \eqref{eq_extended_mu_isac_problem_sdr} can be constructed via
\begin{equation}\label{thm_extract_rank_1}
\mathbf{W}_{k}^{\star}=\frac{\tilde{\mathbf{U}}\mathbf{X}^{\star}_{k}\tilde{\mathbf{Q}}_{k}\mathbf{X}^{\star}_{k}\tilde{\mathbf{U}}^H}{\operatorname{tr}(\tilde{\mathbf{Q}}_{k}\mathbf{X}^{\star}_{k})},\forall k\end{equation}
and
\begin{equation}\label{thm_extract_rank_2}
    \mathbf{W}_{K+1}^{\star}=\mathbf{R}^{\star}_{W}-\sum_{k=1}^K\mathbf{W}_{k}^{\star},
\end{equation}
where
\begin{equation}\notag
    \mathbf{R}^{\star}_{W}=\sum_{k=1}^{K} \tilde{\mathbf{U}}\mathbf{X}^{\star}_{k}\tilde{\mathbf{U}}^H+\frac{P_T-\sum_{k=1}^K\operatorname{tr}(\mathbf{X}^{\star}_{k})}{N_t-K}\mathbf{U}_C\mathbf{U}_C^H.
\end{equation}
Moreover, we have $\operatorname{rank}(\mathbf{W}_{k}^{\star})=1$ for all $k\in[K]$.
\end{theorem}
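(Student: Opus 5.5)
The plan follows the standard rank-reduction argument of \cite{liu2021cramer}, adapted to the reduced coordinates of Corollary~\ref{coro_reduced}. First I lift the reduced solution back to the original space. Let $\hat{\mathbf{W}}_k=\tilde{\mathbf{U}}\mathbf{X}^{\star}_k\tilde{\mathbf{U}}^H$ for $k\in[K]$ and $\hat{\mathbf{W}}_{K+1}=\theta\mathbf{U}_C\mathbf{U}_C^H$ with $\theta=\frac{P_T-\sum_k\operatorname{tr}(\mathbf{X}^{\star}_k)}{N_t-K}$. By Theorem~\ref{theorem_low_dimen} and the equivalence in Corollary~\ref{coro_reduced}, $\{\hat{\mathbf{W}}_k\}_{k=1}^{K+1}$ is optimal for problem~\eqref{eq_extended_mu_isac_problem_sdr}. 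Its sum is exactly $\mathbf{R}^{\star}_W$. Two facts are used repeatedly:
\begin{itemize}
\item $\mathbf{h}_k^H\mathbf{U}_C=\mathbf{0}$, since the columns of $\mathbf{U}_C$ span $\mathcal{N}(\mathbf{H}^H)$;
\item $\operatorname{tr}(\mathbf{Q}_k\hat{\mathbf{W}}_j)=\operatorname{tr}(\tilde{\mathbf{Q}}_k\mathbf{X}^{\star}_j)$.
\end{itemize}

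Second, I verify the properties of the new $\mathbf{W}_k^{\star}$, $k\in[K]$, defined in~\eqref{thm_extract_rank_1}. Write $\mathbf{g}_k=\tilde{\mathbf{U}}\mathbf{X}^{\star}_k\tilde{\mathbf{U}}^H\mathbf{h}_k$, using $\tilde{\mathbf{U}}^H\tilde{\mathbf{U}}=\mathbf{I}$. Then
\begin{equation}\notag
\mathbf{W}_k^{\star}=\frac{\mathbf{g}_k\mathbf{g}_k^H}{\mathbf{h}_k^H\hat{\mathbf{W}}_k\mathbf{h}_k}.
\end{equation}
This is PSD and of rank at most one. The denominator is positive because the SINR constraint with $\sigma_C^2>0$ forces $\operatorname{tr}(\tilde{\mathbf{Q}}_k\mathbf{X}^{\star}_k)>0$. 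For the same reason $\mathbf{g}_k\neq\mathbf{0}$, so the rank is exactly one. A direct computation gives
\begin{equation}\notag
\mathbf{h}_k^H\mathbf{W}_k^{\star}\mathbf{h}_k=\frac{(\mathbf{h}_k^H\hat{\mathbf{W}}_k\mathbf{h}_k)^2}{\mathbf{h}_k^H\hat{\mathbf{W}}_k\mathbf{h}_k}=\mathbf{h}_k^H\hat{\mathbf{W}}_k\mathbf{h}_k,
\end{equation}
so the useful-signal terms $\operatorname{tr}(\mathbf{Q}_k\mathbf{W}^{\star}_k)$ are preserved.

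Third, I handle $\mathbf{W}_{K+1}^{\star}=\mathbf{R}^{\star}_W-\sum_k\mathbf{W}_k^{\star}$. Since the total sum equals $\mathbf{R}^{\star}_W$, the objective $\operatorname{tr}((\mathbf{R}^{\star}_W)^{-1})$ and the power $\operatorname{tr}(\mathbf{R}^{\star}_W)$ are unchanged. The SINR constraint reads $\rho_k\operatorname{tr}(\mathbf{Q}_k\mathbf{W}_k)-\operatorname{tr}(\mathbf{Q}_k\mathbf{R}_W)\ge\sigma_C^2$. Its first term is preserved by the previous step and its second term depends only on the unchanged sum, so feasibility is inherited.

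The remaining and main step is $\mathbf{W}_{K+1}^{\star}\succeq\mathbf{0}$. Because $\mathbf{R}^{\star}_W-\sum_k\hat{\mathbf{W}}_k=\theta\mathbf{U}_C\mathbf{U}_C^H\succeq\mathbf{0}$, it suffices to show $\hat{\mathbf{W}}_k-\mathbf{W}^{\star}_k\succeq\mathbf{0}$ for each $k$. Take any $\mathbf{v}$ and write $\hat{\mathbf{W}}_k=\mathbf{A}\mathbf{A}^H$. Then
\begin{equation}\notag
\mathbf{v}^H\mathbf{W}^{\star}_k\mathbf{v}=\frac{|\mathbf{v}^H\mathbf{A}\mathbf{A}^H\mathbf{h}_k|^2}{\|\mathbf{A}^H\mathbf{h}_k\|^2}\le\|\mathbf{A}^H\mathbf{v}\|^2=\mathbf{v}^H\hat{\mathbf{W}}_k\mathbf{v}
\end{equation}
by Cauchy--Schwarz. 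Hence the constructed set is feasible with the optimal objective value, so it is optimal with $\operatorname{rank}(\mathbf{W}_k^{\star})=1$ for all $k\in[K]$.

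I expect no serious obstacle beyond two points: the positive-semidefiniteness argument above, and justifying that $\mathbf{R}^{\star}_W$ is indeed the optimal covariance. The latter relies on Theorem~\ref{theorem_low_dimen}(c) and Corollary~\ref{coro_reduced}, which I take as given.
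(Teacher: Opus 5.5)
Your proposal is correct and takes the same route as the paper. You lift $\{\mathbf{X}_k^{\star}\}$ via Theorem~\ref{theorem_low_dimen} and Corollary~\ref{coro_reduced} to an optimal solution of \eqref{eq_extended_mu_isac_problem_sdr}, apply the rank-one extraction of \cite[Theorem 1]{liu2020joint}, $\mathbf{W}_k^{\star}=\hat{\mathbf{W}}_k\mathbf{h}_k\mathbf{h}_k^H\hat{\mathbf{W}}_k/(\mathbf{h}_k^H\hat{\mathbf{W}}_k\mathbf{h}_k)$, and absorb the residual into $\mathbf{W}_{K+1}^{\star}$. The paper leaves this as ``direct algebraic manipulation,'' while you supply the omitted checks (preserved $\operatorname{tr}(\mathbf{Q}_k\mathbf{W}_k)$ and total covariance, positive denominator from $\sigma_C^2>0$, and Cauchy--Schwarz giving $\hat{\mathbf{W}}_k\succeq\mathbf{W}_k^{\star}$, hence $\mathbf{W}_{K+1}^{\star}\succeq\mathbf{0}$).
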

\begin{proof}
    See Appendix \ref{app_thm_extract_rank1} in the Supplementary Material.
\end{proof}

According to Theorem \ref{thm_extract_rank1}, by setting $\mathbf{w}_k^{\star}=(\mathbf{h}_k^H\tilde{\mathbf{U}}\mathbf{X}^{\star}_{k}\tilde{\mathbf{U}}^H\mathbf{h}_k)^{-\frac{1}{2}}\tilde{\mathbf{U}}\mathbf{X}^{\star}_{k}\tilde{\mathbf{U}}^H\mathbf{h}_k$ for $ k\in [K]$ and $\mathbf{W}_A^{\star}$ to be, say, a Cholesky factor of $ \mathbf{W}_{K+1}^{\star}$, we obtain an optimal solution to problem \eqref{eq_extended_mu_isac_problem}.

\section{Balanced Augmented Lagrangian-Based Low-Complexity Algorithm}
Although the proposed R-IPM algorithm provides a globally optimal solution to problem \eqref{eq_extended_mu_isac_problem_wc_lds_sdr_brief}, its computational complexity of $\mathcal{O}(K^{10})$ becomes unbearable for large values of $K$. To overcome this limitation, we develop a BAL-based method that efficiently solves problem \eqref{eq_extended_mu_isac_problem_wc_lds_sdr_brief} with significantly lower complexity.
\subsection{Brief Introduction of BAL Method and Challenges}
The BAL method is an enhanced version of the classical augmented Lagrangian method (ALM), designed to efficiently solve optimization problems of the form
\begin{equation}\label{eq_linear_equality_constrained_cvx}
\min _{\mathbf{u} \in \mathbb{C}^n} f(\mathbf{u}) \quad \text { s.t. } \quad \mathbf{D u}=\mathbf{b},
\end{equation}
where $f: \mathbb{C}^n \to \mathbb{R} \cup \{+\infty\}$ is a proper closed convex function; $\mathbf{D} \in \mathbb{C}^{m \times n}$ and $\mathbf{b} \in \mathbb{C}^m$ are given complex-valued matrix and vector, respectively.

The BAL method solves problem \eqref{eq_linear_equality_constrained_cvx} through the following iterative procedure:
\begin{equation}\label{eq_BALM}
   \begin{cases}
\mathbf{u}^{t+1} = \arg\min_{\mathbf{u} \in \mathbb{C}^n} \left\{ f(\mathbf{u}) + \frac{1}{2\tau} \|\mathbf{u} - (\mathbf{u}^t - \tau \mathbf{D}^{H}\boldsymbol{\lambda}^t)\|^2 \right\} , \\
\mathbf{p}^{t+1} = \mathbf{D}\left(2\mathbf{u}^{t+1} - \mathbf{u}^t\right) - \mathbf{b}, \\
\boldsymbol{\lambda}^{t+1} = \boldsymbol{\lambda}^t +\tau^{-1}\left(\mathbf{D}\mathbf{D}^{H} + \delta\mathbf{I}\right)^{-1}\mathbf{p}^{t+1}.
\end{cases} 
\end{equation}
Here, $\tau > 0$ is the primal stepsize and $\delta > 0$ is a preset small regularization parameter (e.g., $10^{-4}$) to ensure the positive definiteness of $\mathbf{D}\mathbf{D}^H+\delta \mathbf{I}$.

Unlike the classical ALM, the BAL method achieves computational balance by redistributing the workload between the primal and dual updates. Specifically, the primal update (i.e., the $\mathbf{u}$-subproblem) in \eqref{eq_BALM} often admits a closed-form solution, in contrast to the typically more challenging $\mathbf{u}$-subproblem in the classical ALM. Moreover, the dual update (i.e., the $\boldsymbol{\lambda}$-subproblem) reduces to solving a linear system. Crucially, in many practical applications (including the case discussed in the subsequent subsection), the matrix $\mathbf{D}$ possesses a special structure (e.g., block-diagonal or sparse) that facilitates an efficient computation of the dual update. This ensures that both subproblems remain computationally tractable.
The BAL method enjoys global convergence and achieves the worst-case  $\mathcal{O}(1/t)$ convergence rate, as established in\cite{he2021balanced,ma2023understanding}. For further theoretical and algorithmic details, we refer readers to \cite{he2021balanced}. 

Several recent works have proposed enhancements to the BAL method, including adaptive stepsize strategies \cite{wu2025new}, acceleration techniques \cite{zhang20251}, and prediction-correction schemes \cite{li2025balanced}. This paper does not aim to improve the BAL framework itself. Instead, we focus on exploiting low-dimensional structures to design efficient algorithms. In particular, we show how the standard BAL method can be applied effectively to solve our low-dimensional problem \eqref{eq_extended_mu_isac_problem_wc_lds_sdr_brief} while maintaining low computational complexity.

 To apply the BAL method to problem \eqref{eq_extended_mu_isac_problem_wc_lds_sdr_brief}, we must address the following three key technical challenges:
\begin{itemize}
    \item[1)] \textbf{Inequality constraints handling:} Problem \eqref{eq_extended_mu_isac_problem_wc_lds_sdr_brief} involves $K$ SINR inequality constraints. Direct application of the BAL method would require solving a linearly constrained quadratic program in each dual update step. This approach is significantly more complex than the equality-constrained case, where the dual update admits a closed-form solution as in \eqref{eq_BALM}.
    \item[2)] \textbf{Complicated primal update:} Although the BAL method simplifies the primal update in the classical ALM, the complicated structure of the objective function in \eqref{eq_extended_mu_isac_problem_wc_lds_sdr_brief} remains computationally challenging.   
    \item[3)] \textbf{Large-scale matrix inversion:} The semidefinite constraints introduce dual variables of dimension $K^3$, thus resulting in large-scale matrix inversions in the dual update. This typically incurs a prohibitive computational complexity of $\mathcal{O}(K^9)$.
\end{itemize}

In the following subsections, we will address these challenges and develop an efficient BAL-based algorithm with a significantly lower per-iteration complexity of $\mathcal{O}(K^4)$.

\subsection{Problem Reformulation and Equality Structure}

To address Challenge 1, we first examine whether there exists an optimal solution to problem \eqref{eq_extended_mu_isac_problem_wc_lds_sdr_brief} such that all inequality constraints are \emph{tight}, i.e., they hold as equalities. For the original problem \eqref{eq_extended_mu_isac_problem_sdr}, it has been shown in \cite{wu2025new} that such an optimal solution does exist. However, due to the structural restriction imposed on $\mathbf{W}_{K+1}$ by Theorem \ref{theorem_low_dimen}(c), it may happen that for every optimal solution to problem \eqref{eq_extended_mu_isac_problem_wc_lds_sdr_brief}, at least one constraint holds as strict inequality. We refer to this scenario as the \emph{degenerate case}. The following theorem furnishes a necessary condition for the degenerate case to occur.

\begin{theorem}\label{thm_equality} 
\begin{enumerate}[(a)]
  \item A necessary condition for the degenerate case to occur in problem \eqref{eq_extended_mu_isac_problem_wc_lds_sdr_brief}  is that the inequality
    \begin{equation}\label{eq_inequality_condition}
    \|\mathbf{h}_l\|^2 \sum_{k=1}^K \frac{P_T\|\mathbf{h}_k\|^2+  \sigma_C^2 N_t}{\rho_k\tr\left(\mathbf{Q}_k\mathbf{Q}_l\right)}< P_T 
    \end{equation}
    holds for all $l\in[K]$.
  \item If condition \eqref{eq_inequality_condition} is satisfied, then for any $l\in [K]$, the matrices $\{ {\mathbf W}_k \}_{k=1}^{K+1}$ given by $\mathbf{W}_k=a_k\mathbf{Q}_l$ for $k\in[K]$ and $\mathbf{W}_{K+1}=\frac{P_T}{N_t}\mathbf{I}-\sum_{k=1}^Ka_k\mathbf{Q}_l$ constitute an optimal solution to problem \eqref{eq_extended_mu_isac_problem_sdr}, where $a_k =  \frac{P_T\|\mathbf{h}_k\|^2+  \sigma_C^2 N_t}{\rho_k N_t\tr\left(\mathbf{Q}_k\mathbf{Q}_l\right)}$.
\end{enumerate}
\end{theorem}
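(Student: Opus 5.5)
Part (b) is a direct verification, so I would do it first. For $\mathbf{W}_k=a_k\mathbf{Q}_l$ and $\mathbf{W}_{K+1}=\frac{P_T}{N_t}\mathbf{I}-\sum_k a_k\mathbf{Q}_l$ we have $\mathbf{R}=\sum_{k=1}^{K+1}\mathbf{W}_k=\frac{P_T}{N_t}\mathbf{I}$, so the power constraint holds with equality. Substituting $a_k$, the $k$-th SINR constraint becomes
\begin{equation}\notag
\rho_k a_k\operatorname{tr}(\mathbf{Q}_k\mathbf{Q}_l)-\frac{P_T}{N_t}\|\mathbf{h}_k\|^2=\sigma_C^2 ,
\end{equation}
so it holds with equality. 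Next I would check $\mathbf{W}_{K+1}\succeq\mathbf{0}$. Since $\mathbf{Q}_l$ is rank one with eigenvalue $\|\mathbf{h}_l\|^2$, this amounts to $\|\mathbf{h}_l\|^2\sum_k a_k\le P_T/N_t$, which is exactly \eqref{eq_inequality_condition} after multiplying by $N_t$. Optimality then follows from the bound $\operatorname{tr}(\mathbf{R}^{-1})\ge N_t^2/\operatorname{tr}(\mathbf{R})\ge N_t^2/P_T$, which comes from the arithmetic--harmonic mean inequality on the eigenvalues of $\mathbf{R}$ together with the power constraint. Both inequalities are attained by $\mathbf{R}=\frac{P_T}{N_t}\mathbf{I}$.

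For part (a) I would argue by contrapositive through the KKT system. The starting observation is that the objective $\operatorname{tr}(\mathbf{R}^{-1})$ is strictly convex in $\mathbf{R}$, so every optimal solution of \eqref{eq_extended_mu_isac_problem_sdr} (and of \eqref{eq_extended_mu_isac_problem_wc_lds_sdr_brief}, by Corollary \ref{coro_reduced}) shares the same covariance $\mathbf{R}^\star$. By \cite{wu2025new} there is an optimal solution of \eqref{eq_extended_mu_isac_problem_sdr} with all SINR constraints tight. I would then push this solution toward the structure of Theorem \ref{theorem_low_dimen}. Let $\mathbf{P}=\tilde{\mathbf{U}}\tilde{\mathbf{U}}^H$. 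Replacing $\mathbf{W}_k$ by $\mathbf{P}\mathbf{W}_k\mathbf{P}$ leaves each $\operatorname{tr}(\mathbf{Q}_k\mathbf{W}_j)$ unchanged, because $\mathbf{Q}_k\mathbf{P}=\mathbf{Q}_k$. The only remaining obstruction to tightness in the reduced problem is then the range$(\mathbf{H})$ component $\mathbf{P}\mathbf{W}_{K+1}\mathbf{P}$: in the reduced problem it must be absorbed into the $\mathbf{X}_k$, which raises signal power and creates slack. Accordingly, the degenerate case should correspond to an optimum where the SINR multipliers $\mu_k$ all vanish. If some $\mu_k>0$, complementary slackness and stationarity in $\mathbf{W}_{K+1}$ (whose gradient condition is $\mathbf{R}^{-2}=\nu\mathbf{I}+\sum_k\mu_k\mathbf{Q}_k-\mathbf{Z}_{K+1}$) should let me move interference mass between $\mathbf{P}\mathbf{W}_{K+1}\mathbf{P}$ and the $\mathbf{W}_k$ so as to make every active constraint tight while staying in the reduced form. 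With all $\mu_k=0$, stationarity gives $\mathbf{R}^{\star-2}=\nu\mathbf{I}$, so $\mathbf{R}^\star=\frac{P_T}{N_t}\mathbf{I}$.

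The final step, which I expect to be the main obstacle, is extracting \eqref{eq_inequality_condition} for every $l$ from the fact that $\mathbf{R}^\star=\frac{P_T}{N_t}\mathbf{I}$ is optimal and the reduced problem admits a feasible point with strict slack. My first attempt would use a tight original solution. Its constraints give $\rho_k\operatorname{tr}(\mathbf{Q}_k\mathbf{W}_k)=\sigma_C^2+\frac{P_T}{N_t}\|\mathbf{h}_k\|^2$. Degeneracy should add that $\sum_k\mathbf{W}_k\prec\frac{P_T}{N_t}\mathbf{I}$ strictly along $\mathbf{h}_l$, i.e., $\mathbf{h}_l^H\mathbf{W}_{K+1}\mathbf{h}_l>0$ for each $l$. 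I would then lower bound $\mathbf{h}_l^H(\sum_k\mathbf{W}_k)\mathbf{h}_l$ by a Cauchy--Schwarz/rank-one argument (using Theorem \ref{thm_extract_rank1} to take $\mathbf{W}_k$ rank one) in terms of $\operatorname{tr}(\mathbf{Q}_k\mathbf{W}_k)\|\mathbf{h}_l\|^2/\operatorname{tr}(\mathbf{Q}_k\mathbf{Q}_l)$. Summing over $k$ against $\mathbf{h}_l^H\mathbf{W}_{K+1}\mathbf{h}_l>0$ should reproduce the strict inequality \eqref{eq_inequality_condition}. If this bound turns out to go the wrong way, my fallback is to show directly that in the degenerate case the $a_k\mathbf{Q}_l$ construction of part (b) is the extremal tight split, so that its feasibility (strict, because of the degeneracy) is forced.
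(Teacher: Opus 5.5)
Your part (b) is correct. Certifying optimality via $\operatorname{tr}(\mathbf{R}^{-1})\ge N_t^2/\operatorname{tr}(\mathbf{R})\ge N_t^2/P_T$ is a slicker route than the paper's, which instead exhibits KKT multipliers $\omega=(N_t/P_T)^2$, $\mu_k=0$, $\boldsymbol{\Theta}_k=\mathbf{0}$.

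Part (a), however, has a genuine gap, and your primary route for the final step cannot work. To conclude $\|\mathbf{h}_l\|^2\sum_k a_k<P_T/N_t$ from $\mathbf{h}_l^H(\sum_k\mathbf{W}_k)\mathbf{h}_l<\frac{P_T}{N_t}\|\mathbf{h}_l\|^2$, you would need $\mathbf{h}_l^H\mathbf{W}_k\mathbf{h}_l\ge\|\mathbf{h}_l\|^4\operatorname{tr}(\mathbf{Q}_k\mathbf{W}_k)/\operatorname{tr}(\mathbf{Q}_k\mathbf{Q}_l)$ for each $k$. For rank-one $\mathbf{W}_k=\mathbf{w}_k\mathbf{w}_k^H$ and $k\neq l$ this is false. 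It is an equality when $\mathbf{w}_k\parallel\mathbf{h}_l$. But a component of $\mathbf{w}_k$ orthogonal to $\mathbf{h}_l$ raises $|\mathbf{h}_k^H\mathbf{w}_k|$ without raising $|\mathbf{h}_l^H\mathbf{w}_k|$; taking $\mathbf{w}_k\perp\mathbf{h}_l$ with $\mathbf{h}_k^H\mathbf{w}_k\neq 0$ makes the left side $0$. Alignment with $\mathbf{h}_l$ is the \emph{extremal} configuration, not a property that an arbitrary tight solution (such as the rank-one one from Theorem~\ref{thm_extract_rank1}) satisfies. Also, degeneracy does not give $\mathbf{h}_l^H\mathbf{W}_{K+1}\mathbf{h}_l>0$ for every $l$. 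It only rules out $\mathbf{H}^H\mathbf{W}_{K+1}=\mathbf{0}$ for tight optimal solutions, so it gives this for some $l$.

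Two ideas are missing.

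\textbf{The lemma.} The step ``degenerate $\Rightarrow$ all $\mu_k=0$ $\Rightarrow$ $\mathbf{R}^\star=\frac{P_T}{N_t}\mathbf{I}$'' needs a concrete fact from Step 1 of the proof of Theorem~\ref{theorem_low_dimen}: $\mathbf{H}_T^H\mathbf{W}_j=\mathbf{0}$ for every inactive user $j>T$. This lets you shrink a slack $\mathbf{W}_j$ to $\alpha_j\mathbf{W}_j$ and add $(1-\alpha_j)\mathbf{W}_j$ to an active user's $\mathbf{W}_1$. Doing so changes neither $\mathbf{R}_W$, nor the low-dimensional structure, nor $\operatorname{tr}(\mathbf{Q}_1\mathbf{W}_1)$. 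Your ``move interference mass'' sentence does not supply this.

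\textbf{The construction.} The condition is proved one $l$ at a time, by contradiction and a construction, not by an inequality read off a single solution. Assume the condition fails at $l$. The paper builds two reduced-feasible points, both with $\sum_k\mathbf{X}_k=\frac{P_T}{N_t}\mathbf{I}_K$ (hence optimal) and with all users $k\neq l$ tight.
\begin{itemize}
\item In the first, user $l$'s constraint holds. It is obtained by scaling the other users down and giving the excess to $\mathbf{X}_l$.
\item In the second, users $k\neq l$ put maximal mass along $\tilde{\mathbf{Q}}_l$. This comes from the auxiliary problem maximizing $\sum_k a_k$ over $\hat{\mathbf{X}}_k=a_k\tilde{\mathbf{Q}}_l+\mathbf{F}_k$, subject to $\tilde{\mathbf{Q}}_l\mathbf{F}_k=\mathbf{0}$ and $\sum_{k\neq l}\hat{\mathbf{X}}_k\preceq\frac{P_T}{N_t}\mathbf{I}_K$, with $\hat{\mathbf{X}}_l$ taken as the remainder. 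Either the $\tilde{\mathbf{Q}}_l$ direction is exhausted, so $\operatorname{tr}(\tilde{\mathbf{Q}}_l\hat{\mathbf{X}}_l)=0$, or $\hat{\mathbf{X}}_k=a_k\tilde{\mathbf{Q}}_l$ with exactly the $a_k$ of part (b). In both cases the failure of the condition at $l$ pushes user $l$'s signal term to at most its threshold.
\end{itemize}
A convex combination of the two points then yields an optimal solution with every SINR constraint tight, contradicting degeneracy. Your fallback gestures at the second point, but it lacks both the auxiliary extremal problem and this interpolation step.
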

\begin{proof}
   See Appendix \ref{app_thm_equality} in the Supplementary Material.
\end{proof}

Theorem \ref{thm_equality}(b) implies that the necessary condition \eqref{eq_inequality_condition} can only be satisfied under highly restrictive settings. Specifically, it indicates that aligning all the communication beamforming vectors in the same direction as $\mathbf{h}_l$ for any $l\in[K]$ is sufficient to achieve the optimal solution. This can only happen when the transmit power $P_T$ is sufficiently high, the SINR thresholds $\Gamma_k$ are sufficiently low for all $k \in [K]$, and all channel vectors are oriented in nearly the same direction. In practical systems, especially those with large antenna arrays at the BS, this condition occurs with probability zero due to the inherent spatial diversity of wireless channels. Even in the rare case where it does occur, constructing an optimal solution remains straightforward, as outlined in Theorem \ref{thm_equality}(b). Therefore, in what follows, we will assume that problem \eqref{eq_extended_mu_isac_problem_wc_lds_sdr_brief} admits an optimal solution for which all SINR constraints are tight.


Based on Corollary \ref{coro_reduced} and Theorem \ref{thm_equality}, the original problem \eqref{eq_extended_mu_isac_problem_sdr} is equivalent to the following linear equality-constrained convex optimization problem:
 \begin{equation}\label{eq_extended_mu_isac_problem_wc_lds_sdr_brief11}
\begin{aligned}
 \min _{\{\mathbf{X}_k\}_{k=1}^K} &\operatorname{tr}\left(\tilde{\mathbf{R}}_X^{-1}\right) +\frac{(N_t-K)^2}{P_T-\operatorname{tr}\left(\tilde{\mathbf{R}}_X\right)}+\mathbb{I}_{\mathcal{X}}\left(\mathbf{X}\right) \\
\text {s.t. } ~~&\rho_k\operatorname{tr}\left(\tilde{\mathbf{Q}}_{k}\mathbf{X}_k\right)-\operatorname{tr}\left(\tilde{\mathbf{Q}}_{k}\tilde{\mathbf{R}}_X\right)= \sigma^2_C,\forall k,
\end{aligned}
\end{equation}
where $\mathbf{X}= \left[\mathbf{X}_1,\mathbf{X}_2,\ldots,\mathbf{X}_K\right]\in\mathbb{C}^{K\times K^2}$, $\mathcal{X}= \{\mathbf{X}\mid\sum_{k=1}^K\operatorname{tr}\left(\mathbf{X}_k\right) \leq P_T,\mathbf{X}_k\succeq\mathbf{0},\forall k\}$, and $\{ \tilde{\mathbf Q}_k \}_{k=1}^K$, $\tilde{\mathbf{R}}_X$ are defined in \eqref{eq_extended_mu_isac_problem_wc_lds_sdr_brief}.


\subsection{Proposed R-BAL Method}
In this subsection, we employ the BAL method to solve problem \eqref{eq_extended_mu_isac_problem_wc_lds_sdr_brief11}. As noted in Challenge 2, solving the $\mathbf{X}$-subproblem is computationally expensive, as it requires computing the proximal mapping of the objective function of problem \eqref{eq_extended_mu_isac_problem_wc_lds_sdr_brief11}.
To address this challenge, we propose a variable-splitting technique to reformulate the original problem into the following form, which is better suited for efficient algorithmic implementation:
\begin{equation}\label{eq_extended_mu_isac_problem_wc_lds_bal}
\begin{aligned}
 \min _{\mathbf{X},\mathbf{Y},\mathbf{Z}} ~&\operatorname{tr}\left(\mathbf{Y}^{-1}\right) +\frac{(N_t-K)^2}{P_T-\operatorname{tr}\left(\mathbf{Z}\right)}+\mathbb{I}_{\mathcal{X}}\left(\mathbf{X}\right) \\
\text {s.t. } ~~&\rho_k\operatorname{tr}\left(\tilde{\mathbf{Q}}_{k}\mathbf{X}_k\right)-\operatorname{tr}\left(\tilde{\mathbf{Q}}_{k}\mathbf{Y}\right) = ~ \sigma^2_C,\forall k,\\
&\sum_{k=1}^K\mathbf{X}_k-\mathbf{Y}=\mathbf{0},\\
&\mathbf{Y}-\mathbf{Z}=\mathbf{0}.
\end{aligned}
\end{equation}

\noindent Problem \eqref{eq_extended_mu_isac_problem_wc_lds_bal} can be equivalently reformulated as a linear equality-constrained convex optimization problem of the form \eqref{eq_linear_equality_constrained_cvx} by vectorizing the matrices $\mathbf{X}$, $\mathbf{Y}$, and $\mathbf{Z}$. Specifically, let $\mathbf{u}= \left[\mathbf{x}^T,\mathbf{y}^T,\mathbf{z}^T\right]^T\in \mathbb{C}^{K^3+2K^2}$ with 
\begin{equation}\notag
  \mathbf{x}=\left(\begin{array}{c}\operatorname{vec}\left(\mathbf{X}_1\right) \\ \vdots \\ \operatorname{vec}\left(\mathbf{X}_K\right)\end{array}\right) \in \mathbb{C}^{K^3},
\end{equation}
$\mathbf{y}=\operatorname{vec}(\mathbf{Y}) \in \mathbb{C}^{K^2}$, and  $\mathbf{z}=\operatorname{vec}(\mathbf{Z})\in \mathbb{C}^{K^2}$. Then, problem \eqref{eq_extended_mu_isac_problem_wc_lds_bal} takes the form \eqref{eq_linear_equality_constrained_cvx} with the following $f$, $\mathbf{D}$, and $\mathbf{b}$:
\begin{itemize}
\setlength{\leftmargin}{0pt} 
\setlength{\itemindent}{-0.25em}
\item The function $f$ is given by $f(\mathbf{u})=f_1(\mathbf{x})+f_2(\mathbf{y})+f_3(\mathbf{z})$ with $f_1(\mathbf{x})=\mathbb{I}_{\mathcal{X}}\left(\mathbf{X}\right)$, $f_2(\mathbf{y})=\tr(\mathbf{Y}^{-1})$, and $f_3(\mathbf{z})=\frac{(N_t-K)^2}{P_T-\operatorname{tr}\left(\mathbf{Z}\right)}$.

\item The coefficient matrix ${\mathbf{D}{=}[\mathbf{A},\mathbf{B},\mathbf{C}]{\in} \mathbb{C}^{(K{+}2K^2){\times} (K^3{+}2K^2)}}$ with
\begin{equation}\notag
  \mathbf{A}=\left(\begin{array}{c}\mathbf{A}_1 \\ \mathbf{A}_2 \\ \mathbf{A}_3\end{array}\right) \in \mathbb{C}^{(K+2K^2)\times K^3},
\end{equation}
\begin{equation}\notag
  \mathbf{B}=\left(\begin{array}{c}\mathbf{B}_1 \\ - \mathbf{I}_{K^2}\\ \mathbf{I}_{K^2}\end{array}\right) \in \mathbb{C}^{(K+2K^2)\times K^2},
\end{equation}
and 
\begin{equation}\notag
  \mathbf{C}=\left(\begin{array}{c}\mathbf{0} \\  \mathbf{0}_{K^2} \\ - \mathbf{I}_{K^2}\end{array}\right) \in \mathbb{C}^{(K+2K^2)\times K^2},
\end{equation}
where 
\begin{equation}\notag
\begin{aligned}
\mathbf{A}_1\!=\!&\left(\begin{array}{cccc}
\rho_1 \operatorname{vec}(\tilde{\mathbf{Q}}_1)^{H} & \!\!\!\!\mathbf{0} &\!\!\!\! \cdots &\!\!\!\! \mathbf{0} \\
\mathbf{0} & \!\!\!\!\rho_2 \operatorname{vec}(\tilde{\mathbf{Q}}_2)^{H} & \!\!\!\!\cdots & \!\!\!\!\mathbf{0} \\
\vdots &\!\!\!\! \vdots& \!\!\!\!\vdots & \!\!\!\!\vdots \\
\mathbf{0} &\!\!\!\! \mathbf{0} & \!\!\!\!\cdots & \!\!\!\!\rho_K \operatorname{vec}(\tilde{\mathbf{Q}}_K)^{H} 
\end{array}\!\right)\\
&\quad\quad\quad\quad\quad\quad\quad\quad\quad\quad\quad\quad\quad\quad\quad\quad\in \mathbb{C}^{K\times K^3},
\end{aligned}
\end{equation}
$\mathbf{A}_2=\mathbf{1}_{K}^T\otimes \mathbf{I}_{K^2}\in \mathbb{C}^{K^2\times K^3},~\mathbf{A}_3=\mathbf{0}\in \mathbb{C}^{K^2\times K^3}$, and $\mathbf{B}_1=-[\operatorname{vec}(\tilde{\mathbf{Q}}_1),\operatorname{vec}(\tilde{\mathbf{Q}}_2),\ldots,\operatorname{vec}(\tilde{\mathbf{Q}}_K)]^H$.

\item The vector
\begin{equation}\notag
  \mathbf{b}=\left(\begin{array}{c}\sigma_C^2\mathbf{1}_K \\ \mathbf{0}_{K^2} \\ \mathbf{0}_{K^2}\end{array}\right) \in \mathbb{C}^{(K+2K^2)\times 1}.
\end{equation}
\end{itemize}
Now, we are ready to present the BAL-based method. Note that the main computation in \eqref{eq_BALM} lies in the $\mathbf{u}$- and $\boldsymbol{\lambda}$-subproblems. Let us study them in turn.

\subsubsection{$\mathbf{u}$-subproblem}
The $\mathbf{u}$-subproblem is separable in $\mathbf{X}$, $\mathbf{Y}$, and $\mathbf{Z}$. Thus, we need to solve the following three subproblems:
\begin{equation}\label{eq_X_subproblem}
     \min _{\mathbf{X}} ~\mathbb{I}_{\mathcal{X}}\left(\mathbf{X}\right) + \frac{1}{2\tau}\|\mathbf{X}-\widetilde{\mathbf{X}}\|_F^2.
\end{equation}
\begin{equation}\label{eq_Y_subproblem}
     \min _{\mathbf{Y}} ~\operatorname{tr}\left(\mathbf{Y}^{-1}\right) + \frac{1}{2\tau}\|\mathbf{Y}-\widetilde{\mathbf{Y}}\|_F^2.
\end{equation}
\begin{equation}\label{eq_Z_subproblem}
     \min _{\mathbf{Z}} ~\frac{(N_t-K)^2}{P_T-\operatorname{tr}\left(\mathbf{Z}\right)} + \frac{1}{2\tau}\|\mathbf{Z}-\widetilde{\mathbf{Z}}\|_F^2.
\end{equation}
As Propositions \ref{pro_subproblem_X}--\ref{pro_subproblem_Z} show, the above subproblems can be efficiently solved.

\begin{proposition}\label{pro_subproblem_X}
    Let $\widetilde{\mathbf{X}}_k=\mathbf{U}_k\boldsymbol{\Sigma}_k\mathbf{U}_k^H$ denote the eigenvalue decomposition of $\widetilde{\mathbf{X}}_k$ for $k \in [K]$, where $\mathbf{U}_k\in\mathbb{C}^{N\times N}$ is a unitary matrix and $\boldsymbol{\Sigma}_k$ is a real diagonal matrix. Then, the optimal solution to the $\mathbf{X}$-subproblem \eqref{eq_X_subproblem} is given by
    \begin{equation}\label{eq_update_X1}
        \mathbf{X}_k=\mathbf{U}_k\boldsymbol{\Lambda}_k\mathbf{U}_k^H,\,\forall k.
    \end{equation}
    Here, $\boldsymbol{\Lambda}_k$'s are diagonal matrices given by 
    \begin{equation}\label{eq_update_X2}
   [\boldsymbol{\Lambda}_k]_{l,l}=\operatorname{max}\left\{[\boldsymbol{\Sigma}_k]_{l,l}-\frac{\gamma}{2},0\right\}, \,\forall l 
\end{equation}
and $\gamma$ is the smallest nonnegative value satisfying $g\left(\gamma\right)\leq P_T$, where $\gamma \mapsto g(\gamma) = \sum_{k=1}^K \sum_{l=1}^K \max \left\{ [ {\mathbf \Sigma}_k ]_{l,l} - \frac{\gamma}{2}, 0 \right\}$ is a piecewise linear, decreasing function on $\gamma \ge 0$.
\end{proposition}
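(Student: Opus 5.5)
We propose to prove Proposition \ref{pro_subproblem_X} by writing down the KKT conditions of the projection problem \eqref{eq_X_subproblem} and solving them explicitly. The subproblem is the Euclidean projection of $\widetilde{\mathbf{X}}=[\widetilde{\mathbf{X}}_1,\ldots,\widetilde{\mathbf{X}}_K]$ onto the closed convex set $\mathcal{X}$, so it has a unique minimizer. We may assume each $\widetilde{\mathbf{X}}_k$ is Hermitian, since in the BAL update it is formed from Hermitian iterates; otherwise we replace it by its Hermitian part, which only shifts the objective by a constant over Hermitian $\mathbf{X}_k$.

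The plan proceeds in three steps.
\begin{enumerate}[(i)]
\item \textbf{Dualize only the trace constraint.} Introduce a multiplier $\gamma/2\ge 0$ for $\sum_k\operatorname{tr}(\mathbf{X}_k)\le P_T$, scaled so that the shift appears as $\gamma/2$ as in \eqref{eq_update_X2}. For fixed $\gamma$ the Lagrangian separates across $k$ into
\begin{equation}\notag
\min_{\mathbf{X}_k\succeq\mathbf{0}}\ \frac{1}{2\tau}\|\mathbf{X}_k-\widetilde{\mathbf{X}}_k\|_F^2+\frac{\gamma}{2}\operatorname{tr}(\mathbf{X}_k).
\end{equation}
Completing the square turns this into a projection of $\widetilde{\mathbf{X}}_k-c\gamma\mathbf{I}$ onto the PSD cone, where $c$ depends on the scaling convention. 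Adjusting the scaling of the multiplier makes the shift exactly $\gamma/2$. Note that $\tau$ can be absorbed into $\gamma$, since $\gamma$ is ultimately defined by $g(\gamma)\le P_T$ and not by its numerical relation to $\tau$.
\item \textbf{Project onto the PSD cone.} Shifting by a multiple of $\mathbf{I}$ preserves the eigenvectors. The PSD projection of a Hermitian matrix is obtained by clipping its negative eigenvalues, which follows from the von Neumann trace inequality or unitary invariance of the Frobenius norm. This yields \eqref{eq_update_X1}--\eqref{eq_update_X2} for the given $\gamma$.
\item \textbf{Select $\gamma$ by complementary slackness.} The trace of the resulting $\mathbf{X}_k(\gamma)$ sums to $g(\gamma)$. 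This function is continuous, piecewise linear, and nonincreasing, and it tends to $0$ as $\gamma\to\infty$, so a feasible $\gamma$ exists.
\begin{itemize}
\item If $g(0)\le P_T$, take $\gamma=0$. The constraint is then inactive and the solution is the plain PSD projection.
\item Otherwise, the smallest $\gamma$ with $g(\gamma)\le P_T$ satisfies $g(\gamma)=P_T$ by continuity, so complementary slackness holds.
\end{itemize}
\end{enumerate}

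To conclude, we verify that the constructed pair satisfies the full KKT system: primal feasibility, $\gamma\ge0$, complementary slackness, and stationarity, with PSD multipliers $\mathbf{U}_k\mathbf{M}_k\mathbf{U}_k^H$ supported on the clipped eigenvalues. Since Slater's condition holds (for example at $\mathbf{X}_k=\epsilon\mathbf{I}$ with small $\epsilon>0$), the KKT conditions are sufficient, and the constructed point is the optimum.

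The main subtle step is the PSD projection in step (ii). A clean argument is the following. For any PSD $\mathbf{X}$, write $\|\mathbf{X}-\mathbf{A}\|_F^2=\|\mathbf{X}\|_F^2-2\operatorname{tr}(\mathbf{X}\mathbf{A})+\|\mathbf{A}\|_F^2$. By the von Neumann trace inequality, $\operatorname{tr}(\mathbf{X}\mathbf{A})\le\sum_l\lambda_l(\mathbf{X})\lambda_l(\mathbf{A})$, so it suffices to minimize over nonnegative eigenvalues aligned with those of $\mathbf{A}$. That minimization is scalar and gives $\max\{\cdot,0\}$.

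A secondary point is the monotonicity of $g$: each term of $g$ is nonincreasing in $\gamma$. Together with continuity, this makes "smallest $\gamma$" well defined, and it can be computed by sorting the $K^2$ eigenvalues, which also gives the complexity claim.
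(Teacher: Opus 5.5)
Your argument is correct and is essentially the paper's: both reduce to soft-thresholding the eigenvalues of $\widetilde{\mathbf{X}}_k$ at $\gamma/2$ and fix $\gamma$ by complementary slackness through the same two cases, $\gamma=0$ or $g(\gamma)=P_T$. The only real difference is the order of steps. The paper first uses unitary invariance to restrict $\mathbf{U}_k^H\mathbf{X}_k\mathbf{U}_k$ to be diagonal (zeroing its off-diagonal entries keeps it PSD with the same trace and only reduces the distance to $\boldsymbol{\Sigma}_k$) and then writes scalar KKT conditions, whereas you dualize the trace constraint at the matrix level and get the clipping from von Neumann's inequality; incidentally, Slater's condition is what makes KKT necessary, not sufficient, so that remark is harmless but unneeded.
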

\begin{proof}
    See Appendix \ref{app_pro_subproblem_X} in the Supplementary Material.
\end{proof}

\begin{proposition}\label{pro_subproblem_Y}
    Let $\widetilde{\mathbf{Y}}=\mathbf{U}\boldsymbol{\Sigma}\mathbf{U}^H$ denote the eigenvalue decomposition of $\widetilde{\mathbf{Y}}$. Then, the optimal solution to the $\mathbf{Y}$-subproblem \eqref{eq_Y_subproblem} is given by
    \begin{equation}\label{eq_update_Y}
        \mathbf{Y}=\mathbf{U}\boldsymbol{\Lambda}\mathbf{U}^H,
    \end{equation}
    where $\boldsymbol{\Lambda}$ is a diagonal matrix with $[\boldsymbol{\Lambda}]_{k,k}$ being the unique positive root of the equation $x^3-[\boldsymbol{\Sigma}]_{k,k} x^2-\tau=0$.
\end{proposition}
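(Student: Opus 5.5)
The plan is to treat \eqref{eq_Y_subproblem} as the proximal mapping of the spectral function $\mathbf{Y}\mapsto\operatorname{tr}(\mathbf{Y}^{-1})$, whose effective domain is the Hermitian positive definite cone. Such a prox is computed by applying the scalar prox to the eigenvalues of $\widetilde{\mathbf{Y}}$ while keeping its eigenvectors. I take $\widetilde{\mathbf{Y}}$ to be Hermitian, as it is in the algorithm after symmetrization. The first step notes that the objective $F(\mathbf{Y})=\operatorname{tr}(\mathbf{Y}^{-1})+\frac{1}{2\tau}\|\mathbf{Y}-\widetilde{\mathbf{Y}}\|_F^2$ is strictly convex on $\mathbf{Y}\succ\mathbf{0}$, being convex plus strongly convex. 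It is also coercive and tends to $+\infty$ at the boundary of the cone, so the minimizer exists, is unique, and is interior. Hence it is characterized by the first-order condition
\begin{equation}\notag
-\mathbf{Y}^{-2}+\tfrac{1}{\tau}(\mathbf{Y}-\widetilde{\mathbf{Y}})=\mathbf{0}.
\end{equation}

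The second step is verification. Substitute the candidate $\mathbf{Y}=\mathbf{U}\boldsymbol{\Lambda}\mathbf{U}^H$ with $\boldsymbol{\Lambda}\succ\mathbf{0}$ diagonal. The condition becomes $\mathbf{U}\big(-\boldsymbol{\Lambda}^{-2}+\tau^{-1}(\boldsymbol{\Lambda}-\boldsymbol{\Sigma})\big)\mathbf{U}^H=\mathbf{0}$. This decouples into the scalar equations $-x^{-2}+\tau^{-1}(x-\sigma_k)=0$ with $x=[\boldsymbol{\Lambda}]_{k,k}$ and $\sigma_k=[\boldsymbol{\Sigma}]_{k,k}$. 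Multiplying by $\tau x^2$ gives $x^3-\sigma_k x^2-\tau=0$. By uniqueness of the minimizer, the candidate is the solution once each scalar equation has a positive root.

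The remaining step is to show that $p(x)=x^3-\sigma_k x^2-\tau$ has exactly one positive root, for any sign of $\sigma_k$. Existence is easy: $p(0)=-\tau<0$ and $p(x)\to\infty$ as $x\to\infty$. For uniqueness, observe that any positive root satisfies $x^2(x-\sigma_k)=\tau>0$, so it lies in the region $x>\max\{\sigma_k,0\}$. On that region $p'(x)=x(3x-2\sigma_k)>0$, so $p$ is strictly increasing there. Alternatively, one can use the equivalent form $h(x)=x-\tau x^{-2}$, which is strictly increasing on $(0,\infty)$, together with $h(x)=\sigma_k$. I expect this to be the only point needing care. The mild subtlety elsewhere is justifying that the minimizer is interior, so that the unconstrained gradient condition applies; this follows from $\operatorname{tr}(\mathbf{Y}^{-1})\to\infty$ as $\mathbf{Y}$ approaches singularity.
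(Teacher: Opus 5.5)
Your proof is correct, but it takes a different route from the paper's.

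The paper argues ``similar to the $\mathbf{X}$-subproblem.'' It appeals to unitary invariance to assert that the optimal $\mathbf{U}^H\mathbf{Y}\mathbf{U}$ is diagonal. It then reduces to the separable scalar problems $\min_{x>0}\, 1/x+\frac{1}{2\tau}(x-[\boldsymbol{\Sigma}]_{k,k})^2$, sets the derivative to zero to obtain the cubic, and deduces a unique positive root from $h(0)<0$, $h'(0)=0$, $h(+\infty)=+\infty$.

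You instead work at the matrix level. Strict convexity and the barrier behavior of $\operatorname{tr}(\mathbf{Y}^{-1})$ give a unique interior minimizer characterized by $-\mathbf{Y}^{-2}+\tau^{-1}(\mathbf{Y}-\widetilde{\mathbf{Y}})=\mathbf{0}$, and you check that the spectral candidate satisfies this equation. Your route buys three things:
\begin{enumerate}
\item It never needs the diagonalization claim. For the $\mathbf{Y}$-subproblem this claim is not literally the same as for $\mathbf{X}$: it requires something like $[\mathbf{M}^{-1}]_{k,k}\ge 1/[\mathbf{M}]_{k,k}$ for $\mathbf{M}\succ\mathbf{0}$, which the paper leaves implicit.
\item It gives uniqueness of the minimizer for free. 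Since the gradient condition is sufficient for a convex differentiable function on an open domain, you do not even need the coercivity/existence step once you exhibit the candidate.
\item Your root-uniqueness argument, via the strictly increasing map $x\mapsto x-\tau x^{-2}$, is airtight. The paper's argument tacitly relies on the shape of the cubic, whose critical points are $0$ and $2[\boldsymbol{\Sigma}]_{k,k}/3$.
\end{enumerate}
The paper's route, in turn, is more elementary: once the reduction to eigenvalues is granted, everything is one-variable calculus.
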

\begin{proof}
    See Appendix \ref{app_pro_subproblem_Y} in the Supplementary Material.
\end{proof}

\begin{proposition}\label{pro_subproblem_Z}
    Let $\widetilde{\mathbf{Z}}=\mathbf{U}\boldsymbol{\Sigma}\mathbf{U}^H$ denote the eigenvalue decomposition of $\widetilde{\mathbf{Z}}$. Then, the optimal solution to the $\mathbf{Z}$-subproblem \eqref{eq_Z_subproblem} is given by
    \begin{equation}\label{eq_update_Z1}
        \mathbf{Z}=\mathbf{U}\boldsymbol{\Lambda}\mathbf{U}^H,
    \end{equation}
    where $\boldsymbol{\Lambda}$ is a diagonal matrix with
    \begin{equation}\label{eq_update_Z2}
   [\boldsymbol{\Lambda}]_{k,k}=\operatorname{max}\left\{[\boldsymbol{\Sigma}]_{k,k}-\lambda,0\right\}
\end{equation}
and $\lambda$ is the unique root of the equation
\begin{equation}\label{eq_update_Z3}
    \lambda\left(\!P_T-\sum_{k=1}^K\operatorname{max}\left\{[\boldsymbol{\Sigma}]_{k,k}\!-\lambda,0\right\}\right)^2\!-\tau(N_t-K)^2=0.
\end{equation}
\end{proposition}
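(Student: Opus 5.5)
The argument has three steps: reduce the $\mathbf{Z}$-subproblem \eqref{eq_Z_subproblem} to a problem in its eigenvalues, solve that scalar problem through its KKT conditions, and show the multiplier equation \eqref{eq_update_Z3} has exactly one root.

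\textbf{Spectral reduction.} The objective depends on $\mathbf{Z}$ only through $\operatorname{tr}(\mathbf{Z})$ and $\|\mathbf{Z}-\widetilde{\mathbf{Z}}\|_F^2$. Here $\widetilde{\mathbf{Z}}$ is Hermitian and $\mathbf{Z}$ is tied to $\mathbf{Y}\succ\mathbf{0}$, so we treat $\mathbf{Z}$ as Hermitian and positive semidefinite on the domain. This is the implicit constraint consistent with the clipping at zero in \eqref{eq_update_Z2}; with no semidefinite constraint the clipping would not appear. We also restrict to $P_T-\operatorname{tr}(\mathbf{Z})>0$.

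Write $\mathbf{Z}=\mathbf{U}^H$-conjugates as $\mathbf{M}=\mathbf{U}^H\mathbf{Z}\mathbf{U}$. The trace and the Frobenius distance to $\boldsymbol{\Sigma}$ are unitarily invariant. Replacing $\mathbf{M}$ by its diagonal keeps the trace and positive semidefiniteness and does not increase $\|\mathbf{M}-\boldsymbol{\Sigma}\|_F$, since the off-diagonal entries only add nonnegative terms. Hence an optimal $\mathbf{Z}$ has the form \eqref{eq_update_Z1} with $\boldsymbol{\Lambda}$ diagonal, and we are left with
\begin{equation}\notag
\min_{x_k\ge 0,\ \sum_k x_k<P_T}\ \frac{(N_t-K)^2}{P_T-\sum_k x_k}+\frac{1}{2\tau}\sum_{k=1}^K \left(x_k-\sigma_k\right)^2,
\end{equation}
where $\sigma_k=[\boldsymbol{\Sigma}]_{k,k}$.

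\textbf{KKT conditions.} This scalar problem is strictly convex: the first term is convex in $\sum_k x_k$ on its domain, and the quadratic term is strictly convex. It is also coercive, since the first term blows up as $\sum_k x_k\uparrow P_T$. So it has a unique minimizer, characterized by KKT.

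Define $s=\sum_k x_k$ and $\lambda=\tau(N_t-K)^2/(P_T-s)^2>0$. Stationarity gives $x_k=\sigma_k-\lambda+\mu_k$ with $\mu_k\ge0$ and $\mu_kx_k=0$. This yields $x_k=\max\{\sigma_k-\lambda,0\}$, which is \eqref{eq_update_Z2}. Substituting back into the definition of $\lambda$ gives
\begin{equation}\notag
\lambda\left(P_T-\sum_k\max\{\sigma_k-\lambda,0\}\right)^2=\tau(N_t-K)^2,
\end{equation}
which is \eqref{eq_update_Z3}, subject to the side condition $P_T-s(\lambda)>0$.

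\textbf{Uniqueness of the root.} This is the main point to get right. Let $s(\lambda)=\sum_k\max\{\sigma_k-\lambda,0\}$, which is nonincreasing and continuous in $\lambda$. Then $P_T-s(\lambda)$ is nondecreasing, and on the admissible region where it is positive, $\phi(\lambda)=\lambda(P_T-s(\lambda))^2$ is strictly increasing, because $\lambda$ itself is strictly increasing and positive there.

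For existence, note that $\phi$ tends to $0$ at the left endpoint of the admissible interval. That endpoint is either $\lambda=0$, or the point where $P_T-s(\lambda)=0$. On the other side, $\phi(\lambda)\to\infty$ as $\lambda\to\infty$, since $s(\lambda)=0$ for large $\lambda$. Continuity then gives exactly one positive root with $P_T-s>0$.

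Roots outside the admissible region need to be excluded. Where $P_T-s(\lambda)<0$, $\phi$ can also vanish or take positive values, so such spurious roots must be ruled out. This is done by the side condition $P_T>s$, which is inherited from the domain of the objective. I read ``unique root'' in the proposition as unique within this admissible set, and I would state that explicitly in the proof.
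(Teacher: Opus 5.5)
Your proof is correct and follows the same route as the paper: you diagonalize in the eigenbasis of $\widetilde{\mathbf{Z}}$ with $[\boldsymbol{\Lambda}]_{k,k}\ge 0$ imposed (the paper adds this constraint just as implicitly in its scalar reduction), read off \eqref{eq_update_Z2} from the KKT conditions, and substitute to obtain \eqref{eq_update_Z3}. Your restriction to the admissible set $P_T-\sum_k\max\{[\boldsymbol{\Sigma}]_{k,k}-\lambda,0\}>0$ is more careful than the paper, which claims the left-hand side of \eqref{eq_update_Z3} is increasing in $\lambda$ and concludes only that a positive root exists; that monotonicity fails whenever $\sum_k\max\{[\boldsymbol{\Sigma}]_{k,k},0\}>P_T$ (e.g., $K=1$, $[\boldsymbol{\Sigma}]_{1,1}=100$, $P_T=1$, $\tau(N_t-K)^2=1$ gives three positive roots, of which only the one near $99.1$ is admissible), so uniqueness holds only on the admissible interval, exactly as you state.
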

\begin{proof}
    See Appendix \ref{app_pro_subproblem_Z} in the Supplementary Material.
\end{proof}


\subsubsection{$\boldsymbol{\lambda}$-subproblem}
Upon examining the third line of \eqref{eq_BALM}, we observe that the key step there involves explicitly expressing the inverse of the matrix $\mathbf{D}\mathbf{D}^H+\delta\mathbf{I}_{K+2K^2}$. At first glance, computing this inverse would incur a complexity of $\mathcal{O}((K + K^2)^3)$. However, by exploiting the block structure of the matrices $\mathbf{A}$, $\mathbf{B}$, and $\mathbf{C}$, we can reduce the complexity to $\mathcal{O}(K^3)$. We provide the details below.

First, through straightforward calculations, we obtain
\begin{equation}\label{eq_DD}
\begin{aligned}
&\mathbf{D}\mathbf{D}^H+\delta\mathbf{I}_{K+2K^2}=\\&\left(\begin{array}{ccc}
\mathbf{A}_1\mathbf{A}_1^{H}+ |\mathbf{H}^H\mathbf{H}|^2+\delta \mathbf{I}_K & \mathbf{T}_{12} & \mathbf{B}_1 \\
\mathbf{T}_{12}^H & \alpha\mathbf{I}_{K^2} & -\mathbf{I}_{K^2} \\
\mathbf{B}_1^H & -\mathbf{I}_{K^2}& \beta\mathbf{I}_{K^2}
\end{array}\right),
\end{aligned}
\end{equation}
where $\mathbf{A}_1 \mathbf{A}_1^{H}=\operatorname{Diag}\left(\rho_1^2\left\|\mathbf{h}_1\right\|^4, \ldots, \rho_K^2\left\|\mathbf{h}_K\right\|^4\right)$ is a diagonal matrix, $|\mathbf{H}^H\mathbf{H}|^2$ denotes the matrix obtained by taking square modulus of each element in $\mathbf{H}^H\mathbf{H}$, $\alpha= K+1+\delta$, $\beta= \delta+2$, and
\begin{equation}\notag
\mathbf{T}_{12}=\left(\begin{array}{c}
\left(\rho_1+1\right) \operatorname{vec}\left(\tilde{\mathbf{Q}}_1\right)^{H} \\
\vdots \\
\left(\rho_K+1\right) \operatorname{vec}\left(\tilde{\mathbf{Q}}_K\right)^{H}
\end{array}\right).
\end{equation}

\noindent Now, let $\mathbf{S}_{12}= [\mathbf{T}_{12},\mathbf{B}_{1}]$ and $\mathbf{S}_{22} $ denote the $2\times 2$ block matrix in the lower-right corner of \eqref{eq_DD}. Although the dimension of $\mathbf{S}_{22}$ is $2K^2 \times 2K^2$, which appears large, its inverse can be computed efficiently due to its special structure. By applying the block matrix inversion formula in \cite[Section 9.1.3]{petersen2008matrix}, we obatin
\begin{equation}\begin{aligned}\label{eq_DDH_inverse} 
   &\left(\mathbf{D}\mathbf{D}^H+\delta\mathbf{I}_{K+2K^2}\right)^{-1}\\ &\!\!=\!\!\left(\!\!\!\begin{array}{ccc}
\mathbf{L}^{-1}& -\mathbf{L}^{-1}\mathbf{V}_{1} & \!\!\! -\mathbf{L}^{-1}\mathbf{V}_{2}\\
-\mathbf{V}_{1}^H\mathbf{L}^{-1} & \!\!\kappa\beta\mathbf{I}_{K^2}{+}\mathbf{V}_{1}^H\mathbf{L}^{-1}\mathbf{V}_{1}&\!\!\kappa\mathbf{I}_{K^2}{+}\mathbf{V}_{1}^H\mathbf{L}^{-1}\mathbf{V}_{2} \\-\mathbf{V}_{2}^H\mathbf{L}^{-1}&\!\kappa\mathbf{I}_{K^2}{+}\mathbf{V}_{2}^H\mathbf{L}^{-1}\mathbf{V}_{1}&\!\!\kappa\alpha\mathbf{I}_{K^2}{+}\mathbf{V}_{2}^H\mathbf{L}^{-1}\mathbf{V}_{2}
\end{array}\!\!\!\right),
\end{aligned}\end{equation}
where
\begin{equation}\notag
\begin{aligned}
   \mathbf{L}= &\mathbf{A}_1\mathbf{A}_1^{H}+ |\mathbf{H}^H\mathbf{H}|^2+\delta \mathbf{I}_K- \mathbf{S}_{12} \mathbf{S}_{22}^{-1} \mathbf{S}_{12}^H\\
   =& \delta \mathbf{I}_K+|\mathbf{H}^H\mathbf{H}|^2\odot\left(\operatorname{Diag}(\boldsymbol{\rho}\odot \boldsymbol{\rho})+\mathbf{1}_K \mathbf{1}_K^H-\kappa\mathbf{P}\right),
   \end{aligned}
\end{equation}
$\boldsymbol{\rho}=[\rho_1,\rho_2,\ldots,\rho_K]^T$, 
\begin{equation}\notag
\begin{aligned}
\mathbf{P}=&\left(\delta+2\right)\left(\boldsymbol{\rho}+\mathbf{1}_K\right)\left(\boldsymbol{\rho}+\mathbf{1}_K\right)^H-\left(\boldsymbol{\rho}+\mathbf{1}_K\right)\mathbf{1}_K^H\\ &-\mathbf{1}_K\left(\boldsymbol{\rho}+\mathbf{1}_K\right)^H+\left(\delta+K+1\right)\mathbf{1}_K\mathbf{1}_K^H,
    \end{aligned}
\end{equation}
and
\begin{equation}\notag
    \kappa=\frac{1}{\left(\delta+K+1\right)\left(\delta+2\right)-1}.
\end{equation}
In addition, we have
\begin{equation}\notag
    \mathbf{V}_{1}=\kappa\beta\mathbf{T}_{12}+\kappa\mathbf{B}_1=\boldsymbol{\Theta}_1\mathbf{B}_1,
\end{equation}
\begin{equation}\notag
    \mathbf{V}_{2}=\kappa\mathbf{T}_{12}+\kappa\alpha\mathbf{B}_1=\boldsymbol{\Theta}_2\mathbf{B}_1,
\end{equation}
where $\boldsymbol{\Theta}_1$ and $\boldsymbol{\Theta}_2$ are diagonal matrices whose $k$-th diagonal elements are given by $[\boldsymbol{\Theta}_1]_{k,k}=\kappa(1-\beta\rho_k-\beta)$ and $[\boldsymbol{\Theta}_2]_{k,k}=\kappa(\alpha-\rho_k-1)$, respectively.
 Note that only a low-dimensional inversion $\mathbf{L}^{-1}$ is involved in \eqref{eq_DDH_inverse}, and its computational complexity is $\mathcal{O}(K^3)$. In addition, this inverse operation only needs to be computed once and can be reused throughout all iterations.

With the above preparations, after obtaining the iterate ${\mathbf u}^{t+1}$, we can obtain the iterates ${\mathbf p}^{t+1}$ and ${\boldsymbol \lambda}^{t+1}$ as follows. First, define 
$r_k^{t+1}=\rho_k\operatorname{tr}\left(\tilde{\mathbf{Q}}_{k}\left(2\mathbf{X}_k^{t+1}-\mathbf{X}_k^{t}\right)\right)-\operatorname{tr}\left(\tilde{\mathbf{Q}}_{k}\left(2\mathbf{Y}^{t+1}-\mathbf{Y}^{t}\right)\right) -\sigma^2_C, \, \forall k$. Then, define $\mathbf{R}_1^{t+1}=\sum_{k=1}^K\left(2\mathbf{X}_k^{t+1}-\mathbf{X}_k^{t}\right)-\left(2\mathbf{Y}^{t+1}-\mathbf{Y}^{t}\right)$ and $\mathbf{R}_2^{t+1}=2\mathbf{Y}^{t+1}-\mathbf{Y}^{t}-2\mathbf{Z}^{t+1}+\mathbf{Z}^{t}$. We then have ${\mathbf p}^{t+1}=[({\mathbf r}^{t+1})^T,\operatorname{vec}(\mathbf{R}_1^{t+1})^T,\operatorname{vec}(\mathbf{R}_2^{t+1})^T]^T$.

Next, define $\tilde{\mathbf{H}}=\tilde{\mathbf{U}}^H\mathbf{H}$ and
\begin{equation}\label{eq_update_lambda}
\begin{aligned}
\boldsymbol{\mu}^{t+1}=\boldsymbol{\mu}^{t} + (\tau \mathbf{L})^{-1}\left(\mathbf{r}^{t+1}+\boldsymbol{\Theta}_1\operatorname{diag}\left(\tilde{\mathbf{H}}^H\mathbf{R}_1^{t+1}\tilde{\mathbf{H}}\right)\right.\\
    ~~+\left.\boldsymbol{\Theta}_2\operatorname{diag}\left(\tilde{\mathbf{H}}^H\mathbf{R}_2^{t+1}\tilde{\mathbf{H}}\right)\right),
    \end{aligned}
\end{equation}
\begin{equation}\label{eq_update_lambda_1}
\begin{aligned}
\boldsymbol{\Omega}_1^{t+1}=\,\,&\boldsymbol{\Omega}_1^{t}+\tau^{-1}\left(\kappa\beta\mathbf{R}_1^{t+1}+\kappa\mathbf{R}_2^{t+1}\right)\\
&~~~~~~~~~+\tilde{\mathbf{H}}\boldsymbol{\Theta}_1\operatorname{Diag}\left(\boldsymbol{\mu}^{t+1}-\boldsymbol{\mu}^t\right)\tilde{\mathbf{H}}^H,
\end{aligned}
\end{equation}
\begin{equation}\label{eq_update_lambda_2}
\begin{aligned}
\boldsymbol{\Omega}_2^{t+1}=\,\,&\boldsymbol{\Omega}_2^{t}+\tau^{-1}\left(\kappa\mathbf{R}_1^{t+1}+\kappa\alpha\mathbf{R}_2^{t+1}\right)\\
&~~~~~~~~~+\tilde{\mathbf{H}}\boldsymbol{\Theta}_2\operatorname{Diag}\left(\boldsymbol{\mu}^{t+1}-\boldsymbol{\mu}^t\right)\tilde{\mathbf{H}}^H.
\end{aligned}
\end{equation}
We then have ${\boldsymbol \lambda^{t+1}{=}[\operatorname{vec}(\boldsymbol{\mu}^{t+1}),\operatorname{vec}(\boldsymbol{\Omega}_1^{t+1})^T,\operatorname{vec}(\boldsymbol{\Omega}_2^{t+1})^T]^T}$.



The overall procedure is summarized in Algorithm \ref{alg_RBAL}. Once the output of the R-BAL method is obtained, Theorem \ref{thm_extract_rank1} can be applied to extract a rank-one solution, which in turn yields an optimal solution to the original problem \eqref{eq_extended_mu_isac_problem}.  The dominant computational cost of  Algorithm \ref{alg_RBAL} arises from the updates of the primal variables $\mathbf{X}^{t+1}$, $\mathbf{Y}^{t+1}$,  $\mathbf{Z}^{t+1}$ and the dual variables $\boldsymbol{\mu}^{t+1}$, $\boldsymbol{\Omega}_1^{t+1}$,  $\boldsymbol{\Omega}_2^{t+1}$. Specifically, updating the primal variables requires $K+2$ eigenvalue decompositions of $K\times K$ Hermitian matrices, resulting in a total complexity of $\mathcal{O}(K^4)$ \cite{golub1996matrix}. Meanwhile, updating the dual variables involves inverting a $K\times K$ matrix $\mathbf{L}$, which incurs a complexity of $\mathcal{O}(K^3)$. Overall, the per-iteration computational complexity of the proposed R-BAL method is $\mathcal{O}(K^4)$.
\begin{algorithm}[!tb]
    \renewcommand{\algorithmicrequire}{\textbf{Input:}}
    \renewcommand{\algorithmicensure}{\textbf{Output:}}
    \caption{R-BAL Method for Solving Problem \eqref{eq_extended_mu_isac_problem_wc_lds_sdr_brief11}}  \label{alg_RBAL}
    \begin{algorithmic}[1]
    \REQUIRE Initialize $\mathbf{X}^0$, $\boldsymbol{\mu}^0$, $\boldsymbol{\Omega}_1^0$, and $\boldsymbol{\Omega}_2^0$. Set the stepsize $\tau$.
    \FOR{$t=0,1,\ldots$}
    \STATE $\widetilde{\mathbf{X}}_{k}^{t} = \mathbf{X}_k^t-\tau\left(\rho_k\mu^t_k\tilde{\mathbf{Q}}_k+\boldsymbol{\Omega}_1^t\right),\forall k$;\\
    \STATE  Update $\mathbf{X}^{t+1} $ by \eqref{eq_update_X1} and \eqref{eq_update_X2};
    \STATE   $\widetilde{\mathbf{Y}}^{t} = \mathbf{Y}^t+\tau\left(\sum_{k=1}^K\mu_k^t\tilde{\mathbf{Q}}_k+\boldsymbol{\Omega}_1^t-\boldsymbol{\Omega}_2^t\right)$;
     \STATE Update $\mathbf{Y}^{t+1}$ by \eqref{eq_update_Y};
     \STATE $\widetilde{\mathbf{Z}}^{t} = \mathbf{Z}^t-\tau\boldsymbol{\Omega}_2^t$;
     \STATE Update $\mathbf{Z}^{t+1}$ by \eqref{eq_update_Z1}--\eqref{eq_update_Z3};
     \STATE  Update $\mathbf{r}^{t+1} $ by $r_k^{t+1}=\rho_k\operatorname{tr}\left(\tilde{\mathbf{Q}}_{k}\left(2\mathbf{X}_k^{t+1}-\mathbf{X}_k^{t}\right)\right)-\operatorname{tr}\left(\tilde{\mathbf{Q}}_{k}\left(2\mathbf{Y}^{t+1}-\mathbf{Y}^{t}\right)\right) -\sigma^2_C, \, \forall k$;
     \STATE Update ${\mathbf R}_1^{t+1}$ by $\mathbf{R}_1^{t+1}=\sum_{k=1}^K\left(2\mathbf{X}_k^{t+1}-\mathbf{X}_k^{t}\right)-\left(2\mathbf{Y}^{t+1}-\mathbf{Y}^{t}\right)$;
     \STATE Update ${\mathbf R}_2^{t+1}$ by $\mathbf{R}_2^{t+1}=2\mathbf{Y}^{t+1}-\mathbf{Y}^{t}-2\mathbf{Z}^{t+1}+\mathbf{Z}^{t}$;
     \STATE Update $\boldsymbol{\mu}^{t+1}$, $\boldsymbol{\Omega}_1^{t+1}$, and $\boldsymbol{\Omega}_2^{t+1}$ by \eqref{eq_update_lambda}, \eqref{eq_update_lambda_1}, and \eqref{eq_update_lambda_2}, respectively.
    \ENDFOR
    \ENSURE $\{ {\mathbf X}_k \}_{k=1}^K$.
    \end{algorithmic}
\end{algorithm}




\section{Numerical Results}
\label{sec:numerical_results}

In this section, we provide numerical results to demonstrate the efficiency of our proposed low-complexity algorithms, which exploit the low-dimensional structures established in Theorem \ref{theorem_low_dimen}. We compare the performance of the following four algorithms:
\begin{itemize}
    \item \textbf{IPM}: The interior-point method applied to the original high-dimensional problem \eqref{eq_extended_mu_isac_problem_sdr}, as in \cite{liu2021cramer}.
    \item \textbf{R-IPM}: The interior-point method applied to our proposed reduced problem \eqref{eq_extended_mu_isac_problem_wc_lds_sdr_brief}.
    \item \textbf{BAL}: The balanced augmented Lagrangian method applied to the original high-dimensional problem \eqref{eq_extended_mu_isac_problem_sdr}.
    \item \textbf{R-BAL}: Our proposed balanced augmented Lagrangian method (Algorithm \ref{alg_RBAL}) applied to the problem with low-dimensional structure \eqref{eq_extended_mu_isac_problem_wc_lds_sdr_brief11}.
\end{itemize}

\subsection{Experiment Setup}

We consider a downlink massive MIMO ISAC system, where the channel matrix $\mathbf{H}$ follows an i.i.d. complex Gaussian distribution with zero mean and unit variance. All simulation results are averaged over 100 independent Monte Carlo trials to mitigate the effects of random channel realizations. As a default setting, the transmit power budget is set to $P_T = 20$\,dBm. The SINR threshold is $\Gamma_k = 10$\,dB for all users, and the noise power is $\sigma_C^2 = 0$\,dBm. For the BAL and R-BAL methods, the stopping criterion is set based on a tolerance of $10^{-9}$ for the constraint-violation norms. The simulations were conducted in MATLAB R2024b on a machine equipped with an Intel Core i9-10900X CPU and 64 GB of RAM.

We investigate the following two main scenarios:
\begin{enumerate}
    \item \emph{$K$-sweep}: We fix the number of BS antennas $N_t = 64$ and vary the number of users $K$ from 4 to 16.
    \item \emph{$N_t$-sweep}: We fix the number of users $K = 8$ and vary the number of BS antennas $N_t$ from 16 to 128. 
\end{enumerate}

\subsection{CRB Performance}
\begin{figure}[!t]
    \centering
    \begin{subfigure}[t]{\linewidth}
        \centering
        \includegraphics[width=0.66\linewidth]{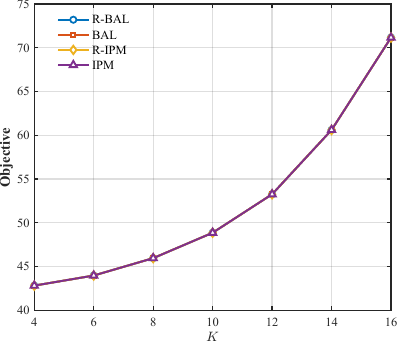}
        \caption{CRB versus the number of users $K$, with $N_t=64$.}
        \label{fig:objective_vs_K}
    \end{subfigure}
    
    \begin{subfigure}[t]{\linewidth}
        \centering
        \includegraphics[width=0.66\linewidth]{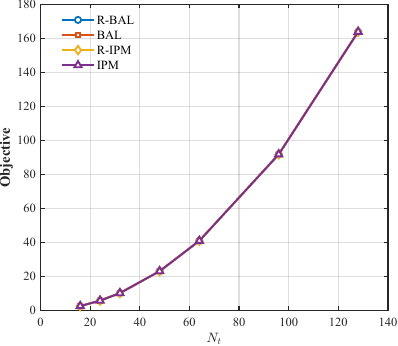}
        \caption{CRB versus the number of transmitting antennas $N_t$, with $K=8$.}
        \label{fig:objective_vs_Nt}
    \end{subfigure}
    \caption{CRB performance under two scenarios. (a) $N_t=64$ and varying $K$; (b) $K=8$ and varying $N_t$. }
    \label{fig:objective_combined}
\end{figure}
First, we evaluate the optimality of the proposed algorithms by comparing their achieved objective values. The results are shown in Fig. \ref{fig:objective_combined}. Specifically, Fig.~\ref{fig:objective_vs_K} and Fig.~\ref{fig:objective_vs_Nt} show the CRB performance  under varying number of users  $K$ and varying number of BS antennas $N_t$, respectively. As shown in both figures, the objective values achieved by all four algorithms are identical across all tested configurations. This result empirically validates a key theoretical finding of our work: The low-dimensional structures found in Theorem \ref{theorem_low_dimen} give rise to a problem reformulation that does not incur any optimality loss. All algorithms consistently converge to the same globally optimal solution.

\begin{table*}[!t]
\centering
\caption{Average runtime (s) versus the number of users $K$, with $N_t=64$.}
\label{tab:runtime_vs_K}
\begin{tabular}{@{}lccccccc@{}}
\toprule
Algorithm & $K=4$ & $K=6$ & $K=8$ & $K=10$ & $K=12$ & $K=14$ & $K=16$ \\
\midrule
IPM  \cite{liu2021cramer}    & 411.49 & 502.05 & 657.59 & 669.93 & 688.94 & 724.21 & 787.53 \\
R-IPM                        & 1.18  & 1.72  & 2.95  & 5.13  & 6.35  & 7.26  & 8.67  \\
BAL  & 6.53  & 8.96  & 10.61 & 11.65 & 12.145 & 12.97  & 13.07  \\
R-BAL                        & 0.15  & 0.19  & 0.24  & 0.30  & 0.39  & 0.54  & 0.81  \\
\bottomrule
\end{tabular}
\end{table*}

\begin{table*}[!t]
\centering
\caption{Average runtime (s) versus the number of antennas $N_t$, with $K=8$.}
\label{tab:runtime_vs_Nt}
\begin{tabular}{@{}lccccccc@{}}
\toprule
Algorithm & $N_t=16$ & $N_t=24$ & $N_t=32$ & $N_t=48$ & $N_t=64$ & $N_t=96$ & $N_t=128$ \\
\midrule
IPM \cite{liu2021cramer}  & 4.62  & 17.09 & 46.97 & 281.18 & 678.05 & 1967.30 & 2859.75 \\
R-IPM  & 2.968  & 2.88  & 2.85  & 2.88  & 2.95  & 2.89   & 3.03   \\
BAL & 0.67  & 1.27  & 2.54  & 5.09  & 9.55  & 20.10  & 36.97  \\
R-BAL  & 0.25  & 0.26  & 0.25  & 0.25  & 0.25  & 0.26   & 0.27  \\
\bottomrule
\end{tabular}
\end{table*}

\begin{figure*}[!t]
    \centering
    \begin{subfigure}[t]{0.32\textwidth}
        \centering
        \includegraphics[width=\linewidth]{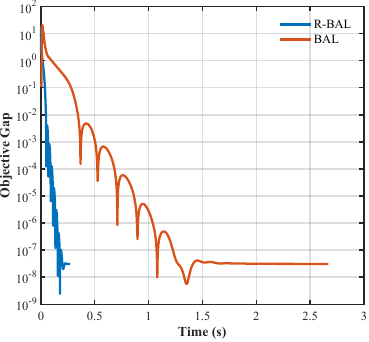}
        \caption{Gap vs time, $N_t=32$.}
    \end{subfigure}\hfill
    \begin{subfigure}[t]{0.32\textwidth}
        \centering
        \includegraphics[width=\linewidth]{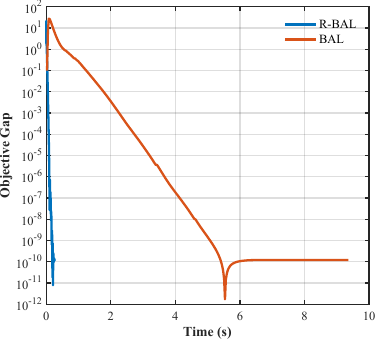}
        \caption{Gap vs time, $N_t=64$.}
    \end{subfigure}\hfill
    \begin{subfigure}[t]{0.32\textwidth}
        \centering
        \includegraphics[width=\linewidth]{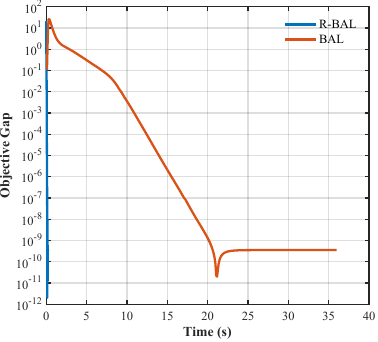}
        \caption{Gap vs time, $N_t=128$.}
    \end{subfigure}
    \caption{Optimality gap versus wall-clock time under different $N_t$, with $K=8$ and $P_T=15$ dBm. }
    \label{fig:gap_vs_time_nt_sweep}
\end{figure*}

\subsection{Computational Efficiency}

Next, we evaluate the computational efficiency of the algorithms by comparing their average runtime. The results are summarized in Table~\ref{tab:runtime_vs_K} and Table~\ref{tab:runtime_vs_Nt}.

As shown in Table~\ref{tab:runtime_vs_K}, which presents the runtime comparison in the $K$-sweep scenario, the proposed R-BAL method demonstrates the highest efficiency, consistently outperforming all other methods by a significant margin. Although R-IPM is slower than R-BAL, it still achieves a considerable speedup over the original IPM. This underscores the advantage of leveraging the low-dimensional structure even within a conventional IPM solver.

The benefits of our proposed methods are even more pronounced in the $N_t$-sweep scenario, as illustrated in Table~\ref{tab:runtime_vs_Nt}. Here, the runtimes of both IPM and BAL increase sharply with the number of antennas $N_t$, rendering them impractical for massive MIMO systems. By contrast, the runtimes of R-IPM and R-BAL remain consistently low and nearly invariant with increasing $N_t$. This behavior is in full agreement with our theoretical complexity analysis: While the complexities of IPM and BAL scale strongly with $N_t$, those of R-IPM ($\mathcal{O}(K^{10})$) and R-BAL ($\mathcal{O}(K^4)$) are independent of $N_t$. This confirms the exceptional scalability of our low-dimensional framework in large-scale antenna systems. In particular, the R-BAL method delivers the best performance, achieving a speedup of 10000× for large antenna arrays (e.g., $N_t>100$) when compared to the original IPM \cite{liu2021cramer}.

\subsection{Convergence Behavior}

Fig.~\ref{fig:gap_vs_time_nt_sweep} and Fig.~\ref{fig:res_vs_time_nt_sweep} illustrate the convergence of the algorithms in terms of the optimality gap and constraint violation versus wall-clock time, respectively.  Here, the optimality gap is defined as the absolute difference between the objective value of a point and the optimal objective value obtained via the IPM solver. The constraint violation is measured by the Euclidean norm of the constraint residuals of a point. The results correspond to scenarios with a fixed number of users ($K=8$) and an increasing number of antennas ($N_t \in \{32, 64, 128\}$).

The results clearly demonstrate the superior performance and scalability of the proposed R-BAL method. As shown in Fig. \ref{fig:gap_vs_time_nt_sweep} and Fig.~\ref{fig:res_vs_time_nt_sweep}, the convergence curves for both optimality gap and constraint violation remain nearly identical across different values of $N_t$. This highlights that the time-to-accuracy of the proposed R-BAL method is independent of the number of BS antennas, which empirically confirms the benefits of the low-dimensional formulation and its $\mathcal{O}(K^4)$ complexity. By contrast, the high-dimensional BAL baseline shows a noticeable slowdown as $N_t$ increases, with its curves shifting to the right. This behavior is consistent with the runtime data in Table~\ref{tab:runtime_vs_Nt} and underscores the computational challenges of operating in the original high-dimensional space. Furthermore, although the BAL method is ultimately convergent, it exhibits significant oscillations. This behavior arises from the use of a fixed step size, which prevents the algorithm from fully satisfying the constraints prior to convergence. By contrast, the R-BAL method, despite also using a fixed step size, exhibits a markedly smoother convergence process. This can be attributed to its ability to leverage the inherent low-dimensional structures of the problem.

\begin{figure*}[!t]
    \centering
    \begin{subfigure}[t]{0.32\textwidth}
        \centering
        \includegraphics[width=\linewidth]{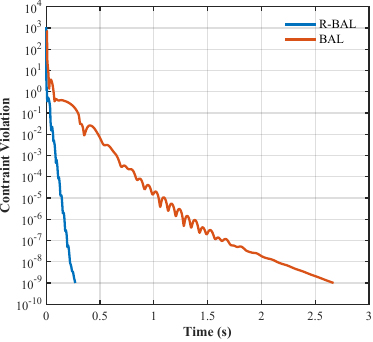}
        \caption{Constraint violation vs time, $N_t=32$.}
    \end{subfigure}\hfill
    \begin{subfigure}[t]{0.32\textwidth}
        \centering
        \includegraphics[width=\linewidth]{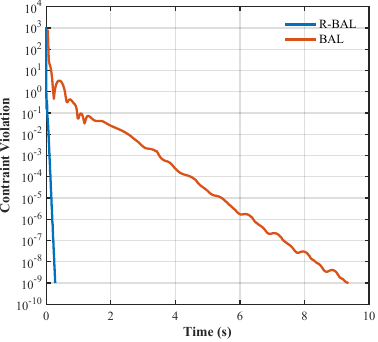}
        \caption{Constraint violation vs time, $N_t=64$.}
    \end{subfigure}\hfill
    \begin{subfigure}[t]{0.32\textwidth}
        \centering
        \includegraphics[width=\linewidth]{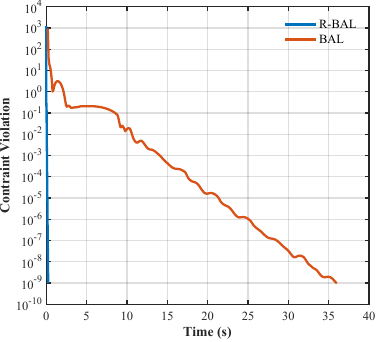}
        \caption{Constraint violation vs time, $N_t=128$.}
    \end{subfigure}
    \caption{Constraint violation versus wall-clock time under different $N_t$, with $K=8$ and $P_T=15$ dBm.}
    \label{fig:res_vs_time_nt_sweep}
\end{figure*}



In summary, the numerical results show that our proposed R-BAL method is able to exploit the inherent low-dimensional structure of the ISAC beamforming problem and achieves state-of-the-art computational efficiency and scalability while preserving optimality.

\section{Conclusion}
In this paper, we showed that a transmit beamforming design problem that arises in massive MIMO ISAC systems possesses a low-dimensional structure. This structure enables an equivalent reformulation of the problem, in which the dimension depends on the number of users rather than the number of BS antennas. By exploiting this reformulation, we can reduce the computational complexity of standard IPMs from $\mathcal{O}(N_t^{6.5}K^{3.5})$ to $\mathcal{O}(K^{10})$. Moreover, we introduced a R-BAL method that further reduces the complexity to $\mathcal{O}(K^4)$. Lastly, we presented numerical results to demonstrate that the proposed approach can yield a speedup of up to $10000\times$ over existing ones in CPU runtime.









\appendices

\section{Proof of Proposition \ref{pro_infea_condition} }\label{app_pro_infea_condition}

On one hand, a feasible solution to problem \eqref{eq_extended_mu_isac_problem} corresponds to a feasible rank-one solution to problem \eqref{eq_extended_mu_isac_problem_sdr}. On the other hand, given a feasible solution to problem \eqref{eq_extended_mu_isac_problem_sdr}, a feasible rank-one solution to problem \eqref{eq_extended_mu_isac_problem} can be constructed; see \cite{liu2020joint}. Therefore, to understand when problem \eqref{eq_extended_mu_isac_problem_sdr} is feasible, it suffices to analyze the feasibility of problem \eqref{eq_extended_mu_isac_problem}.

Towards that end, consider the following power minimization problem:
    \begin{equation}\label{eq_power_mini}
\begin{aligned}
 \min_{\{\mathbf{w}_k\}_{k=1}^K,\mathbf{W}_{A}} ~&\sum_{k=1}^{K} \|\mathbf{w}_k\|^2+ \|\mathbf{W}_A\|_F^2\\
 \text{ s.t. } ~~~~~&\frac{|\mathbf{h}_{k}^{H}\mathbf{w}_{k}|^{2}}{\sum_{i=1,i\neq k}^{K}|\mathbf{h}_{k}^{H}\mathbf{w}_{i}|^{2} + \| \mathbf{h}_{k}^{H}\mathbf{W}_{A} \|^{2} + \sigma_{C}^{2}} \geq\Gamma_k, ~ \forall k.
\end{aligned}
\end{equation}
Let $P^{\ast}$ denote the optimal objective value of problem \eqref{eq_power_mini}. Obviously, problem \eqref{eq_extended_mu_isac_problem} is infeasible if and only if $P_T<P^{\ast}$. Therefore, it suffices to prove that $P^{\ast}=P_{low}$, where $P_{low}$ is defined in \eqref{eq_p_low}.

Note that every optimal solution $( \{\mathbf{w}_k^*\}_{k=1}^K, {\mathbf W}_A^* )$ to problem \eqref{eq_power_mini} must satisfy $\mathbf{W}_{A}^{\ast}=\mathbf{0}$. Hence, the optimal value of problem \eqref{eq_power_mini} is the same as that of
 \begin{equation}\label{eq_power_mini_degenerate}
\begin{aligned}
 \min_{\{\mathbf{w}_{k}\}_{k=1}^K} ~~&\sum_{k=1}^{K} \|\mathbf{w}_k\|^2\\
 \text{ s.t. } ~~~~&\frac{|\mathbf{h}_{k}^{H}\mathbf{w}_{k}|^{2}}{\sum_{i=1,i\neq k}^{K}|\mathbf{h}_{k}^{H}\mathbf{w}_{i}|^{2} + \sigma_{C}^{2}} \geq\Gamma_k, ~ \forall k,
\end{aligned}
\end{equation}
which is exactly the classical power minimization problem.

By utilizing the low-dimensional subspace structure established in \cite[Theorem 5]{zhao2025universal}, problem \eqref{eq_power_mini_degenerate} can be equivalently reformulated as the following lower-dimensional problem:
\begin{equation}\label{eq_power_mini_reduced}
\begin{aligned}
 \min_{\{\mathbf{x}_k\}_{k=1}^K} ~~&\sum_{k=1}^{K} \|\mathbf{H}\mathbf{x}_k\|_2^2 \\
 \text{ s.t. } ~~~~&\frac{\left|\widebar{\mathbf{h}}_k^H \mathbf{x}_k\right|^2}{\sum_{i \neq k}^{K}\left|\widebar{\mathbf{h}}_k^H \mathbf{x}_i\right|^2+\sigma_C^2} \geq\Gamma_k,~ \forall k.
\end{aligned}
\end{equation} 
Notably, for each $k \in [K]$, the vector $\mathbf{x}_k$ has dimension $K$, whereas the vector $\mathbf{w}_k$ has dimension $N_t$.

Following \cite{wiesel2005linear}, it can be shown that the KKT conditions associated with problem \eqref{eq_power_mini_reduced} are both necessary and sufficient for optimality. Consider the Lagrangian function $\mathcal{L}$ of problem \eqref{eq_power_mini_reduced} given by
\begin{equation}\notag
\begin{aligned}
 \mathcal{L}\left(\mathbf{X}, \boldsymbol{\lambda}\right) =&\sum_{k=1}^K\left\|\mathbf{H}\mathbf{x}_k\right\|_2^2+\sum_{k=1}^K \lambda_k\Bigg(\sum_{i \neq k} \frac{1}{\sigma_C^2}\left|\widebar{\mathbf{h}}_k^H \mathbf{x}_i\right|^2\\
&+1-\frac{1}{\gamma_k \sigma_C^2}\left|\widebar{\mathbf{h}}_k^H \mathbf{x}_k\right|^2\Bigg),
\end{aligned}
\end{equation}
where $\boldsymbol{\lambda}=[\lambda_1,\lambda_2,\ldots,\lambda_K]$ with $\lambda_k\geq 0$ being the Lagrangian multiplier associated with the $k$-th SINR constraint.
The dual function is given by $\min_{\mathbf{X}}\mathcal{L}\left(\mathbf{X}, \boldsymbol{\lambda}\right)$, 
which can be written in closed form as $\sum_{i=1}^K\lambda_i$ by checking the KKT conditions.
Since the strong duality holds, we have $\sum_{i=1}^K \lambda_i^{\ast} = P^{\ast}$, where $\{ \lambda_k^* \}_{k=1}^K$ are the optimal multipliers.
 Following a derivation analogous to that in \cite{bjornson2014optimal}, these multipliers can be computed via the fixed-point iteration given in \eqref{eq_lambda_fixedpoint}, thereby completing the proof.

\section{Proof of Theorem \ref{theorem_low_dimen} }\label{app_theorem_low_dimen}

    In order to derive the optimal beamforming structure, we begin by considering the Lagrangian function $\mathcal{L}$ of problem \eqref{eq_extended_mu_isac_problem_sdr} given by
\begin{equation}\notag
    \begin{aligned}
&\mathcal{L}( {\mathbf W}, {\boldsymbol \mu}, \omega, {\mathbf \Theta} )\\
=&\operatorname{tr}\left(\mathbf{R}_{W}^{-1}\right)+\omega\left( \operatorname{tr}\left(\mathbf{R}_{W}\right)- P_T\right)-\sum_{j=1}^{K+1}\operatorname{tr}\left(\boldsymbol{\Theta}_j\mathbf{W}_j\right)\\
&-\sum_{k=1}^{K}\mu_k\Big(\rho_k\operatorname{tr}\left(\mathbf{Q}_{k}\mathbf{W}_{k}\right) -\sum_{j=1}^{K+1}\operatorname{tr}\left(\mathbf{Q}_{k}\mathbf{W}_{j}\right)-\sigma^2_C\Big),
    \end{aligned}
\end{equation}
where $\mathbf{R}_{W}= \sum_{k=1}^{K+1}\mathbf{W}_k$ and $\mu_k \ge 0$ for $k \in [K]$, $\omega \ge 0$, ${\mathbf \Theta}_k \succeq {\mathbf 0}$ for $k \in [K+1]$ are the Lagrangian multipliers. The KKT conditions associated with problem \eqref{eq_extended_mu_isac_problem_sdr} are given by
\begin{subequations}\label{eq_kkt_conditions}
\begin{align}
& -\mathbf{R}_{W}^{-2}+\omega\mathbf{I}_{N_t}+\sum_{j=1}^K\mu_j\mathbf{Q}_{j}-\mu_k\rho_k\mathbf{Q}_{k}=\boldsymbol{\Theta}_k,\forall k,\label{eq_kkt_w1_gra_mu}\\
& -\mathbf{R}_{W}^{-2}+\omega\mathbf{I}_{N_t}+\sum_{j=1}^{K}\mu_j\mathbf{Q}_{j}=\boldsymbol{\Theta}_{K+1},\label{eq_kkt_wA_gra_mu}\\
&\mu_k\Big(\rho_k\operatorname{tr}\left(\mathbf{Q}_{k}\mathbf{W}_{k}\right){-}\sum_{j=1}^{K+1}\operatorname{tr}\left(\mathbf{Q}_{k}\mathbf{W}_{j}\right)-\sigma^2_C\Big)=0,\forall k,\label{eq_kkt_comp}\\
&\omega\left( \operatorname{tr}\left(\mathbf{R}_{W}\right)- P_T\right)=0,\label{eq_kkt_comp_power}\\
&\boldsymbol{\Theta}_k\mathbf{W}_k=\mathbf{0},\forall k\in [K+1],\label{eq_kkt_slackness}\\
&\rho_k\operatorname{tr}\left(\mathbf{Q}_{k}\mathbf{W}_{k}\right)-\sum_{j=1}^{K+1}\operatorname{tr}\left(\mathbf{Q}_{k}\mathbf{W}_{j}\right)\geq\sigma^2_C,\forall k, \label{eq_kkt_w1_fea_W}\\
& \operatorname{tr}\left(\mathbf{R}_{W}\right) \leq P_T,\,\,\,\mathbf{W}_{k}\succeq\mathbf{0},\forall k\in [K+1],\\
&\mu_k\geq 0,\forall k,\,\,\,\omega\geq 0,\,\,\,\boldsymbol{\Theta}_{k}\succeq\mathbf{0},\forall k\in [K+1]. \label{eq_kkt_WDF_fea_mu}
\end{align}
\end{subequations} 

Before proceeding, we recall the definition $\mathbf{Q}_{k}=\mathbf{h}_{k}\mathbf{h}_{k}^H$ for $k \in [K]$, which will be utilized frequently. The multipliers $\{\mu_k\}_{k=1}^K$ may not all be strictly positive. Let $T$ be the number of strictly positive multipliers, where $0\leq T\leq K$. Without loss of generality, we assume that $\mu_1,\mu_2,\ldots,\mu_T>0$ and $\mu_{T+1}=\cdots=\mu_K=0$. 
 
 Right-multiplying \eqref{eq_kkt_w1_gra_mu} by $\mathbf{W}_{k}$ for $k \in [K]$ and \eqref{eq_kkt_wA_gra_mu} by $\mathbf{W}_{K+1}$, then summing the resulting $K+1$ equalities and applying the complementary slackness condition in \eqref{eq_kkt_slackness}, we obtain
$ \omega\mathbf{R}_{W}-\mathbf{R}_{W}^{-1}=\sum_{k=1}^K\mu_k\left(\rho_k\mathbf{Q}_k\mathbf{W}_k-\mathbf{Q}_k\mathbf{R}_{W}\right)$.
Denote the eigenvalue decomposition of $\mathbf{R}_{W}$ by $\mathbf{U}\boldsymbol{\Lambda}\mathbf{U}^H$. Then, we have 
\begin{equation}\label{eq_kkt_sum_mu_1}
 \mathbf{U}\left(\omega\boldsymbol{\Lambda}-\boldsymbol{\Lambda}^{-1}\right)\mathbf{U}^H{=}\sum_{k=1}^T\mu_k\left(\rho_k\mathbf{Q}_k\mathbf{W}_k-\mathbf{Q}_k\mathbf{R}_{W}\right)\!{,}
\end{equation}
where we have used the fact that $\mu_{T+1}=\cdots=\mu_K=0$. 
Let $\mathbf{H}_T=[\mathbf{h}_1,\mathbf{h}_2,\ldots,\mathbf{h}_T]$.
 Note that each column of the matrix on the right-hand side of \eqref{eq_kkt_sum_mu_1} lies in the range space of $\mathbf{H}_T$, and the rank of this matrix is at most $r=\text{rank}(\mathbf{H}_T)$. This implies that at least $N_t-r$ eigenvalues of $\boldsymbol{\Lambda}$ equal to $1/\sqrt{\omega}$. Therefore, we can express $\mathbf{R}_{W}$ as
 \begin{equation}\label{eq_evd_Rx}
 \begin{aligned}
\mathbf{R}_{W}&=\underbrace{\left(\mathbf{U}_T,\mathbf{U}_A\right)
\left(\begin{array}{cc}\boldsymbol{\Lambda}_T &\mathbf{0} \\ \mathbf{0} &\frac{1}{\sqrt{\omega}}\mathbf{I}_{N_t-r} \end{array}\right)
\left(\begin{array}{c}\mathbf{U}_T^H \\ \mathbf{U}_A^H \end{array}\right)}_{\mathbf{U}\boldsymbol{\Lambda}\mathbf{U}^H}\\
&=\mathbf{U}_T\boldsymbol{\Lambda}_T\mathbf{U}_T^H+\frac{1}{\sqrt{\omega}}\mathbf{U}_A\mathbf{U}_A^H,  
 \end{aligned}
 \end{equation}
 where $\boldsymbol{\Lambda}_T\in \mathbb{C}^{r\times r}
$ is a diagonal matrix, and the columns of $\mathbf{U}_T\in\mathbb{C}^{N_t\times r}$ and $\mathbf{U}_A\in\mathbb{C}^{N_t\times (N_t-r)}$ are orthonormal eigenvectors of $\mathbf{R}_{W}$ corresponding to the eigenvalues in $\boldsymbol{\Lambda}_T$ and the eigenvalue $1/\sqrt{\omega}$, respectively. Moreover, we have $\mathcal{R}(\mathbf{U}_T)=\mathcal{R}(\mathbf{H}_T)$ and $\mathcal{R}(\mathbf{U}_A)=\mathcal{N}(\mathbf{H}_T^H)$ according to the above analysis.

We now analyze the optimal solution structure in three steps.

\emph{Step 1:} We show that $\{\mathbf{W}_k\}_{k=T+1}^K$ and $\mathbf{W}_{K+1}$ are orthogonal to $\mathbf{H}_T$.

Define  $\mathbf{B}=-\mathbf{R}_{W}^{-2}+\omega\mathbf{I}_{N_t}+\sum_{j=1}^T\mu_j\mathbf{Q}_{j}$. 
Right-multiplying \eqref{eq_kkt_w1_gra_mu} by $\mathbf{W}_{k}$ for $k \in \{ T+1, \ldots, K \}$, we obtain
\begin{equation}\label{eq_KKT_bwk}
    \mathbf{B}\mathbf{W}_{k}=\mathbf{0},~ k \in \{ T+1, \ldots, K \}.
\end{equation}
Right-multiplying \eqref{eq_kkt_wA_gra_mu} by $\mathbf{W}_{K+1}$, we have
\begin{equation}\label{eq_KKT_bwk1}
    \mathbf{B}\mathbf{W}_{K+1}=\mathbf{0}.
\end{equation}
According to \eqref{eq_kkt_w1_gra_mu}, we have
\begin{equation}\notag
    \mathbf{B}=\mu_k\rho_k\mathbf{Q}_k+\boldsymbol{\Theta}_k,\forall k\in [T].
\end{equation} 
Since $\mu_k>0$ for all $k\in[T]$ and $\boldsymbol{\Theta}_k\succeq\mathbf{0}$, we have $\mathcal{R}(\mathbf{h}_k)\subseteq \mathcal{R}(\mathbf{B})$ for all $k\in [T]$. This immediately implies that $\mathcal{R}(\mathbf{H}_T)\subseteq \mathcal{R}(\mathbf{B})$. It follows that \eqref{eq_KKT_bwk} and \eqref{eq_KKT_bwk1} give $\mathbf{H}_T^H\mathbf{W}_k=\mathbf{0}$ for all $k \in \{ T+1, \ldots, K \}$ and $\mathbf{H}_T^H\mathbf{W}_{K+1}=\mathbf{0}$, which completes the proof of the first step.

\emph{Step 2:} We construct an optimal solution such that $\mathcal{R}(\mathbf{W}_k)\subseteq \mathcal{R}(\mathbf{H}_T)$ for $k \in [T]$.

Given an optimal solution $\{\mathbf{W}_{k}\}_{k=1}^{K+1}$ to problem \eqref{eq_extended_mu_isac_problem_sdr}, which satisfies \eqref{eq_kkt_conditions}, define $\widebar{\mathbf{W}}_{k}= \mathbf{U}_T\mathbf{U}_T^H\mathbf{W}_{k}\mathbf{U}_T\mathbf{U}_T^H$ for $k \in [T]$. Observe that
\begin{equation}\label{eq_wt}
\begin{aligned}
     \sum_{k=1}^T \widebar{\mathbf{W}}_{k} &= \mathbf{U}_T\mathbf{U}_T^H\left(\sum_{k=1}^T\mathbf{W}_{k}\right)\mathbf{U}_T\mathbf{U}_T^H\\
&=\mathbf{U}_T\mathbf{U}_T^H\left(\sum_{k=1}^T\mathbf{W}_{k}+\sum_{k=T+1}^{K+1}\mathbf{W}_{k}\right)\mathbf{U}_T\mathbf{U}_T^H\\
&=\mathbf{U}_T\boldsymbol{\Lambda}_T\mathbf{U}_T^H,
\end{aligned}
  \end{equation}
where the second equality uses the conclusion in Step 1 that $\mathbf{U}_T^H\mathbf{W}_k=\mathbf{0}$ for $k \in \{ T+1, \ldots, K+1 \}$, and the third equality follows from \eqref{eq_evd_Rx} by noticing that ${\mathbf U}_T^H {\mathbf U}_A = {\mathbf 0}$. In addition, define $\widebar{\mathbf{W}}_{k}= \mathbf{W}_{k}$ for $k \in \{ T+1, \ldots, K \}$ and $\widebar{\mathbf{W}}_{K+1}= \mathbf{R}_{W}-\sum_{k=1}^K\widebar{\mathbf{W}}_{k}$. Since $\mathcal{R}( {\mathbf U}_T ) = \mathcal{R}( {\mathbf H}_T )$, we see that $\{\widebar{\mathbf{W}}_{k}\}_{k=1}^{K+1}$ is also an optimal solution to problem \eqref{eq_extended_mu_isac_problem_sdr}. With this construction, we have $\mathcal{R}(\widebar{\mathbf{W}}_{k})\subseteq \mathcal{R}\left(\mathbf{H}_T\right)$ for $k\in [T]$ and
\begin{equation}\label{eq_orthonormal_divide}
    \sum_{k=T+1}^{K+1}\widebar{\mathbf{W}}_{k}=\frac{1}{\sqrt{\omega}}\mathbf{U}_A\mathbf{U}_A^H
\end{equation}
according to \eqref{eq_evd_Rx} and \eqref{eq_wt}. The proof of Step 2 is completed.

\emph{Step 3:} Note that the columns of $\{\widebar{\mathbf{W}}_k\}_{k=T+1}^K$ may not lie in the range space of ${\mathbf H}$, and $\widebar{\mathbf{W}}_{K+1}$ may not be orthogonal to $\mathbf{H}$. However, we can construct from $\{ \bar{\mathbf W}_k \}_{k=1}^{K+1}$ an optimal solution to problem \eqref{eq_extended_mu_isac_problem_sdr} satisfying properties (a), (b), and (c) as follows. 

Recall that ${\mathbf U}_A$ is an orthonormal basis of the eigenspace of ${\mathbf R}_W$ corresponding to the eigenvalue $1/\sqrt{\omega}$. Since we have the orthogonal subspace decomposition  $\mathcal{R}( {\mathbf U}_A ) = ( \mathcal{R}( {\mathbf H} )\cap \mathcal{N}( {\mathbf H}_T^H )) \oplus \mathcal{N}( {\mathbf H}^H)$, we may assume without loss that ${\mathbf U}_A = \begin{pmatrix} {\mathbf U}_B & {\mathbf U}_C \end{pmatrix}$, where $\mathbf{U}_B$ and $\mathbf{U}_C$ are orthonormal bases of $\mathcal{R}( {\mathbf H} )\cap \mathcal{N}( {\mathbf H}_T^H )$ and $\mathcal{N}( {\mathbf H}^H)$, respectively. It then follows from \eqref{eq_orthonormal_divide} that
\begin{equation}\label{eq_W_bar_UbUc}
    \sum_{k=T+1}^{K+1}\widebar{\mathbf{W}}_{k}=\frac{1}{\sqrt{\omega}}\mathbf{U}_B\mathbf{U}_B^H+\frac{1}{\sqrt{\omega}}\mathbf{U}_C\mathbf{U}_C^H.
\end{equation}

\noindent Now, we construct $ \widehat{\mathbf{W}}_{k} =\mathbf{U}_B\mathbf{U}_B^H\widebar{\mathbf{W}}_{k}\mathbf{U}_B\mathbf{U}_B^H$ for $k \in \{ T+1, \ldots, K \}$. Then, we have
\begin{equation}\label{eq_W_hat_construct}
    \begin{aligned}
       \sum_{k=T+1}^K \widehat{\mathbf{W}}_{k} =\,\,&\mathbf{U}_B\mathbf{U}_B^H\left(\sum_{k=T+1}^K \widebar{\mathbf{W}}_{k}\right)\mathbf{U}_B\mathbf{U}_B^H\\
       \preceq \,\,&\mathbf{U}_B\mathbf{U}_B^H\left(\frac{1}{\sqrt{\omega}}\mathbf{U}_A\mathbf{U}_A^H\right)\mathbf{U}_B\mathbf{U}_B^H\\
        = & \,\, \frac{1}{\sqrt{\omega}} \mathbf{U}_B\mathbf{U}_B^H , 
    \end{aligned}
\end{equation}
where the second inequality is due to \eqref{eq_orthonormal_divide}, and the third equality uses the fact that $\mathcal{R}(\mathbf{U}_B)\subseteq \mathcal{R}(\mathbf{U}_A)$ and $\mathbf{U}_A\mathbf{U}_A^H$ is an orthogonal projection matrix. We further construct 
\begin{equation}\label{eq_proof_construct_W}
    \mathbf{W}^{\star}_{T+1} =\frac{1}{\sqrt{\omega}} \mathbf{U}_B\mathbf{U}_B^H-\sum_{k=T+2}^K \widehat{\mathbf{W}}_{k},
\end{equation}
$\mathbf{W}^{\star}_{k}=\widehat{\mathbf{W}}_{k}$ for $k\in \{T+2,\ldots,K\}$, $\mathbf{W}^{\star}_{k} =\widebar{\mathbf{W}}_{k}$ for $k\in [T]$, and $\mathbf{W}^{\star}_{K+1}=\frac{1}{\sqrt{\omega}} \mathbf{U}_C\mathbf{U}_C^H$. 

We show that $\{\mathbf{W}^{\star}_{k}\}_{k=1}^{K+1}$ is an optimal
solution to problem \eqref{eq_extended_mu_isac_problem_sdr}. First, from the above construction and the equality \eqref{eq_W_bar_UbUc}, we have
\begin{equation}\notag
    \sum_{k=1}^{K+1} \mathbf{W}_k^\star = \sum_{k=1}^{K+1} \widebar{\mathbf{W}}_k.
\end{equation}
Thus, the objective value achieved by $\{\mathbf{W}^{\star}_{k}\}_{k=1}^{K+1}$ is the same as that of $\{\widebar{\mathbf{W}}_{k}\}_{k=1}^{K+1}$, and the power constraint is satisfied since $\{\widebar{\mathbf{W}}_{k}\}_{k=1}^{K+1}$ is an optimal solution to problem \eqref{eq_extended_mu_isac_problem_sdr}. Note that the SINR of user $k$ in problem \eqref{eq_extended_mu_isac_problem_sdr} can be expressed as 
\begin{equation}\label{eq_SINR_k}
   \text{SINR}_k= \rho_k\operatorname{tr}\left(\mathbf{Q}_{k}\mathbf{W}_{k}\right)-\operatorname{tr}\left(\mathbf{Q}_{k}\left(\sum_{j=1}^{K+1}\mathbf{W}_{j}\right)\right).
\end{equation}
The second term in \eqref{eq_SINR_k} gives the same value when evaluated at $\{\mathbf{W}^{\star}_{k}\}_{k=1}^{K+1}$ and $\{\widebar{\mathbf{W}}_{k}\}_{k=1}^{K+1}$. Therefore, we only need to compare the first term. For $k \in \{T+1,...,K\}$, we have
\begin{equation}\label{eq_stpe3_1}
    \begin{aligned}
        \operatorname{tr}\left(\mathbf{Q}_{k}\widehat{\mathbf{W}}_{k}\right)
        &=\operatorname{tr}\left(\mathbf{Q}_{k}\left(\mathbf{U}_B\mathbf{U}_B^H\widebar{\mathbf{W}}_{k}\mathbf{U}_B\mathbf{U}_B^H\right)\right)\\
        &=\operatorname{tr}\left(\mathbf{U}_B\mathbf{U}_B^H\mathbf{Q}_{k}\mathbf{U}_B\mathbf{U}_B^H\widebar{\mathbf{W}}_{k}\right)
    \end{aligned}
\end{equation}
according to the definition of $\widehat{\mathbf{W}}_{k}$. Moreover, we have
\begin{equation}\label{eq_stpe3_2}
    \begin{aligned}
        &\operatorname{tr}\left(\mathbf{Q}_{k}\widebar{\mathbf{W}}_{k}\right)\\
        &=\operatorname{tr}\left(\mathbf{h}_{k}\mathbf{h}_{k}^H\widebar{\mathbf{W}}_{k}\right)\\
         &=\operatorname{tr}\left(\left(\mathbf{U}_B\mathbf{U}_B^H\mathbf{h}_{k}+\mathbf{U}_T\mathbf{U}_T^H\mathbf{h}_{k}\right)\right.\\
         &\quad\quad\quad\quad\quad\times\left.\left(\mathbf{U}_B\mathbf{U}_B^H\mathbf{h}_{k}+\mathbf{U}_T\mathbf{U}_T^H\mathbf{h}_{k}\right)^H\widebar{\mathbf{W}}_{k}\right)\\
         &=\operatorname{tr}\left(\mathbf{U}_B\mathbf{U}_B^H\mathbf{Q}_{k}\mathbf{U}_B\mathbf{U}_B^H\widebar{\mathbf{W}}_{k}\right),
    \end{aligned}
\end{equation}
where the second equality holds since the columns of $\mathbf{U}_B$ and $\mathbf{U}_T$ together form an orthonormal basis of $\mathcal{R}( {\mathbf H} )$ and $\mathbf{h}_k\in \mathcal{R}( {\mathbf H} )$, which leads to $\mathbf{U}_B\mathbf{U}_B^H\mathbf{h}_{k}+\mathbf{U}_T\mathbf{U}_T^H\mathbf{h}_{k}=\mathbf{h}_{k}$; the last equality uses the result from Step 1 that  $\mathbf{H}_T^H\widebar{\mathbf{W}}_{k}=\mathbf{U}_T^H\widebar{\mathbf{W}}_{k}={\mathbf 0}$ for $k \in \{T+1,...,K\}$.
It follows from \eqref{eq_stpe3_1} and \eqref{eq_stpe3_2} that the SINR of user $k \in \{T+1,...,K\}$ remains unchanged when replacing $\{\widebar{\mathbf{W}}_k\}_{k=1}^{K+1}$ with $\{\widehat{\mathbf{W}}_{k}\}_{k=1}^{K+1}$. Since $\mathbf{W}^{\star}_{k}=\widehat{\mathbf{W}}_{k}$ for $k\in \{T+2,\ldots,K\}$, the SINR constraints for these users are satisfied by $\{\mathbf{W}^{\star}_{k}\}_{k=1}^{K+1}$. Moreover, combining the definition of $\mathbf{W}_{T+1}^{\star}$ in \eqref{eq_proof_construct_W} with the relation in \eqref{eq_W_hat_construct} yields $\mathbf{W}_{T+1}^{\star} \succeq \widebar{\mathbf{W}}_{T+1}$. This implies that the SINR of user $T+1$ does not decrease when replacing $\{ \widehat{\mathbf W}_k \}_{k=1}^{K+1}$ with $\{ {\mathbf W}_k^\star \}_{k=1}^{K+1}$, so the corresponding constraint is satisfied by $\{ {\mathbf W}_k^\star \}_{k=1}^{K+1}$. For user $k \in [T]$, the SINR constraint is clearly maintained because $\mathbf{W}^{\star}_{k} =\widebar{\mathbf{W}}_{k}$ for $k\in [T]$. 

In summary, we have constructed an optimal solution to problem \eqref{eq_extended_mu_isac_problem_sdr} satisfying $\mathcal{R}(\mathbf{W}^{\star}_{k})\subseteq \mathcal{R}(\mathbf{H})$ for $ k\in [K]$, $\mathbf{W}_{K+1}^{\star}\subseteq \mathcal{N}(\mathbf{H}^H)$, and $\mathbf{W}^{\star}_{K+1}=\frac{1}{\sqrt{\omega}} \mathbf{U}_C\mathbf{U}_C^H$. The value of $\theta= $ {\footnotesize \scalebox{1.1}{$\frac{1}{\sqrt{\omega}}$}} can be obtained by noting that $\operatorname{tr}(\mathbf{R}^\star_{ {\mathbf W} }) = \sum_{k=1}^K \operatorname{tr}( {\mathbf W}_k^\star ) + \operatorname{tr}( {\mathbf W}_{K+1}^\star ) = P_T$ by \eqref{eq_kkt_comp_power}. This completes the proof of Theorem \ref{theorem_low_dimen}.

\bibliographystyle{IEEEtran}
\bibliography{to_bibitem}

\newpage

\twocolumn[%
    \begin{center}
    \Huge Supplementary Material
    \end{center}\vspace{1cm}
]

\section{Proof of Theorem \ref{thm_extract_rank1}}\label{app_thm_extract_rank1}

Denote the optimal solution to problem \eqref{eq_extended_mu_isac_problem_wc_lds_sdr_brief} by $\{\mathbf{X}^{\star}_{k}\}_{k=1}^K$. According to Theorem \ref{theorem_low_dimen} and Corollary \ref{coro_reduced}, an optimal solution to problem \eqref{eq_extended_mu_isac_problem_sdr} is given by \begin{equation}\notag
  \widetilde{\mathbf{W}}_{k}=  \tilde{\mathbf{U}}\mathbf{X}^{\star}_{k}\tilde{\mathbf{U}}^H,\forall k
\end{equation}
and 
\begin{equation}\notag
     \widetilde{\mathbf{W}}^{K+1}=\frac{P_T-\sum_{k=1}^K\operatorname{tr}(\mathbf{X}^{\star}_{k})}{N_t-K}\mathbf{U}_C\mathbf{U}_C^H.
\end{equation}
Then, a method analogous to \cite[Theorem 1]{liu2020joint} can be employed to extract a rank-one solution from $\{\mathbf{W}^{\star}_{k}\}_{k=1}^K$. Through direct algebraic manipulation, we obtain \eqref{thm_extract_rank_1} and \eqref{thm_extract_rank_2}. 

\section{Proof of Theorem \ref{thm_equality}}\label{app_thm_equality}
\subsection{Proof of Theorem \ref{thm_equality}(a)}
 To characterize the degenerate case, we first present the following key property.
 \begin{lemma}\label{lemma_multiplier_vanish}
     If problem \eqref{eq_extended_mu_isac_problem_wc_lds_sdr_brief} is degenerate, then the optimal Lagrangian multipliers $\{ \mu_k \}_{k=1}^K$ to \eqref{eq_kkt_conditions} satisfy $\mu_k=0$ for $k\in[K]$.
 \end{lemma}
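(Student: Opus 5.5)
We argue by contrapositive. Suppose some optimal multiplier $\mu_{k_0}>0$. We then build an optimal solution of problem \eqref{eq_extended_mu_isac_problem_wc_lds_sdr_brief} in which every SINR constraint is tight. That contradicts degeneracy.

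The starting point is a KKT solution of problem \eqref{eq_extended_mu_isac_problem_sdr}. Since problem \eqref{eq_extended_mu_isac_problem_sdr} is convex and Slater holds under feasibility, the KKT conditions \eqref{eq_kkt_conditions} are necessary and sufficient. Moreover, the multipliers $(\boldsymbol{\mu},\omega,\boldsymbol{\Theta})$ are common to all primal optimal solutions, because the saddle-point property holds for every primal-dual optimal pair. We fix these multipliers. By the construction in Appendix \ref{app_theorem_low_dimen}, any optimal solution can be turned into one with properties (a)--(c) of Theorem \ref{theorem_low_dimen}. Such a solution corresponds, via Corollary \ref{coro_reduced}, to an optimal solution of problem \eqref{eq_extended_mu_isac_problem_wc_lds_sdr_brief}, and it satisfies complementary slackness \eqref{eq_kkt_comp} with the same $\boldsymbol{\mu}$. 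So if $\mu_k>0$, the $k$-th SINR constraint is tight at every optimal solution of the reduced problem. It therefore suffices to handle the indices with $\mu_k=0$: if $T\ge 1$, every slack constraint with $\mu_k=0$ must be made tight without losing optimality or the structure.

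This modification uses Step 1 and Step 3 of Appendix \ref{app_theorem_low_dimen}. For $k\in\{T+1,\dots,K\}$, the blocks $\mathbf{W}_k$ live in $\mathcal{R}(\mathbf{U}_B)$, and together with $\mathbf{W}_{K+1}$ they sum to $\frac{1}{\sqrt{\omega}}\mathbf{U}_A\mathbf{U}_A^H$. The SINR of such a user is $\rho_k\tr(\mathbf{Q}_k\mathbf{W}_k)-\tr(\mathbf{Q}_k\mathbf{R}_W)$. Here $\mathbf{R}_W$ is fixed, so we may shrink $\mathbf{W}_k$ continuously, e.g.\ replace it by $t\mathbf{W}_k$ with $t\in[0,1]$, and move the removed mass to another user $j\in\{T+1,\dots,K\}$ or keep it among these indices. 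This preserves $\mathbf{R}_W$, and hence the objective and the power, while lowering user $k$'s SINR to exactly $\sigma_C^2$. The delicate point is where the removed mass goes. It must stay inside $\mathcal{R}(\mathbf{U}_B)\subseteq\mathcal{R}(\mathbf{H})$ so that $\mathbf{W}_{K+1}$ remains $\theta\mathbf{U}_C\mathbf{U}_C^H$. It must also not break other users' constraints: it can only increase the SINR of the receiving user $j$, and it leaves all users $i\in[T]$ unaffected since $\mathbf{U}_T^H\mathbf{U}_B=\mathbf{0}$. Processing users $T+1,\dots,K-1$ one by one and dumping their excess onto user $K$ (as in \eqref{eq_proof_construct_W}) leaves only one possibly slack constraint, namely user $K$'s. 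That constraint cannot be fixed by this transfer.

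The main obstacle is this last user. Its surplus can only be reduced by sending mass out of $\mathcal{R}(\mathbf{U}_B)$. Sending it into $\mathbf{W}_{K+1}$ violates property (c). Sending it to users in $[T]$ changes their tight constraints. My plan is to exploit that $\mathcal{R}(\mathbf{U}_B)$ is orthogonal to $\mathbf{h}_1,\dots,\mathbf{h}_T$ and that $\mathbf{h}_K$ has a nonzero component there. First try adding a perturbation that moves mass from $\mathbf{W}_K$ to some $\mathbf{W}_i$, $i\in[T]$, along directions in $\mathcal{R}(\mathbf{U}_B)$. This keeps $\tr(\mathbf{Q}_i\mathbf{W}_i)$ unchanged for $i\in[T]$, and therefore keeps their tight constraints exact. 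If $\mathcal{R}(\mathbf{U}_B)=\{\mathbf{0}\}$, then user $K$'s constraint is automatically consistent with its SINR being determined, and we would argue tightness directly from \eqref{eq_kkt_w1_gra_mu}. Either route yields an optimal, structured solution with all SINR constraints tight. This contradicts degeneracy, so degeneracy forces $T=0$, i.e.\ $\mu_k=0$ for all $k$.
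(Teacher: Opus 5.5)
Your final route is exactly the paper's proof: shrink a slack user $k>T$ (with $\mu_k=0$) to $t\mathbf{W}_k$, then hand the excess $(1-t)\mathbf{W}_k$, which satisfies $\mathbf{H}_T^H\mathbf{W}_k=\mathbf{0}$, to some $\mathbf{W}_i$ with $i\in[T]$, so that $\mathbf{R}_W$ and $\tr(\mathbf{Q}_i\mathbf{W}_i)$ are unchanged (the paper does this directly for every slack user, with $i=1$). The detour through user $K$ is unnecessary, and two of your intermediate claims are false: that the surplus must leave $\mathcal{R}(\mathbf{U}_B)$, and that giving it to users in $[T]$ ``changes their tight constraints'' (the mass stays in $\mathcal{R}(\mathbf{U}_B)$ and only changes owner, as your next sentences effectively concede); moreover, the $\mathcal{R}(\mathbf{U}_B)=\{\mathbf{0}\}$ fallback is vacuous, since any user $k>T$ needs $\mathbf{W}_k\neq\mathbf{0}$ inside $\mathcal{R}(\mathbf{U}_B)$ to meet its SINR constraint.
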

 \begin{proof}
 We prove the lemma by contradiction. Consider a solution $( {\mathbf W}, {\boldsymbol \mu}, \omega, {\mathbf \Theta} )$ to the KKT conditions in \eqref{eq_kkt_conditions}, and suppose to the contrary that at least one of the Lagrangian multipliers in $\{ \mu_k \}_{k=1}^K$ is strictly positive. Without loss of generality, assume that $\mu_k>0$ for $k\in [T]$, where $1\leq T\leq K$, and $\mu_j=0$ for $j\in \{T+1,\ldots,K\}$. By the complementary slackness condition in \eqref{eq_kkt_comp}, the SINR constraint for $k\in [T]$ is tight. For each $j\in \{T+1,\ldots,K\}$, if the corresponding SINR constraint is not tight, then we can scale down the beamforming matrix via $\bar{\mathbf{W}}_j=\alpha_j\mathbf{W}_j$ with $\alpha_j\in(0,1) $, such that the $j$-th constraint becomes tight. To ensure that the overall beamforming matrix ${\mathbf R}_W$ is unchanged, we allocate the excess $(1-\alpha_j){\mathbf W}_j$ to the beamforming matrix $\mathbf{W}_1$ by defining $\bar{\mathbf{W}}_1=\mathbf{W}_1+(1-\alpha_j)\mathbf{W}_j\succeq {\mathbf 0}$ . 
 Crucially, as shown in Appendix \ref{app_theorem_low_dimen}, we have $\mathbf{H}_T^H \mathbf{W}_j = \mathbf{0}$ for $j\in \{T+1,\ldots,K\}$. Thus, this adjustment does not affect the tightness of the first $T$ SINR constraints. Repeating this process for all non-tight SINR constraints, we eventually obtain an optimal solution to problem \eqref{eq_extended_mu_isac_problem_sdr} for which all SINR constraints are tight. The resulting solution still possesses the low-dimensional structures described in Theorem \ref{theorem_low_dimen}. This contradicts the assumption that problem \eqref{eq_extended_mu_isac_problem_wc_lds_sdr_brief} is degenerate.
 \end{proof}

According to Lemma \ref{lemma_multiplier_vanish}, the degenerate case occurs only when all optimal Lagrangian multipliers to \eqref{eq_kkt_conditions} vanish. It then follows from \eqref{eq_evd_Rx} (with $T=r=0$) that the optimal solution $\{ {\mathbf W}_k \}_{k=1}^{K+1}$ to problem \eqref{eq_extended_mu_isac_problem_sdr} satisfies $\mathbf{R}_{W}=\frac{P_T}{N_t}\mathbf{I}_{N_t}$. By further exploiting the low-dimensional structures in Theorem \ref{theorem_low_dimen}, we see that an optimal solution $\{ {\mathbf X}_k \}_{k=1}^K$ to problem \eqref{eq_extended_mu_isac_problem_wc_lds_sdr_brief} satisfies
\begin{subequations}\label{eq_degenerate}
\begin{align}
  &\sum_{k=1}^{K}\mathbf{X}_k=\frac{P_T}{N_t}\mathbf{I}_{K},~\mathbf{X}_k\succeq\mathbf{0},\forall k,\label{eq_degenerate1}\\
    &\rho_k\tr\left(\tilde{\mathbf{Q}}_k\mathbf{X}_k\right)\geq \frac{P_T}{N_t}\|\mathbf{h}_k\|^2+\sigma_C^2,\forall k.\label{eq_degenerate2}
    \end{align}
\end{subequations}

 To establish the desired result in Theorem \ref{thm_equality}(a), it suffices to prove that the occurrence of the degenerate case in problem \eqref{eq_extended_mu_isac_problem_wc_lds_sdr_brief}, i.e., at least one of the constraints in \eqref{eq_degenerate2} is not tight, will lead to condition \eqref{eq_inequality_condition}. We prove this by contradiction. Without loss of generality, let us assume that the contrary holds for index $l=1$, i.e.,
 \begin{equation}\label{eq_inequality_condition_con}
     \|\mathbf{h}_1\|^2 \sum_{k=1}^K \frac{P_T\|\mathbf{h}_k\|^2+  \sigma_C^2 N_t}{\rho_k\tr\left(\mathbf{Q}_k\mathbf{Q}_1\right)}\geq P_T. 
    \end{equation}
 Under this assumption, we construct an optimal solution to problem \eqref{eq_extended_mu_isac_problem_wc_lds_sdr_brief} for which all SINR constraints in \eqref{eq_degenerate2} are tight. The construction consists of three steps.

\emph{Step 1:}  We construct a solution that satisfies \eqref{eq_degenerate1}. For this solution, the first SINR constraint will be satisfied (either as equality or not), while all remaining SINR constraints are tight.

Specifically, starting from an optimal solution $\{\mathbf{X}_k\}_{k=1}^K$ to problem \eqref{eq_extended_mu_isac_problem_wc_lds_sdr_brief} that satisfies \eqref{eq_degenerate}, we construct a new solution $\{\check{\mathbf{X}}_k\}_{k=1}^K$ such that 
 \begin{equation}\notag
\rho_1\tr\left(\tilde{\mathbf{Q}}_1\check{\mathbf{X}}_1\right)\geq \frac{P_T}{N_t}\|\mathbf{h}_1\|^2+\sigma_C^2
\end{equation}
and
\begin{equation}\notag
    \rho_k\tr\left(\tilde{\mathbf{Q}}_k\check{\mathbf{X}}_k\right)= \frac{P_T}{N_t}\|\mathbf{h}_k\|^2+\sigma_C^2,~ \forall k\in \{2,3,\ldots,K\}.
\end{equation}
This is achieved by scaling down $\mathbf{X}_k$ so that the inequalities in \eqref{eq_degenerate2} are tight for $k\in \{2,3,\ldots,K\}$ and allocating the excesses to $\mathbf{X}_1$ just as in the proof of Lemma \ref{lemma_multiplier_vanish}. As a result, the equality $\sum_{k=1}^K\check{\mathbf{X}}_k=\sum_{k=1}^K\mathbf{X}_k$ still holds, and thus so does \eqref{eq_degenerate1}.

\emph{Step 2:} We construct a solution $\{ \hat{\mathbf X}_k \}_{k=1}^K$ with the properties that (i) for $k \in \{2,3,\ldots,K\}$, the component of $\hat{\mathbf X}_k$ lying in the range space of $\tilde{\mathbf Q}_1$ is maximized, (ii) the constraints in \eqref{eq_degenerate1} are satisfied, (iii) the SINR constraints in  \eqref{eq_degenerate2} are tight for $k \in \{2,3,\ldots,K\}$, and (iv) the first SINR constraint is violated. The reason for property (iv) will become clear in Step 3.



Specifically, we define $\hat{\mathbf{X}}_k= a_k^* \tilde{\mathbf{Q}}_1+\mathbf{F}_k^*$ for $k \in \{2,3,\ldots,K\}$, where the values $\{a_k^*\}_{k=2}^K$ and matrices $\{\mathbf{F}_k^*\}_{k=2}^K$ are defined as the optimal solution to the following problem:
    \begin{equation}\label{eq_auxi_opt}
\begin{aligned}
 \max_{\{a_k,\mathbf{F}_k\}_{k=2}^K} ~~&\sum_{k=2}^{K} a_k\\
 \text{ s.t. } ~~~~~&\rho_k\tr\left(\tilde{\mathbf{Q}}_k\left(a_k \tilde{\mathbf{Q}}_1+\mathbf{F}_k\right)\right)\\
 &~~~~~~= \frac{P_T}{N_t}\|\mathbf{h}_k\|^2+\sigma_C^2,~ \forall k\in \{2,3,\ldots,K\},\\
 &\sum_{k=2}^K\left(a_k \tilde{\mathbf{Q}}_1+\mathbf{F}_k\right)\preceq \frac{P_T}{N_t}\mathbf{I}_K,\\
 &a_k\geq 0,~\mathbf{F}_k\succeq\mathbf{0},~\forall k\in \{2,3,\ldots,K\},\\
 &\tilde{\mathbf{Q}}_1^H\mathbf{F}_k=\mathbf{0},~ \forall k\in \{2,3,\ldots,K\}.
\end{aligned}
\end{equation}
Here, the last constraint stipulates that ${\mathbf F}_k$ lies in the orthogonal complement of the range space of $\tilde{\mathbf Q}_1$ for $k \in \{2,3,\ldots,K\}$. By applying suitable orthogonal transformations if necessary, we see that the solution $\{ \check{\mathbf X}_k \}_{k=1}^K$ constructed in Step 1 is feasible for problem \eqref{eq_auxi_opt}. We then define $\hat{\mathbf{X}}_1= \frac{P_T}{N_t}\mathbf{I}_K-\sum_{k=2}^K\hat{\mathbf{X}}_k$. We claim that the resulting solution $\{\hat{\mathbf{X}}_k\}_{k=1}^K$ satisfies
\begin{equation}\label{eq_auxi_opt2}
    \rho_1\tr\left(\tilde{\mathbf{Q}}_1\hat{\mathbf{X}}_1\right)< \frac{P_T}{N_t}\|\mathbf{h}_1\|^2+\sigma_C^2
\end{equation}
and
\begin{equation}\label{eq_auxi_opt3}
    \rho_k\tr\left(\tilde{\mathbf{Q}}_k\hat{\mathbf{X}}_k\right)= \frac{P_T}{N_t}\|\mathbf{h}_k\|^2+\sigma_C^2,~ \forall k\in \{2,3,\ldots,K\}
\end{equation}
under assumption \eqref{eq_inequality_condition_con}. The property \eqref{eq_auxi_opt3} follows directly from the constraints in problem \eqref{eq_auxi_opt}. Now, note that the second and fourth constraints in problem \eqref{eq_auxi_opt} give $\|\mathbf{h}_1\|^2\sum_{k=2}^{K} a_k^* \leq P_T/N_t$. To verify \eqref{eq_auxi_opt2}, we consider the following two cases: 
\begin{itemize}
    \item Case I: If $\|\mathbf{h}_1\|^2\sum_{k=2}^{K} a_k^* =P_T/N_t$, then by definition of $\{ \hat{\mathbf X}_k \}_{k=1}^K$, we have $\tr(\tilde{\mathbf{Q}}_1\hat{\mathbf{X}}_1)=0$, which implies that \eqref{eq_auxi_opt2} is satisfied. 
    \item Case II: If $\|\mathbf{h}_1\|^2\sum_{k=2}^{K} a_k^*<P_T/N_t$, then we claim that $\tilde{\mathbf{Q}}_1\tilde{\mathbf{Q}}_k\neq \mathbf{0}$ for $k\in \{2,3,\ldots,K\}$. To prove this, suppose to the contrary that $\tilde{\mathbf{Q}}_1\tilde{\mathbf{Q}}_l= \mathbf{0}$ for some $l\in \{2,3,\ldots,K\}$. Then, we can increase $a_l^*$ until $\|\mathbf{h}_1\|^2\sum_{k=2}^{K} a_k^* =P_T/N_t$ is satisfied while keeping all other variables in problem \eqref{eq_auxi_opt} unchanged. Moreover, all other constraints in problem \eqref{eq_auxi_opt} are still satisfied. This contradicts the optimality of $ a_l^*$. Hence, we must have $\tilde{\mathbf{Q}}_1\tilde{\mathbf{Q}}_k\neq \mathbf{0}$ for $k\in \{2,3,\ldots,K\}$. Next, we claim that $\mathbf{F}_k^*=\mathbf{0}$ for $k\in \{2,3,\ldots,K\}$. Indeed, suppose to the contrary that $\mathbf{F}_l^*\neq \mathbf{0}$ for some $l\in \{2,3,\ldots,K\}$. Then, we can set ${\mathbf F}_l^* = {\mathbf 0}$ and increase $a_l^*$ while preserving the tightness of the SINR constraint of user $l$. Consequently, we have $\hat{\mathbf{X}}_k=a_k^*\tilde{\mathbf{Q}}_k$ for $k\in \{2,3,\ldots,K\}$. Given that \eqref{eq_auxi_opt3} holds and using the fact that $\tr\left(\mathbf{Q}_k\mathbf{Q}_1\right)=\tr(\tilde{\mathbf{Q}}_k\tilde{\mathbf{Q}}_1)$ from the definitions of $\{\tilde{\mathbf{Q}}_k\}_{k=1}^K$, we obtain $a_k^*=\frac{P_T\|\mathbf{h}_k\|^2+  \sigma_C^2 N_t}{\rho_k N_t\tr\left(\mathbf{Q}_k\mathbf{Q}_1\right)}$ for $k\in \{2,3,\ldots,K\}$. Moreover, we have
    \begin{equation}\notag
    \begin{aligned}
        &\rho_1\tr\left(\tilde{\mathbf{Q}}_1\hat{\mathbf{X}}_1\right)\\
        &=\rho_1\frac{ P_T}{N_t}\|\mathbf{h}_1\|^2-\rho_1\tr\left(\tilde{\mathbf{Q}}_1\sum_{k=2}^K\hat{\mathbf{X}}_k\right)\\
        &=\rho_1\frac{ P_T}{N_t}\|\mathbf{h}_1\|^2-\rho_1\|\mathbf{h}_1\|^4\sum_{k=2}^K\frac{P_T\|\mathbf{h}_k\|^2+  \sigma_C^2 N_t}{\rho_k N_t\tr\left(\mathbf{Q}_k\mathbf{Q}_1\right)}\\
        &\leq \frac{P_T}{N_t}\|\mathbf{h}_1\|^2+\sigma_C^2.
        \end{aligned}
    \end{equation}
    Here, the first equality uses the definition of $\hat{\mathbf{X}}_1$; the second equality uses $\hat{\mathbf{X}}_k=a_k^*\tilde{\mathbf{Q}}_1$ for $k\in \{2,3,\ldots,K\}$ and the optimal value of $a_k^*$; the last inequality follows from the assumption \eqref{eq_inequality_condition_con}. Since problem \eqref{eq_extended_mu_isac_problem_wc_lds_sdr_brief} is assumed to be degenerate, the inequality above must be strict. It follows that \eqref{eq_auxi_opt2} holds.
\end{itemize}

\emph{Step 3:} We construct $\{\tilde{\mathbf{X}}_k\}_{k=1}^K$ by taking a convex combination of the solutions constructed in Step 1 and Step 2, i.e., $\tilde{\mathbf{X}}_k= \theta\check{\mathbf{X}}_k+(1-\theta)\hat{\mathbf{X}}_k$ for $k\in [K]$, where $\theta\in [0,1]$ is chosen so that all the SINR constraints in \eqref{eq_degenerate2} hold as equality. Such a choice exists due to the continuity of the trace function. Moreover, the constraints in \eqref{eq_degenerate1} hold for all $\theta\in [0,1]$. This contradicts the assumption that problem \eqref{eq_extended_mu_isac_problem_wc_lds_sdr_brief} is degenerate and completes the proof of Theorem \ref{thm_equality}(a).

\subsection{Proof of Theorem \ref{thm_equality}(b)}
Given $l\in[K]$, define $\mathbf{W}_k=a_k\mathbf{Q}_l$ for $k\in[K]$, where $a_k =  \frac{P_T\|\mathbf{h}_k\|^2+  \sigma_C^2 N_t}{\rho_k N_t\tr\left(\mathbf{Q}_k\mathbf{Q}_l\right)}$.
If condition \eqref{eq_inequality_condition} is satisfied, then we have 
 \begin{equation}\notag
      \sum_{k=1}^K\mathbf{W}_k= \sum_{k=1}^K a_k\mathbf{Q}_l\preceq \frac{P_T}{N_t}\mathbf{I}. 
    \end{equation}
We then define 
\begin{equation}\notag
    \mathbf{W}_{K+1}=\frac{P_T}{N_t}\mathbf{I}-\sum_{k=1}^Ka_k\mathbf{Q}_l\succeq\mathbf{0}.
\end{equation}
Now, set $\omega=(\frac{N_T}{P_T})^2$, $\mu_k=0$ for $k\in [K]$, and $\boldsymbol{\Theta}_k=\mathbf{0}$ for $k\in [K+1]$. We claim that $\{\mathbf{W}_k\}_{k=1}^{K+1}$, $\omega$, $\{\mu_k\}_{k=1}^{K}$, and $\{\boldsymbol{\Theta}_k\}_{k=1}^{K+1}$ satisfy the KKT conditions \eqref{eq_kkt_conditions} associated with problem \eqref{eq_extended_mu_isac_problem_sdr}. It is straightforward to verify all the conditions in \eqref{eq_kkt_conditions} except perhaps \eqref{eq_kkt_w1_fea_W}. To verify \eqref{eq_kkt_w1_fea_W}, we fix $k \in [K]$ and compute
\begin{equation}\notag
\begin{aligned}
    &\rho_k\operatorname{tr}\left(\mathbf{Q}_{k}\mathbf{W}_{k}\right)-\sum_{j=1}^{K+1}\operatorname{tr}\left(\mathbf{Q}_{k}\mathbf{W}_{j}\right)\\
    =\,\,&a_k\rho_k\operatorname{tr}\left(\mathbf{Q}_{k}\mathbf{Q}_l\right)-\frac{P_T}{N_T}\|\mathbf{h}_k\|^2\\
    =\,\,&\sigma^2_C,
\end{aligned}
    \end{equation}
where the first equality uses the definitions of $\{\mathbf{W}_k\}_{k=1}^{K+1}$ and the second follows from the definitions of $\{ a_k \}_{k=1}^K$.
This completes the proof.

\section{Proof of Proposition \ref{pro_subproblem_X}}\label{app_pro_subproblem_X}
The $\mathbf{X}$-subproblem in \eqref{eq_X_subproblem} can be reformulated as 
\begin{equation}\notag
  \begin{aligned}
 \min _{\{\mathbf{X}_k\}_{k=1}^K} ~&\sum_{k=1}^K\|\mathbf{X}_k-\widetilde{\mathbf{X}}_k\|_F^2\\
\text{s.t.} ~ ~~& \sum_{k=1}^K\operatorname{tr}\left(\mathbf{X}_k\right)\leq P_T,~ \mathbf{X}_k\succeq \mathbf{0},\forall k.
    \end{aligned}  
\end{equation}
Utilizing the eigenvalue decomposition $\widetilde{\mathbf{X}}_k=\mathbf{U}_k\boldsymbol{\Sigma}_k\mathbf{U}_k^H$ and exploiting the unitary invariance of the Frobenius norm, we obtain
\begin{equation}\notag
    \|\mathbf{X}_k-\widetilde{\mathbf{X}}_k\|_F^2=\|\mathbf{U}_k^H\mathbf{X}_k\mathbf{U}_k-\boldsymbol{\Sigma}_k\|_F^2.
\end{equation}
It is obvious that the optimal $\mathbf{U}_k^H\mathbf{X}_k\mathbf{U}_k$ must be a diagonal matrix, which implies that the optimal $\mathbf{X}_k$ takes the form $\mathbf{U}_k\boldsymbol{\Lambda}_k\mathbf{U}_k^H$ for some diagonal matrix $\boldsymbol{\Lambda}_k$. Therefore, it suffices to solve the following problem:
\begin{equation}\label{eq_subproblem_XX}
  \begin{aligned}
 \min _{\{\boldsymbol{\Lambda}_k\}_{k=1}^K} ~& \sum_{k=1}^K\sum_{l=1}^K\left( [\boldsymbol{\Lambda}_k]_{l,l}- [\boldsymbol{\Sigma}_k]_{l,l}\right)^2\\
\text{s.t.} ~~~  &\sum_{k=1}^K\sum_{l=1}^K[\boldsymbol{\Lambda}_k]_{l,l}\leq P_T, ~[\boldsymbol{\Lambda}_k]_{l,l}\geq 0,\forall k,l.
    \end{aligned}  
\end{equation} 

\noindent The KKT conditions associated with problem \eqref{eq_subproblem_XX} are given by
\begin{subequations}\label{eq_subproblem_kkt}
\begin{align}
    &2\left([\boldsymbol{\Lambda}_k]_{l,l}- [\boldsymbol{\Sigma}_k]_{l,l}\right)-\mu_{k,l}+\gamma=0,\forall k,l,\label{eq_subproblem_kkt_1}\\
    &\sum_{k=1}^K\sum_{l=1}^K[\boldsymbol{\Lambda}_k]_{l,l}\leq P_T,~[\boldsymbol{\Lambda}_k]_{l,l}\geq 0,\forall k,l,\\
   & \gamma\geq 0,~\mu_{k,l}\geq 0,\forall k,l, \label{eq_subproblem_kkt_3}\\
   &\gamma\left(\sum_{k=1}^K\sum_{l=1}^K[\boldsymbol{\Lambda}_k]_{l,l}- P_T\right)=0, \label{eq_subproblem_kkt_4}\\
   &\mu_{k,l}[\boldsymbol{\Lambda}_k]_{l,l}=0,\forall k,l.
    \end{align} 
\end{subequations}
From \eqref{eq_subproblem_kkt_1} and \eqref{eq_subproblem_kkt_3}, we obtain
\begin{equation}\label{eq_lambda}
    [\boldsymbol{\Lambda}_k]_{l,l}=\operatorname{max}\left\{[\boldsymbol{\Sigma}_k]_{l,l}-\frac{\gamma}{2},0\right\},\forall k,l.
\end{equation}
We now consider two cases based on the value of $\gamma$. First, if $\gamma=0$, then $\boldsymbol{\Lambda}_k=\boldsymbol{\Sigma}_k$ for all $ k\in [K]$. This implies that $\{ {\mathbf \Sigma}_k \}_{k=1}^K$ is an optimal solution to problem \eqref{eq_subproblem_XX}, provided that $\sum_{k=1}^K\sum_{l=1}^K[\boldsymbol{\Sigma}_k]_{l,l}\leq P_T$ holds. 
Second, if $\gamma>0$, then the complementary slackness condition in \eqref{eq_subproblem_kkt_4} implies that $\sum_{k=1}^K\sum_{l=1}^K[\boldsymbol{\Lambda}_k]_{l,l}= P_T$.
This, together with \eqref{eq_lambda} and the fact that $\gamma \mapsto g(\gamma) = \sum_{k=1}^K \sum_{l=1}^K \max \left\{ [ {\mathbf \Sigma}_k ]_{l,l} - \frac{\gamma}{2}, 0 \right\}$ is a piecewise linear, decreasing function on $\gamma \ge 0$, implies that we can determine the value of $\gamma$ by solving $g(\gamma) = P_T$.
Combining these two cases, we see that the optimal multiplier $\gamma$ is the smallest nonnegative value satisfying $g\left(\gamma\right)\leq P_T$. This completes the proof.

\section{Proof of Proposition \ref{pro_subproblem_Y}}\label{app_pro_subproblem_Y}
Similar to the $\mathbf{X}$-subproblem, the optimal solution to the $\mathbf{Y}$-subproblem \eqref{eq_Y_subproblem} takes the form $\mathbf{U}\boldsymbol{\Lambda}\mathbf{U}^H$, where $\boldsymbol{\Lambda}$ is a diagonal matrix obtained by solving the following problem:
\begin{equation}\notag
  \begin{aligned}
 \min _{\boldsymbol{\Lambda}} ~&\sum_{k=1}^K\left(\frac{1}{[\boldsymbol{\Lambda}]_{k,k}}+\frac{1}{2\tau}\left([\boldsymbol{\Lambda}]_{k,k}-[\boldsymbol{\Sigma}]_{k,k}\right)^2\right)\\
\text{s.t.} ~  &[\boldsymbol{\Lambda}]_{k,k}> 0,\forall k.
    \end{aligned}  
\end{equation}
Note that this problem is separable. Thus, it suffices to consider the one-dimensional subproblem
\begin{equation}\notag
    \min _{[\boldsymbol{\Lambda}]_{k,k}} ~\frac{1}{[\boldsymbol{\Lambda}]_{k,k}}+\frac{1}{2\tau}\left([\boldsymbol{\Lambda}]_{k,k}-[\boldsymbol{\Sigma}]_{k,k}\right)^2.
\end{equation}
Setting the derivative of the objective function with respect to $[\boldsymbol{\Lambda}]_{k,k}$ to zero yields the first-order optimality condition
\begin{equation}\notag
   ( [\boldsymbol{\Lambda}]_{k,k})^3-[\boldsymbol{\Sigma}]_{k,k} ([\boldsymbol{\Lambda}]_{k,k})^2-\tau=0.
\end{equation}
Let $h: \mathbb{R} \rightarrow \mathbb{R}$ be the cubic polynomial given by $h(x)= x^3-[\boldsymbol{\Sigma}]_{k,k} x^2-\tau$. We find that $h(0)<0$, $h'(0)=0$, and $h(+\infty)=+\infty$. Therefore, the equation $h(x)=0$ has exactly one positive real root.
The proof is completed. 
\section{Proof of Proposition \ref{pro_subproblem_Z}}\label{app_pro_subproblem_Z}
Similar to the $\mathbf{X}$-subproblem, the optimal solution to the $\mathbf{Z}$-subproblem \eqref{eq_Z_subproblem} is given by $\mathbf{U}\boldsymbol{\Lambda}\mathbf{U}^H$, where $\boldsymbol{\Lambda}$ is a diagonal matrix obtained by solving the following problem:
\begin{equation}\label{eq_ZZ_subproblem}
  \begin{aligned}
 \min _{\boldsymbol{\Lambda}} ~&\frac{\left(N_t-K\right)^2}{P_T-\sum_{k=1}^K[\boldsymbol{\Lambda}]_{k,k}}+\frac{1}{2\tau}\sum_{k=1}^K\left([\boldsymbol{\Lambda}]_{k,k}-[\boldsymbol{\Sigma}]_{k,k}\right)^2\\
\text{s.t.} ~  &[\boldsymbol{\Lambda}]_{k,k}\geq 0,\forall k.
    \end{aligned}  
\end{equation}
The KKT conditions associated with problem \eqref{eq_ZZ_subproblem} give
\begin{equation}\label{eq_lambda11}
    [\boldsymbol{\Lambda}]_{k,k}=\operatorname{max}\left\{[\boldsymbol{\Sigma}]_{k,k}-\lambda,0\right\},\forall k,
\end{equation}
where
\begin{equation}\label{eq_lambda22}
    \lambda=\frac{\tau(N_t-K)^2}{\left(P_T-\sum_{k=1}^K[\boldsymbol{\Lambda}]_{k,k}\right)^2}.
\end{equation} 
Substituting \eqref{eq_lambda11} into \eqref{eq_lambda22} gives 
\begin{equation}\label{eq_lambda33}
    \lambda\left(P_T-\sum_{k=1}^K\operatorname{max}\left\{[\boldsymbol{\Sigma}]_{k,k}-\lambda,0\right\}\right)^2-\tau(N_t-K)^2=0.
\end{equation}
Observe that the left-hand side of \eqref{eq_lambda33} is increasing in $\lambda$ and is negative when $\lambda=0$. It follows that equation \eqref{eq_lambda33} has a positive real root, which can be found by, e.g., a simple bisection search.

\vfill

\end{document}